%% file: main.tex
\documentclass[11pt]{article}

\usepackage[a4paper,margin=1in]{geometry}
\usepackage{graphicx}
\usepackage{style}
\usepackage{subfiles}
\usepackage{subcaption}
\usepackage{xspace}
\usepackage{enumitem}
\usepackage{ifthen}
\usepackage{tikz}
\usepackage{float}
\usepackage[section]{placeins}
\usepackage{hyperref}

\usetikzlibrary{calc}
\usetikzlibrary{fit,backgrounds}
\usetikzlibrary{arrows.meta,positioning,fit,calc,patterns}

\input{tikz-hypergraph}

\input{macros}

\newtheorem{claim}{Claim}

\title{Counting HyperGraphlets via Color Coding:\\
A Quadratic Barrier and How to Break It}

\author{
Marco Bressan \\
University of Milan \\
\texttt{marco.bressan@unimi.it}
\and
Stefano Clemente \\
University of Milan \\
\texttt{stefano.clemente@unimi.it}
\and
Giacomo Fumagalli \\
University of Milan \\
\texttt{giacomo.fumagalli@unimi.it}
}

\date{} 

\begin{document}

\maketitle

\begin{abstract}
We study the problem of counting $k$-\emph{hyper}graphlets, an interesting but surprisingly ignored primitive, with the aim of understanding if efficient algorithms exist.
To this end we consider \emph{color coding}, a well-known technique for approximately counting $k$-graphlets in graphs.
Our first result is that, on hypergraphs, color coding encounters a \emph{quadratic barrier}: under the Orthogonal Vector Conjecture, no implementation of it can run in time sub-quadratic in the size of the input.
We then introduce a simple property, $(\alpha,\beta)$-niceness, that hypergraphs from real-world datasets appear to satisfy for small values of $\alpha$ and $\beta$.
Intuitively, an $(\alpha,\beta)$-nice hypergraph can be split into two sub-hypergraphs having respectively rank at most $\alpha$ and degree at most $\beta$.
By applying different techniques to each sub-hypergraph and carefully combining the outputs, we show how to run color coding in time $2^{O(k)} \cdot \big(2^\beta |V| + \alpha^k |E| + \alpha^2 \beta \size{H}\big)$, where $H=(V,E)$ is the input hypergraph.
Afterwards, we can sample colorful $k$-hypergraphlets uniformly in expected $k^{O(k)} \cdot (\beta^2+\ln |V|)$ time per sample.
Experiments on real-world hypergraphs show that our algorithm neatly outperforms the naive quadratic algorithm, sometimes by more than an order of magnitude.
\end{abstract}

\noindent\textbf{Code availability:} \\
\url{https://github.com/gfumagalli9/hyper-motivo/tree/hyper-motivo}

\bigskip

\section{Introduction} 
\subfile{1_introduction}

\section{Preliminaries}\label{sec:prelims}
\subfile{2_basic_definition_and_properties}

\section{Color Coding and its Quadratic Barrier}\label{sec:cc}
\subfile{3_algorithm}

\section{The $(\alpha,\beta)$-Niceness of a Hypergraph}
\subfile{4_alphabeta_algorithm}

\section{Experiments}\label{sec:experiments}
\subfile{5_Experiments}

\section{Conclusions and future work}
We have shown that counting $k$-hypergraphlets is subtly harder than counting $k$-graphlets, but equally efficient algorithms exist under mild structural assumptions on the input hypergraphs.
Our work is among the very first to investigate this problem formally, and leaves open many questions for future research. Is there an assumption \emph{weaker} than $(\alpha,\beta)$-niceness that is still sufficient for color coding to run in time linear in $H$?
Under the definition of $k$-hypergraphlet given by so-called section hypergraphs (see \Cref{sub:about_induced}), which properties of $H$ allow for efficient counting algorithms? Is $(\alpha,\beta)$-niceness helpful for other problems, too?

\section*{Acknowledgments}
The authors thank the Laboratory for Web Algorithmics\footnote{\url{https://law.di.unimi.it/}} for kindly providing access to computational facilities.
\clearpage

\bibliography{sample}

 \clearpage

\section*{APPENDIX}
\subfile{7_appendix}



\end{document}

%% file: tikz-hypergraph.tex
\def\rotateclockwise#1{
  \newdimen\xrw
  \pgfextractx{\xrw}{#1}
  \newdimen\yrw
  \pgfextracty{\yrw}{#1}
  \pgfpoint{\yrw}{-\xrw}
}

\def\rotatecounterclockwise#1{
  \newdimen\xrcw
  \pgfextractx{\xrcw}{#1}
  \newdimen\yrcw
  \pgfextracty{\yrcw}{#1}
  \pgfpoint{-\yrcw}{\xrcw}
}

\def\outsidespacerpgfclockwise#1#2#3{
  \pgfpointscale{#3}{
    \rotateclockwise{
      \pgfpointnormalised{
        \pgfpointdiff{#1}{#2}}}}
}

\def\outsidespacerpgfcounterclockwise#1#2#3{
  \pgfpointscale{#3}{
    \rotatecounterclockwise{
      \pgfpointnormalised{
        \pgfpointdiff{#1}{#2}}}}
}

\def\outsidepgfclockwise#1#2#3{
  \pgfpointadd{#2}{\outsidespacerpgfclockwise{#1}{#2}{#3}}
}

\def\outsidepgfcounterclockwise#1#2#3{
  \pgfpointadd{#2}{\outsidespacerpgfcounterclockwise{#1}{#2}{#3}}
}

\def\outside#1#2#3{
  ($ (#2) ! #3 ! -90 : (#1) $)
}

\def\cornerpgf#1#2#3#4{
  \pgfextra{
    \pgfmathanglebetweenpoints{#2}{\outsidepgfcounterclockwise{#1}{#2}{#4}}
    \let\anglea\pgfmathresult
    \let\startangle\pgfmathresult

    \pgfmathanglebetweenpoints{#2}{\outsidepgfclockwise{#3}{#2}{#4}}
    \pgfmathparse{\pgfmathresult - \anglea}
    \pgfmathroundto{\pgfmathresult}
    \let\arcangle\pgfmathresult
    \ifthenelse{180=\arcangle \or 180<\arcangle}{
      \pgfmathparse{-360 + \arcangle}}{
      \pgfmathparse{\arcangle}}
    \let\deltaangle\pgfmathresult

    \newdimen\x
    \pgfextractx{\x}{\outsidepgfcounterclockwise{#1}{#2}{#4}}
    \newdimen\y
    \pgfextracty{\y}{\outsidepgfcounterclockwise{#1}{#2}{#4}}
  }
  -- (\x,\y) arc [start angle=\startangle, delta angle=\deltaangle, radius=#4]
}

\def\corner#1#2#3#4{
  \cornerpgf{\pgfpointanchor{#1}{center}}{\pgfpointanchor{#2}{center}}{\pgfpointanchor{#3}{center}}{#4}
}

\def\hedgem#1#2#3#4{
  
  \outside{#1}{#2}{#4}
  \pgfextra{
    \def\hgnodea{#1}
    \def\hgnodeb{#2}
  }
  foreach \c in {#3} {
    \corner{\hgnodea}{\hgnodeb}{\c}{#4}
    \pgfextra{
      \global\let\hgnodea\hgnodeb
      \global\let\hgnodeb\c
    }
  }
  \corner{\hgnodea}{\hgnodeb}{#1}{#4}
  \corner{\hgnodeb}{#1}{#2}{#4}
  -- cycle
}

%% file: macros.tex
\newcommand{\E}{\mathbb{E}}
\newcommand{\ignore}[1]{}

\newcommand{\Ind}[1]{\mathbbm{1}\{#1\}}

\newcommand{\Gaif}{\mathsf{Gaif}\xspace}
\newcommand{\motivo}{\textsc{Motivo}\xspace}

\newcommand{\ccStyle}[1]{\ensuremath{\mathrm{#1}}\xspace}

\newcommand{\ccFPT}{\ccStyle{FPT}}

\newcommand{\ccSharpW}[1]{\#\ccStyle{W}[#1]}

\newcommand{\poly}{\operatorname{poly}}

\newcommand{\rank}{\operatorname{rank}}
\newcommand{\size}[1]{\ensuremath{\lVert{#1}\rVert}}

\newcommand{\N}{\mathbb{N}}
\newcommand{\R}{\mathbb{R}}

\newcommand{\problem}[1]{\textsc{#1}}
\newcommand{\algo}[1]{\textsc{#1}}
\newcommand{\hypermotivo}{\algo{HyperMotivo}\xspace}
\newcommand{\algoNWIE}{\algo{NW-IE}\xspace}
\newcommand{\assumption}[1]{\text{\emph{#1}}}

\newcommand{\bv}{\mathbf{v}}

\newcommand{\probHC}{\problem{HC}\xspace}
\newcommand{\probNW}{\problem{NW}\xspace}
\newcommand{\probNC}{\problem{NC}\xspace}
\newcommand{\probOV}{\problem{OV}\xspace}

\newcommand{\probKC}{\problem{k-Clique}\xspace}
\newcommand{\probKSH}{\problem{k-SH}\xspace}
\newcommand{\probCRT}{\problem{CRT}\xspace}

\newcommand{\nwNaive}{\algo{NW-Naive}\xspace}

\newcommand{\weaksub}[2]{{#1}[{#2}]}

\newcommand{\subcount}{\ensuremath{\#\!\operatorname{ind}}}

\newcommand{\Gaifman}{\mathsf{Gaif}}

\newcommand{\scV}{\mathcal{V}}

\newcommand{\Giacomo}[1]{#1}
\newcommand{\New}[1]{#1}

\newcommand{\SampleNeigh}{\algo{SampleNeigh}\xspace}
\newcommand{\Sample}{\algo{Sample}\xspace}
\newcommand{\C}{\mathcal{C}}

\newcommand{\algoNWNaive}{\algo{NW-Naive}}

\newcommand{\CC}{C\nolinebreak\hspace{-.05em}\raisebox{.4ex}{\tiny\bf +}\nolinebreak\hspace{-.10em}\raisebox{.4ex}{\tiny\bf +}}
\def\CC{{C\nolinebreak[4]\hspace{-.05em}\raisebox{.4ex}{\tiny\bf ++}}}

\newcommand{\err}{\operatorname{err}}

%% file: 1_introduction.tex
\ignore{
For a graph $G$ and an integer $k \ge 2$, a $k$-graphlet of $G$ is any induced connected subgraphs of $G$ on $k$ vertices.
The graphlet counting problem asks, given $G$ and $k$, to count the number of $k$-graphlets of $G$ by their isomorphism type (i.e., the number of induced $k$-stars, the number of induced $k$-paths, and so on).
The last two decades have produced a stream of results for graphlet counting, including algorithms based on random walks~\cite{Jha&2015,Han&2016,Chen&2016,Paramonov2019} and tight analyses of their mixing times~\cite{Agostini19mixing,Gionis&2020mixing,bressan2021efficient}, algorithms based on color coding~\cite{Bressan21TKDD}, and algorithms for streaming and distributed settings~\cite{Schilleretal15,Lai&2015,DeStefani&2017,Eden17triangles,Zhang20distributed,Bourreau2024}.
Many of those algorithms allow for optimized implementations which can deliver impressive practical performance.
In practice, $k$-graphlets can be counted in graphs with billions of edges even for $k=6,7,8$, in a matter of hours, and with very good accuracy guarantees---see for example~\cite{Bressan21TKDD}.
}

\New{
Hypergraphs, the higher-order generalization of graphs, are emerging as expressive and powerful models to represent complex interactions.
Recent work suggests that many real-world phenomena that have been traditionally analysed using graphs should rather be analysed using hypergraphs, in particular by looking at the frequency of small induced substructures (called \emph{hypergraphlets} or \emph{hypermotifs}).
For example, in collaboration networks, the classic phenomenon of triadic closure actually extends to its higher-order cousin, \emph{simplicial closure}, the tendency of a group to spawn sub-groups as well~\cite{patania2017shape,benson_simplicial_2018}.
Similarly,~\cite{Lotito2022} shows that the frequency vector of hypergraph motifs is correlated with the domain of the hypergraph, something that was previously known of graph motifs.
Other examples where hypergraph motif statistics carry fundamental information come from biology \cite{gaudeletdognin}, chemistry \cite{Andersen2017ChemicalTM}, social networks \cite{juul_pattern}, and machine learning \cite{antelmi}.
Clearly, in order to make all of this actionable, we need good \emph{algorithms} to compute hypergraph motif statistics.
Unfortunately, this is still an unsolved challenge.
According to \cite{gaudeletdognin}, already a decade ago we lacked ``sophisticated algorithms for mining'' hypergraph motifs, and the picture seems to not have changed since then.
The goal of our work is to make a first step towards the development of such algorithms.
}
\ignore{
For these reasons, researchers are investigating the problem of counting sub-hypergraphs, \Giacomo{with applications in biology \cite{gaudeletdognin}, chemistry \cite{Andersen2017ChemicalTM}, social networks \cite{juul_pattern}, machine learning \cite{antelmi}.}
For example, the problem of counting how many subgroups of $10$ people that are part of a larger community in a social network can be cast as a sub-hypergraph counting problem.
Unfortunately, the landscape for sub-hypergraph counting is, to say the least, meager, especially if compared with subgraph counting.
}
%
%

To discuss the matter, let us define the problem more formally.
A hypergraph is a pair $H=(V,E)$ where $V$ is a finite set and $E \subseteq 2^V \setminus \{\emptyset\}$.
The sub-hypergraph of $H=(V,E)$ induced by $U \subseteq V$, denoted by $H[U]$, is the hypergraph with vertex set $U$ and edge set $\{e \cap U : e \in E\} \setminus \{\emptyset\}$.\footnote{There are other notions of induced sub-hypergraphs, but the corresponding problem is computationally even harder, see below.}
A $k$-hypergraphlet of $H$ is an induced connected sub-hypergraph $H[U]$ of $H$ with $|U|=k$.
The \emph{Hypergraphlet Counting} problem (HC for short) asks, given a hypergraph $H$ and an integer $k \ge 2$, to compute\footnote{We actually allow for approximate counts (exact counts are computationally hard in a strong sense), but we avoid being too precise for now.} the number $t_i$ of $k$-hypergraphlets of $H$ that are isomorphic to $H_i$, where $H_i$ for $i=1,2,\ldots$ ranges over all isomorphism-distinct $k$-vertex hypergraphs.

\New{
In the special case of \emph{graphs}, HC boils down to Graphlet Counting (GC), the problem of counting connected $k$-vertex subgraphs.
For GC, the state-of-the-art algorithms are built on the celebrated color-coding technique of Alon, Yuster and Zwick~\cite{AlonYZ95}.
This technique yields sampling and approximate graphlet counting algorithms that for every fixed $k$ run in time linear in the input.
More precisely, there is a preprocessing phase that takes time $O(\size{H})$ where $\size{H}=|V|+\sum_{e \in E}|e|$, followed by random sampling in $O(\lg |V|)$-time per sample.
Besides being elegant and clean, this approach yields formal statistical guarantees through concentration bounds.
An optimized implementation, \motivo, can easily manage graphs with billions of edges~\cite{Bressan21TKDD}.
Quite surprisingly, the picture for HC is completely different: we have basically no efficient algorithm, except for the brute-force ones that take time $|V|^{\Theta(k)}$~\cite{lotito_exact_2024}.
Our goal is to explore this gap, and answer the following question:
\begin{center}
\textit{Can we devise algorithms for HC as good as those for GC?\newline If yes, how? If not, why?}
\end{center}
}
\ignore{
, including algorithms based on random walks~\cite{Jha&2015,Han&2016,Chen&2016,Paramonov2019} and tight analyses of their mixing times~\cite{Agostini19mixing,Gionis&2020mixing,bressan2021efficient}, algorithms based on color coding~\cite{Bressan21TKDD}, and algorithms for streaming and distributed settings~\cite{Schilleretal15,Lai&2015,DeStefani&2017,Eden17triangles,Zhang20distributed,Bourreau2024}.
Many of those algorithms allow for optimized implementations which can deliver impressive practical performance.
In practice, $k$-graphlets can be counted in graphs with billions of edges even for $k=6,7,8$, in a matter of hours, and with very good accuracy guarantees---see for example~\cite{Bressan21TKDD}.
}
%
Given the success of color coding for counting graphlets, it seems compelling to try adapting it to hypergraphlets.
The \emph{Gaifman} graph or \emph{primal} graph $G=\Gaifman(H)$ of $H$ is the graph where $u,v$ are adjacent if they share a hyperedge of $H$.
It is easy to see that $H[U]$ is a $k$-hypergraphlet if and only if $G[U]$ is a $k$-graphlet.
This observation can be exploited as follows.
First, compute $G=\Gaifman(H)$; we call this step \emph{projection}.\footnote{In the literature, this is also called \emph{Clique Expansion (CE).}}
Then, use an $O(\size{G})$-time color-coding algorithm on $G$ to sample random $k$-graphlets $G[U]$; each graphlet can then be converted into its corresponding $k$-hypergraphlet $H[U]$ in $H$.
The same sample weighting techniques of~\motivo then yield unbiased hypergraphlet count estimates with concentration guarantees.
This looks great, but there is a big catch: the size of $G$ can be \emph{quadratic} in $\size{H}$.
Indeed, every edge $e \in E(H)$ implies at least ${|e| \choose 2}$ edges of $G$.
Thus, while for graphs this approach yields an $O(\size{H})$-time algorithm, for hypergraphs it only yields a much worse $O(\size{H}^2)$-time algorithm.
Needless to say, this is prohibitive for large hypergraphs.
Is this unavoidable, or can we do better?


\subsection*{Our contributions}

\noindent\textbf{1) A quadratic barrier.}
We show that, when the input is a hypergraph, the $\Omega(\size{H}^2)$ barrier above is unavoidable for the color-coding approach.
More precisely we prove that, under the well-known Orthogonal Vectors Conjecture,\footnote{To be precise we use the Moderate-Dimension Orthogonal Vector Conjecture, which says that there is no $O(n^{2-\epsilon} \poly(d))$-time algorithm for the following problem: given $n$ vectors from $\{0,1\}^d$, decides if there are two of those vectors that are orthogonal.} the output of color-coding's dynamic program cannot be computed in time $O(n^{2-\epsilon})$ for any $\epsilon>0$.
Thus, without further assumptions, color coding cannot yield linear-time algorithms for HC, as it does for GC.
We also show that, under another classic definition of induced sub-hypergraph, even deciding if $H$ contains \emph{at least one} $k$-hypergraphlet is $\ccSharpW{1}$-hard and unlikely to be solvable time $|V|^{o(\sqrt[3]{k})}$.
Thus, our definition is arguably the only one that possibly allows for efficient algorithms.

\smallskip

\noindent\textbf{2) A new algorithm.} We introduce a new hypergraph property,  \emph{$(\alpha,\beta)$-niceness}. A hypergraph $H=(V,E)$ is $(\alpha,\beta)$-nice if $E$ admits a partition $E_{\le \alpha}, E_{>\alpha}$ such that $(V,E_{\le\alpha})$ has rank at most $\alpha$ and $(V,E_{>\alpha})$ has degree at most $\beta$.\footnote{The degree is the maximum  number of edges incident with a vertex, and the rank is the maximum size of an edge. See~\Cref{sec:prelims}.}
All the hypergraphs used in our experiments appear to be $(\alpha,\beta)$-nice for small values of $\alpha,\beta$.
We then give a two-phase algorithm, \hypermotivo, that preprocesses $H$ in time
    \[
    2^{O(k)} \cdot \left(2^\beta\, |V| + \alpha^k\, |E| + \alpha^2 \beta\, \size{H}\right)
    \]
where, again, $\size{H}=|V|+\sum_{e \in E}|e|$.
Afterwards, \hypermotivo can draw samples in expected time $k^{O(k)} \cdot (\beta^2 + \log |V|)$ per sample.
All the statistical guarantees provided by \motivo's color-coding approach are retained (e.g., the concentration of estimates).
Each phase of \hypermotivo is based on a careful combination of several ingredients, each of which exploits the definition of $(\alpha,\beta)$-niceness in a specific way in order to achieve the bounds above.
\smallskip

\noindent\textbf{3) Experiments.} We provide an efficient, multi-threaded C++ implementation of \hypermotivo, and we test it on six hypergraphs obtained from real-world data (plus a synthethic one).
In particular, we compare \hypermotivo to the naive approach above that uses \motivo on $\Gaifman(H)$.
We observe an improvement of several times in terms of time and memory usage, achieving a $30 \times$ on some hypergraphs.
Thanks to it, we are able to provide estimates of the frequency distribution of the most frequent $k$-hypergraphlets, for $k=3,4,5$, on all our hypergraphs.

\begin{figure}
    \centering
\begin{tikzpicture}[framed,
    scale=.7,
    vertex/.style={circle,draw,thick,fill=white,minimum size=8mm},
    small edge/.style={thick},
    big edge/.style={ellipse,thick,fill opacity=0.15}
]
\node (a) at (-1.5,0) {$a$};
\node (b) at (-1.1,-.8) {$b$};
\node (c) at (-.8,.7) {$c$};
\node (d) at (-.4,-.1) {$d$};
\node (e) at (0,-.9) {$e$};
\node (f) at (.3,1) {$f$};
\node (g) at (.6,.1) {$g$};
\node (h) at (.9,-.8) {$h$};
\draw[small edge] (a) -- (b);
\draw[small edge] (b) -- (e);
\draw[thick, inner sep=5pt, tension=.8] plot [smooth cycle] coordinates {($(a)+(-.2,0)$) ($(b)+(-.1,-.2)$) ($(e)+(0,-.2)$) ($(g)+(.15,.15)$) ($(c)+(0,.2)$)};
\draw[thick, inner sep=5pt, tension=.9] plot [smooth cycle] coordinates {($(d)+(-.2,-.2)$) ($(g)+(.2,0)$) ($(f)+(0,.25)$)};
\end{tikzpicture}
\hspace*{3em}
\begin{tikzpicture}[framed,
    scale=.7,
    vertex/.style={circle,draw,thick,fill=white,minimum size=8mm},
    small edge/.style={thick},
    big edge/.style={ellipse,thick,fill opacity=0.15}
]
\node (a) at (-1.5,0) {$a$};
\node (b) at (-1.1,-.8) {$b$};
\node (c) at (-.8,.7) {$c$};
\node (d) at (-.4,-.1) {$d$};
\node (e) at (0,-.9) {$e$};
\node (f) at (.3,1) {$f$};
\node (g) at (.6,.1) {$g$};
\node (h) at (.9,-.8) {$h$};
\draw[small edge] (a) -- (b);
\draw[small edge] (b) -- (e);
\draw[thick, inner sep=5pt, tension=.8, draw opacity=0] plot [smooth cycle] coordinates {($(a)+(-.2,0)$) ($(b)+(-.1,-.2)$) ($(e)+(0,-.2)$) ($(g)+(.15,.15)$) ($(c)+(0,.2)$)};
\draw[thick, inner sep=5pt, tension=.9] plot [smooth cycle] coordinates {($(d)+(-.2,-.2)$) ($(g)+(.2,0)$) ($(f)+(0,.25)$)};
\end{tikzpicture}
\hspace*{0em}
\begin{tikzpicture}[framed,
    scale=.7,
    vertex/.style={circle,draw,thick,fill=white,minimum size=8mm},
    small edge/.style={thick},
    big edge/.style={ellipse,thick,fill opacity=0.15}
]
\node (a) at (-1.5,0) {$a$};
\node (b) at (-1.1,-.8) {$b$};
\node (c) at (-.8,.7) {$c$};
\node (d) at (-.4,-.1) {$d$};
\node (e) at (0,-.9) {$e$};
\node (f) at (.3,1) {$f$};
\node (g) at (.6,.1) {$g$};
\node (h) at (.9,-.8) {$h$};
\draw[thick, inner sep=5pt, tension=.8] plot [smooth cycle] coordinates {($(a)+(-.2,0)$) ($(b)+(-.1,-.2)$) ($(e)+(0,-.2)$) ($(g)+(.15,.15)$) ($(c)+(0,.2)$)};
\draw[thick, inner sep=5pt, tension=.9,draw opacity=0] plot [smooth cycle] coordinates {($(d)+(-.2,-.2)$) ($(g)+(.2,0)$) ($(f)+(0,.25)$)};
\end{tikzpicture}
    \caption{A toy hypergraph $H$ (left) and its $\alpha$-split into $H_{\le \alpha}$ (right, first) and $H_{>\alpha}$ (right, second) for $\alpha=4$.
    Our algorithm employs different techniques on the two sub-hypergraphs and combines the results carefully, both in the preprocessing and sampling phase.}
    \label{fig:toy}
\end{figure}
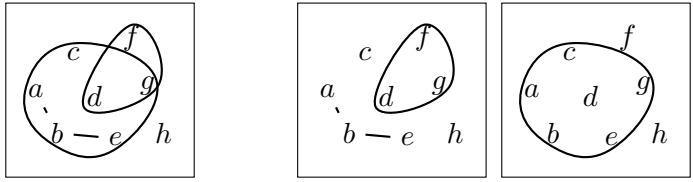

%% file: 2_basic_definition_and_properties.tex

A hypergraph is a pair $H=(V, E)$ where $V$ is a finite set and $E \subseteq 2^V \setminus \{\emptyset\}$. The elements of $V$ are called vertices and the elements of $E$ are called edges.
%
%
%
The \emph{rank} of $H$ is $r(H) = \max_{e \in E} |e|$.
The \emph{type} of a vertex $v \in V$ is the set of edges incident with it, $E(v) = \{ e \in E : v \in e\}$, and its \emph{degree} is $d(v)=|E(v)|$.
We let $\Delta(H)=\max_{v \in V}d(v)$.
Two vertices $u,v \in V$ are \emph{adjacent} if $E(u) \cap E(v) \ne \emptyset$, that is, if they share an edge; we write $u \sim v$.
The \emph{neighborhood} of a vertex $v$ is $N(v) = \{u \in V : u \sim v\}$.
A \emph{path} in $H$ is a sequence $x_1 e_1 x_2 e_2 \dots x_s e_s x_{s+1}$ such that $x_1, \dots, x_{s+1} \in V$ are distinct, $e_1,\ldots,e_s \in E$, and $x_i,x_{i+1} \in e_i$ for all $i=1,\ldots,s$.
The \emph{length} of the path is $s$, and we say that the path is a \emph{$u$-$v$ path}, or \emph{a path from $u$ to $v$}, where $u=x_1$ and $v=x_{s+1}$; we also say that $u,v$ are \emph{connected}.
We say $H$ is connected if all $u,v \in V$ are connected.
The \emph{Gaifman graph} (or \emph{primal graph}) of a hypergraph $H$, denoted by $\Gaif(H)$, is the simple undirected graph with $V(G)=V(H)$ and $E(G) = \{ \{u, v\} \subseteq V \mid \exists e \in E, \{u, v\} \subseteq e \}$.
In words, $u,v \in V(H)$ are adjacent in $\Gaif(H)$ if and only if they share some edge in $H$.

Let $H=(V,E)$ and $H'=(V',E')$ be hypergraphs.
We say $H'$ is a \emph{subhypergraph} of $H$ if $V' \subseteq V$ and $e' \in E$ for every $e' \in E'$.
We shall now present the notion of induced sub-hypergraph considered in this work.

\begin{definition}[Induced sub-hypergraph]\label{def:subhyp} Let $H = (V, E)$ be a hypergraph and $U\subseteq V$.
The sub-hypergraph induced by $U$ in $H$, denoted $\weaksub{H}{U}$, is the hypergraph with vertex set $V(\weaksub{H}{U}) = U$ and edge set $E(\weaksub{H}{U})=\{ e \cap U : e \in E, e \cap U \ne \emptyset \}$.
\end{definition}
%
A \emph{k-hypergraphlet} of $H$ is any induced connected sub-hypergraph of $H$ on $k$ vertices.
%
Let:
\begin{align}
    \scV_k(H) = \{U \subseteq V \,:\, \weaksub{H}{U} \text{ is connected}\}.
\end{align}
The set of $k$-hypergraphlets of $H$ is:
\begin{align}
    \mathcal{G}_k(H) = \{\weaksub{H}{U} \,:\, U \in \scV_k(H) \}.
\end{align}
For every connected hypergraph $H_1,H_2,\ldots$ on $k$-vertices let:
\begin{align}
    \subcount(H,H_i) \triangleq \left|\{U \in \scV_k(H) \,:\, \weaksub{H}{U} \simeq H_i\}\right|.
\end{align}
where $\simeq$ denotes isomorphism.
The main problem we consider in this work is:
%
\begin{definition}\label{def:HC}
    The \emph{Hypergraphlet Counting} problem (HC) asks, given a hypergraph $H$ and an integer $k$, to compute $\subcount(H,H_i)$ for every connected hypergraph $H_1, H_2, \ldots$ on $k$ vertices.
\end{definition}
It is well-known that the exact version of HC is computationally hard.
For instance, it subsumes the problem of counting $k$-cliques, which under standard complexity-theoretical assumptions is not solvable in polynomial time and, in fact, not even in time $|V|^{o(k)}$~\cite{Chenetal05}.
Thus, we consider a relaxed, approximate version of \Cref{def:HC}, where one aims at estimates of the counts $\subcount(H,H_i)$ accurate within an additive $\pm \epsilon \cdot \left(\sum_i\subcount(H,H_i)\right)$.
This means that hypergraphlets with a high relative frequency are well approximated.
This is the kind of guarantees given by state-of-the art algorithms for graphlet counting, such as \motivo~\cite{Bressan21TKDD}.

The description of our algorithm borrows notation and concepts from~\cite{Bressan21TKDD}.
A $k$-\emph{treelet} is a tree on $k$ vertices (we often talk about treelet of a graph $G$ to mean a subgraph of $G$ that is a tree on $k$ vertices).
We let $[k]=\{1,\ldots,k\}$.
A $k$-\emph{coloring} of a (hyper)graph with vertex set $V$ is a mapping $c : V \to [k]$.
A subset $U \subseteq V$ is \emph{colored} with/by $S \subseteq [k]$ if $\{c(u):u \in U\}=S$, and is \emph{colorful} if it additionally satisfies $|U|=|S|$.

We assume the RAM model with words of size $\Theta(\log n + k)$.
To ease our analysis, we also assume that indexing arrays/tables whose keys are tuples of $h\le k$ vertices of $H$, or treelets on $h \le k$ vertices, or subsets of $[k]$, takes constant time (in particular the counters $\C(\cdot)$, see below).
Note that this does not mean we treat $k$ as a constant; in particular, our bounds state explicitly any exponential dependence on $k$.

\subsection{On the definition of hypergraphlet}\label{sub:about_induced}
One may argue that the right definition of induced sub-hypergraph is the one where each edge is either taken integrally or not taken at all, and not truncated as per \Cref{def:subhyp}.
Such sub-hypergraphs are often referred to as \textit{section hypergraphs}:
\begin{definition}\label{def:SectionHypergraph}
    Let $H=(V,E)$ be a hypergraph and $U \subseteq V$. The \emph{section hypergraph} induced by $U$ in $H$, denoted by $H\langle U \rangle$, is the hypergraph with $V(H \langle U \rangle) = U$ and $E(H\langle U\rangle) = \{e \in E : e \subseteq U\}$.
\end{definition}
Counting section sub-hypergraphs is a natural problem, and has been studied by~\cite{lotito_exact_2024}.
Unfortunately, we show that counting section sub-hypergraphs is hard in a very strong sense.
Let \probKSH denote the following problem: given a hypergraph $H = (V, E)$ and an integer $k \ge 2$, decide if $H$ contains at least one $k$-vertex section sub-hypergraph that is connected---that is, $U \subseteq V$ with $|U|=k$ such that $H\langle U \rangle$ is connected.
For graphs, this is equivalent to deciding whether there is a connected component on at least $k$ vertices, and is thus in linear time.
Somewhat surprisingly, we show:
\ignore{
\begin{definition}\label{def:kSectionHypergraphProblem}
The \emph{k-Section Hypergraph} problem (\probKSH) asks, given a hypergraph $H = (V, E)$ and an integer $k \ge 2$, to decide if there exists $U \subseteq V$ such that $|U|=k$ and $H[U]$ is connected.
\end{definition}}
\begin{theorem}\label{thm:KSH-Hard}
    Under the Exponential Time Hypothesis, \probKSH cannot be solved in time $n^{o(\sqrt[3]{k})}$.
\end{theorem}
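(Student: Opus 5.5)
The plan is to reduce from \probKC. Under ETH, deciding whether an $n$-vertex graph contains a clique on $t$ vertices cannot be done in time $n^{o(t)}$~\cite{Chenetal05}. I will build, in polynomial time, a hypergraph $H$ with $\poly(n)$ vertices and a parameter $k=\Theta(t^3)$ such that $H$ has a connected $k$-vertex section sub-hypergraph if and only if $G$ has a $t$-clique. An $N^{o(\sqrt[3]{k})}$ algorithm for \probKSH, with $N=\poly(n)$, would then decide $t$-clique in time $n^{o(t)}$, contradicting ETH. The reduction should make vertices of $G$ expensive and edges of $G$ cheap, so that the vertex count of the selected subgraph is fixed first and its edge count second.

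\textbf{Construction.} Let $G=(V_G,E_G)$ and $t$ be given, and set $q=\binom{t}{2}+1$.
\begin{itemize}[leftmargin=*]
\item For each $v\in V_G$, create a \emph{blob} $Q_v$ of $q$ fresh vertices.
\item For each edge $uv\in E_G$, create one fresh vertex $p_{uv}$ and the hyperedge $e_{uv}=Q_u\cup Q_v\cup\{p_{uv}\}$.
\item Set $k=tq+\binom{t}{2}=\Theta(t^3)$.
\end{itemize}
The hypergraph has $O(nq+|E_G|)=\poly(n)$ vertices.

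\textbf{Completeness.} If $K$ is a $t$-clique, take $U=\bigcup_{v\in K}Q_v\cup\{p_{uv}: u,v\in K\}$. Then $|U|=k$. Every $e_{uv}$ with $u,v\in K$ lies inside $U$, and these hyperedges connect all of $U$ because $K$ is connected. So $H\langle U\rangle$ is connected.

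\textbf{Soundness.} Let $|U|=k\ge 2$ with $H\langle U\rangle$ connected. Every vertex of $U$ must then lie in some hyperedge $e_{uv}\subseteq U$, since an isolated vertex would break connectivity. Hence $U$ is exactly the union of the hyperedges it contains. Let $F\subseteq E_G$ be the set of edges whose hyperedges lie in $U$, let $S$ be the set of endpoints of $F$, and put $s=|S|$ and $m=|F|$.
\begin{itemize}[leftmargin=*]
\item Blobs are disjoint and each $p_{uv}$ is private, so $|U|=sq+m$.
\item Connectivity of $H\langle U\rangle$ implies that $(S,F)$ is connected in $G$.
\item Since $m\le\binom{s}{2}$, we have $0\le m<q$ whenever $s\le t$.
\end{itemize}
Now compare $sq+m$ with $k=tq+\binom{t}{2}$, where $\binom{t}{2}<q$.
\begin{itemize}[leftmargin=*]
\item If $s\ge t+1$, then $|U|\ge (t+1)q>k$, which is impossible.
\item If $s\le t-1$, then $|U|\le (t-1)q+\binom{t-1}{2}<tq\le k$, which is also impossible.
\end{itemize}
Hence $s=t$ and $m=\binom{t}{2}$, so $S$ spans all $\binom{t}{2}$ pairs, i.e., $S$ is a $t$-clique in $G$.

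\textbf{Main obstacle.} The delicate step is calibrating the sizes so that the exponent comes out as $\sqrt[3]{k}$ and not worse. A first attempt with the weight on edges, using private sets $P_{uv}$ of size $p$, fails unless $p=\Omega(t^2)$. With smaller $p$, a sparser but larger connected subgraph, for instance a tree, can trade fewer edges for more vertices and hit the same total, and avoiding this pushes $k$ up to $\Theta(t^4)$. Putting the weight on vertex blobs of size about $t^2$ and giving each edge weight $1$ separates the two scales cleanly. This yields $k=\Theta(t^3)$ and therefore the claimed $n^{o(\sqrt[3]{k})}$ bound.
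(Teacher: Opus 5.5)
Your proof is correct and is essentially the paper's reduction: blocks $Q_v$ of size about $\binom{t}{2}$, one private vertex per edge of $G$, hyperedges $Q_u\cup Q_v\cup\{p_{uv}\}$, and a counting argument on $|U|$ that forces exactly $t$ blocks and $\binom{t}{2}$ edge vertices. The only difference is that you take blocks of size $\binom{t}{2}+1$, which makes the case analysis a clean base-$q$ comparison, whereas the paper uses blocks of size exactly $\binom{t}{2}$ (so $k'=(t+1)\binom{t}{2}$) and rules out the other block counts by explicit inequalities.
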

\noindent Thus, if one defines $k$-hypergraphlets via section sub-hypergraphs, then there is no hope for efficient counting algorithms, even approximate.
Our definition of hypergraphlet is therefore, arguably, the most natural one that is algorithmically useful.
The proof of \Cref{thm:KSH-Hard} can be found in the appendix.

\ignore{
\Giacomo{Aggiunta riduzione strong, probabilmente da cambiare posizione. Troppo superficiale? Troppo approfondita?}

An alternative, widely used notion of induced sub-hypergraph is the \textit{section hypergraph}. 
\begin{definition}\label{def:SectionHypergraph}
    Let $H=(V,E)$ be a hypergraph and $U \subseteq V$. The section hypergraph induced by $U$ in $H$, denoted by $H[U]$, is the hypergraph with $V(H[U]) = U$ and $E(H[U]) = \{e \in E : e \subseteq U\}$.
\end{definition}
A natural question is whether \probHC is tractable under this alternative notion of induction. Unfortunately, we show that even the decision version of \probHC on section hypergraphs is $W[1]$-hard w.r.t to the natural parameter $k$. 

This result is proved through a reduction from \probKC, the problem of deciding if a graph $G$ contains a $k$-clique, parameterized by $k$.
\ignore{
\begin{definition}\label{def:kCliqueProblem}
    Let $G=(V,E)$ be a graph and $k \in \mathbb{N}$ an integer, \emph{k-Clique} (\probKC) asks whether there exists a subgraph $G'\subseteq G$ of dimension $k$ which is a clique.
\end{definition}
}
Furthermore, we formally define the decision version of \probHC on section hypergraphs as follows:
\begin{definition}\label{def:kSectionHypergraphProblem}
    The \emph{k-Section Hypergraph} problem (\probKSH) asks, given a hypergraph $H = (V, E)$ and an integer $k \ge 2$, to decide if there exists $U \subseteq V$ such that $|U|=k$ and $H[U]$ is connected.
\end{definition}
We show an FPT-reduction from \probKC to \probKSH.
\begin{lemma}\label{lemma:SectionHypergraphReduction}
    There exists a FPT-reduction $\probKC \leq_{\text{FPT}} \probKSH$ running in time $\mathcal{O}(|E(G)|k^2)$, where $G$ is the input graph of \probKC and $k$ its parameter.
\end{lemma}
\begin{proof}
    See the appendix.
\end{proof}
The following theorem is an immediate consequence of the above lemma.
\begin{theorem}\label{thm:KSH-Hard}
    \probKSH is W[1]-hard.
\end{theorem}
Finaly, using the well-known lower bound $n^{o(k)}$ of \probKC under ETH (REF), we can also conclude that

Clearly, \Cref{thm:KSH-Hard} implies that even the counting and approximate versions of \probKSH (i.e-\probHC on section hypergraphs) are intractable. This, together with the discussion in the introduction supports our choice of $H\langle U\rangle$ as the induced notion under which efficient hypergraphlet counting may still be attainable.
}

\section{Related Work}
As said, we build on the color-coding technique~\cite{AlonYZ95} and the related algorithm of~\cite{Bressan&2019VLDB}.
However, our techniques are substantially novel, although simple.
Only a handful of other works seem related to our problem.
\cite{Lee&2020-hypergraph-motifs} studies the problem of counting, for each possible Venn diagram on $3$ sets, the number of hyperedges triplets in $H$ that yield that Venn diagram (where every region of the diagram tells whether the corresponding intersection is empty or not). Thus, they do not count induced sub-hypergraphs, and not on a given number $k$ of vertices. Moreover, their algorithm incurs an $\Omega(|E|^2)$ lower bound, as it precomputes a graph where two hyperedges of $H$ are adjacent when they intersect.
\cite{lotito_exact_2024} gives algorithms for counting induced \emph{section} sub-hypergraphs on $k$ vertices, but only for $k=3$ (which they do exactly) or $k=4$ (approximately, via sampling).
The basic idea is to first enumerate all hyperedges of size $3$, computing the sub-hypergraph they induce. One then deletes all hyperedges of size at least $3$, remaining with a graph, and counts exactly all its $3$-graphlets through the ESU exhaustive enumeration algorithm~\cite{wernicke_efficient_2006}. Clearly, this requires time $\Omega(|V|^3)$ in the worst case.
A few other results are known about the complexity of counting sub-hypergraphs exactly, but again under the notion of section sub-hypergraph, and with the goal of having a polynomial dependence on $H$, rather than a linear one~\cite{bressan2025complexitycountingsmallsubhypergraphs,Marx13}.


%% file: 3_algorithm.tex
\label{sec:quad_lim}
This section describes the color-coding approach and its inherent quadratic-time barrier.
Roughly speaking, we show that the color-coding technique requires time $\Omega(|V| + \sum_{e \in E}|e|^2)$ on hypergraphs, unless a popular conjecture from computational complexity fails.
The description of the color-coding technique is taken from~\cite{Bressan21TKDD}, and we often use ``\motivo'' and ``color-coding'' interchangeably.

Let $G=(V,E)$ be a graph, and suppose we want to sample connected $k$-vertex subgraphs of $G$ uniformly at random.
The basic idea of color coding is to make the problem easier by coloring the vertices randomly.
The technique has a \emph{build-up phase} and a \emph{sampling phase}, as follows.

\paragraph{The build-up phase}
First, every vertex $v \in V$ is independently assigned a random uniform color $c_v \in \{1,\ldots,k\}$.
Next, for every $h = 1,\ldots,k$, every rooted treelet $T$ on $h$ vertices, and every subset $S \subseteq \{1,\ldots,k\}$ with $|S|=h$, for every $v \in V$ we compute the number $\C(T, S, v)$ of $h$-treelets of $G$ that are isomorphic to $T$, colored by $S$, and rooted in $v$. 
This computation is performed using a dynamic programming approach.
First, $T$ is virtually split into two subtrees $T_1$, $T_2$ such that $T_1$ has the same root of $T$ and $T_2$ is rooted at a child of that root.
Then, the counter of $T$ is computed according to the following equation:
\begin{equation}\label{eq:treelet_counter}
    \C(T, S, v) = \frac{1}{d}\sum_{\substack{S_1,S_2 \subset S\\S_1 \cap S_2 = \emptyset}} \C(T_1,S_1,v)\sum_{u \sim v}\C(T_2,S_2,u).
\end{equation}
where $d$ is a normalizing factor that depends only on $T$.
It is not hard to prove that the whole build-up phase takes time $2^{O(k)} \cdot O(\size{G})$.

\paragraph{The sampling phase}\label{motivo_sample}
Using the counters $\C$ defined above, one can sample uniformly at random a $k$-treelet of $G$ that is \emph{colorful}---that is, such that every color $c \in \{1,\ldots,k\}$ appears in some vertex of the treelet.
This is done through multi-stage sampling, as follows.
First, pick a tree $T$ on $k$ vertex, as well as a vertex $v \in V$, with probability proportional to $\C(T, S, v)$, the total number of colorful $k$-treelets of $G$ rooted in $v$ that are isomorphic to $T$.
Second, split $T$ into $T_1$ and $T_2$ as above, and choose a partition $S_1,S_2$ of $S=[k]$ with probability proportional to $\C(T_1, S_1, v) \cdot \sum_{u \sim v} \C(T_2, S_2, u)$.
Third, choose a neighbor $u$ of $v$ with probability proportional to $\C(T_2, S_2, u)$.
Finally, repeat the procedure recursively to sample a random copy of $T_1$ rooted at $v$ with colors $S_1$, and a random copy of $T_2$ rooted at $u$ with colors $S_2$.
The union of the two copies yields the random colorful copy of $T$ rooted at $v$.
Once a colorful copy of $T$ is sampled, one takes its vertex set $U$ and stores the corresponding $k$-graphlet $G[U]$.
Finally, as the probability of sampling $G[U]$ is proportional to the number of its spanning trees, one accepts $G[U]$ with probability \emph{inversely} proportional to that number, and repeats the process otherwise.
This makes the distribution of accepted graphlets uniform over all colorful $k$-graphlets of $G$.
One can implement the sampling phase so that it runs in expected $2^{O(k)} \cdot O(\log n)$ time per sample.

\ignore{
    In the sampling phase, these counters are used to sample graphlets uniformly at random. A single sampling step works as follows: a random vertex is first selected, and a colorful treelet $T$ is chosen with probability proportional to $\C(T, [k], v)$. To select the next vertex, $T$ is recursively split in $T_1$ and $T_2$, and the subsets of color $S_1$ and $S_2$ are chosen with probability proportional to $\C(T_1, S_1, v) \cdot \sum_{u \sim v} \C(T_2, S_2, u)$, where the neighbor $u$ is chosen proportionally to $\C(T_2, S_2, u)$. This procedure is repeated recursively on $\C(T_1, S_1, v)$  and $\C(T_2, S_2, u)$. 
    Finally, the colorful graphlet $H$ induced by the vertices of $T$ is retrieved from the graph and its counter (namely, the counter of its isomorphism type) is increased.
}

\medskip
The technique above can be extended to hypergraphs.
The key observation is that, if $G=\Gaifman(H)$, then $H[U]$ is connected \emph{if and only if} $G[U]$ is connected.
Thus, the $k$-hypergraphlets of $H$ are precisely the $k$-graphlets of $G$ (in terms of the vertex subsets inducing them).
Hence, we may count the $k$-hypergraphlets of $H$ by first computing $G$ and then running  color-coding on it (with the only catch of checking the $k$-hypergraphlet $H[U]$ rather than the $k$-graphlet $G[U]$).
    
\begin{algorithm}[h]
\LinesNumbered
\caption{The Naive Baseline}
\label{algo:Naive}
    {\fontsizealgo
    \Input{hypergraph $H=(V, E)$, integer $k \ge 2$}
    compute $G = \Gaifman(H)$\;
    run \motivo on $G,k$\;
    }
\end{algorithm}

\ignore{
\begin{figure}[h]
            
            
    \input{figures/4-edge-hypergraph.tikz}
    \vspace{1ex}
    \begin{minipage}{\columnwidth}
        \centering
    
        \begin{subfigure}[t]{0.49\columnwidth}
                \centering
                \begin{tikzpicture}
                \node[fill, shape=circle,label=above:\(v_1\)] (v1) {};
                \node[fill, shape=circle,right of=v1,label=above:\(v_2\)] (v2) {};
                \node[fill, shape=circle,right of=v2,label=above:\(v_3\)] (v3) {};
                \node[fill, shape=circle,below of=v1,label=below:\(v_4\)] (v4) {};
                \node[fill, shape=circle,right of=v4,label=below:\(v_5\)] (v5) {};
                \node[fill, shape=circle,right of=v5,label=below:\(v_6\)] (v6) {};
    
                \draw[line width=1.5pt] (v4) -- (v5) -- (v1);
                \draw[dotted, line width=1.2pt] (v1) -- (v2) -- (v3) -- (v6) -- (v5) -- (v2);
                \draw[dotted, line width=1.2pt] (v2) -- (v6);
                \draw[dotted, line width=1.2pt] (v5) -- (v3);
                \end{tikzpicture}\textbf{}
        \end{subfigure}\hfill
        \begin{subfigure}[t]{0.49\columnwidth}
                \centering
                \begin{tikzpicture}
                \node[fill, shape=circle,label=above:\(v_1\)] (v1) {};
                \node[fill, shape=circle,right of=v1,label=above:\(v_2\)] (v2) {};
                \node[fill, shape=circle,right of=v2,label=above:\(v_3\)] (v3) {};
                \node[fill, shape=circle,below of=v1,label=below:\(v_4\)] (v4) {};
                \node[fill, shape=circle,right of=v4,label=below:\(v_5\)] (v5) {};
                \node[fill, shape=circle,right of=v5,label=below:\(v_6\)] (v6) {};
    
                \draw[line width=1.5pt] (v4) -- (v5) -- (v2);
                \draw[dotted, line width=1.2pt] (v1) -- (v2) -- (v6) -- (v3) -- (v2) -- (v6) -- (v3) -- (v5) -- (v6);
                \draw[dotted, line width=1.2pt] (v1) -- (v5);
                \end{tikzpicture}
        \end{subfigure}
    
        \vspace{0.5ex}
    
        \begin{subfigure}[t]{0.49\columnwidth}
                \centering
                \begin{tikzpicture}
                \node[fill, shape=circle,label=above:\(v_1\)] (v1) {};
                \node[fill, shape=circle,right of=v1,label=above:\(v_2\)] (v2) {};
                \node[fill, shape=circle,right of=v2,label=above:\(v_3\)] (v3) {};
                \node[fill, shape=circle,below of=v1,label=below:\(v_4\)] (v4) {};
                \node[fill, shape=circle,right of=v4,label=below:\(v_5\)] (v5) {};
                \node[fill, shape=circle,right of=v5,label=below:\(v_6\)] (v6) {};
    
                \draw[line width=1.5pt] (v4) -- (v5) -- (v3);
                \draw[dotted, line width=1.2pt] (v1) -- (v2) -- (v5) -- (v6) -- (v3) -- (v2) -- (v6);
                \draw[dotted, line width=1.2pt] (v1) -- (v5);
                \end{tikzpicture}
        \end{subfigure}\hfill
        \begin{subfigure}[t]{0.49\columnwidth}
                \centering
                \begin{tikzpicture}
                \node[fill, shape=circle,label=above:\(v_1\)] (v1) {};
                \node[fill, shape=circle,right of=v1,label=above:\(v_2\)] (v2) {};
                \node[fill, shape=circle,right of=v2,label=above:\(v_3\)] (v3) {};
                \node[fill, shape=circle,below of=v1,label=below:\(v_4\)] (v4) {};
                \node[fill, shape=circle,right of=v4,label=below:\(v_5\)] (v5) {};
                \node[fill, shape=circle,right of=v5,label=below:\(v_6\)] (v6) {};
    
                \draw[line width=1.5pt] (v4) -- (v5) -- (v6);
                \draw[dotted, line width=1.2pt] (v1) -- (v2) -- (v3) -- (v6) -- (v2) -- (v5) -- (v3);
                \draw[dotted, line width=1.2pt] (v1) -- (v5);
                \end{tikzpicture}
        \end{subfigure}
    \end{minipage}

  \caption{Example of hypergraph and all of its treelets rooted in $v_4$}
  \label{fig:hypertreelets}
\end{figure}

\begin{figure}
    \centering
    \caption{Caption}

    \label{fig:testhyper2}
\end{figure}
}

As noted in the introduction, \Cref{algo:Naive} in the worst case runs in time $\Omega(\size{H}^2)$, because in the worst case $\size{\Gaifman(H)}=\Omega(\size{H}^2)$.
One could try to bypass this obstacle by applying \Cref{eq:treelet_counter} to $H$, rather than to $\Gaifman(H)$.
However, applying \Cref{eq:treelet_counter} by listing all neighborhoods in $H$ still takes time $\Omega(|V|+\sum_{e\in E}|e|^2)$, and it is not clear how this could be avoided.
The next section shows that this is not an accident.

\subsection{The quadratic barrier}
We show that, unless the widely accepted Orthogonal Vector Conjecture fails, computing the counters given by \Cref{eq:treelet_counter} actually \emph{requires} time essentially $\Omega(|V|+\sum_{e\in E}|e|^2)$.
The Orthogonal Vector problem (OV) asks, given a set $A$ of $n$ Boolean vectors of dimension $d$, whether there exist $u,v \in A$ orthogonal, i.e., such that $u[i]\cdot v[i] = 0$ for all $i \in \{1,...,d\}$.
The Strong Exponential Time Hypothesis (SETH) implies the following lower bound for OV, named \assumption{Moderate-dimension OVC}; see~\cite{gao2016orthogonal}.
\begin{conjecture}[\emph{Moderate-dimension OVC}]\label{conj:OVC}
There is no algorithm for OV running in time ${O}(n^{2-\varepsilon}poly(d))$ for any $\varepsilon > 0$.
\end{conjecture}
\noindent Our result is:
\begin{theorem}[Quadratic barrier of color coding on hypergraphs]\label{thm:quadratic}
Assume the moderate-dimension OVC (\Cref{conj:OVC}).
Then, for every $k \ge 2$ and $\epsilon > 0$, no algorithm can compute the counters of \Cref{eq:treelet_counter} on a hypergraph $H=(V,E)$ in time $O(|V|+\sum_{e \in E}|e|^{2-\epsilon})$.
\end{theorem}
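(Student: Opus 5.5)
We reduce OV to the computation of the counters. Given an instance $A=\{a_1,\dots,a_n\}\subseteq\{0,1\}^d$, we build a hypergraph $H$ whose size is $\size{H}=O(nd)$ and whose edges have size roughly $n$. From the counters $\C(T,S,v)$ for $h=2$ we then read off whether an orthogonal pair exists. An algorithm running in time $O(|V|+\sum_e|e|^{2-\epsilon})$ on $H$ would then solve OV in time $O(d\cdot n^{2-\epsilon}+nd)$, contradicting \Cref{conj:OVC}.

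\emph{Construction.} Take one vertex $x_j$ per vector $a_j$ and one hyperedge $e_i=\{x_j : a_j[i]=1\}$ per coordinate $i\in[d]$. This gives $|V|=n$, $|E|\le d$, and $|e_i|\le n$, so $\sum_e|e|^{2-\epsilon}\le d\,n^{2-\epsilon}$. Two vertices $x_j,x_\ell$ are adjacent in $\Gaifman(H)$ iff some $i$ has $a_j[i]=a_\ell[i]=1$, i.e.\ iff $a_j,a_\ell$ are \emph{not} orthogonal. Hence an orthogonal pair exists iff $\Gaifman(H)$ is not a complete graph. The cases of a zero vector and of a vector paired with itself are handled trivially in $O(nd)$ time beforehand.

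\emph{Reading the answer from the counters.} We fix the coloring adversarially; a lower bound for every coloring suffices, since the algorithm must work for the sampled one. Take $k=2$ with the rooted $2$-treelet $T$, and consider $\sum_{S}\C(T,S,v)$ over all $|S|=2$. This equals the number of neighbours $u\sim v$ with $c(u)\ne c(v)$. To capture all neighbours, we run the algorithm on two hypergraphs with complementary color classes, or more simply duplicate the instance so that each vertex pair becomes bichromatic under some fixed coloring. Since only $O(\log n)$ colorings separating every pair are needed, a standard family of splitters suffices. From these counts we recover $\deg_{\Gaifman(H)}(x_j)$ for every $j$, and we check whether $\deg(x_j)<n-1$ for some $j$.

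\emph{Arbitrary $k$.} For general $k\ge2$ we pad $H$ with a disjoint gadget that fixes the remaining $k-2$ colors. We also use the fact that the counters for $h=2$ are part of the output for every $k$, since the dynamic program computes all $h\le k$. Either way, the $h=2$ level already yields the degrees.

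\emph{Main obstacle.} The delicate step is the interaction with the random coloring. Counters only count neighbours with a different color, so the reduction must either control the coloring (treat it as part of the input) or combine $O(\log n)$ runs so that every pair is separated. I would first try to argue that the statement is about computing the counters for a given coloring $c$, and then use a separating family of colorings, each vertex's bit pattern being its index. A second point to watch is that the $\poly(d)$ and $\log n$ overheads stay within the moderate-dimension OVC budget. With $d=\omega(\log n)$ arbitrary polynomial factors in $d$ are allowed, so this should be fine.
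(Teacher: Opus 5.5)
Your proposal is correct. Its first half matches the paper's: the same hypergraph (one vertex per vector, one hyperedge per coordinate), and the same criterion that an orthogonal pair exists iff $\Gaifman(H)$ is not complete. The paper packages this as a reduction from \probOV to \probNC.

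You differ from the paper in how you neutralise the colors.

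\emph{The paper's route.} It also treats the coloring as part of the input, but makes a single call on a blown-up instance. It replaces $V$ by $V\times[k]$, replaces each hyperedge $e$ by $e\times[k]$, and colors $(v,c)$ with $c$. Then every neighbour of $v$ is present in every color. The paper reads $|N(v)|$ off the top-level colorful $k$-star counter $\C(K_{1,k-1},[k],(v,c))$, which is a $(k-1)$-th power of the neighbourhood size; counting the other copies of $v$, it is $(|N(v)|+1)^{k-1}$. The blow-up costs only a factor $k$ in $|V|$ and at most $k^{2}$ in $\sum_e|e|^{2-\epsilon}$.

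\emph{Your route.} You use the $h=2$ (single-edge) counters, which give the number of neighbours of $v$ whose color differs from $c(v)$. You then cover all pairs with $O(\log n)$ bit-colorings. This needs only the first nontrivial level of the dynamic program. Since colorings need not be surjective, it works verbatim for every $k$, so your padding gadget is unnecessary. The price is an extra $O(\log n)$ factor, which is harmless under the conjecture.

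\textbf{One step needs correcting.} From the bit-colorings you do \emph{not} recover $\deg(x_j)$. For each bit $b$ you only learn $D_b(v)=|\{u\in N(v): c_b(u)\neq c_b(v)\}|$. Two different neighbourhoods of different sizes can give the same vector $(D_b(v))_b$: neighbours whose indices differ from $v$'s in bit $1$ and in bit $2$, versus a single neighbour differing in both bits.

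What suffices instead is to compare $D_b(v)$ with $M_b(v)=|\{u\in V: c_b(u)\neq c_b(v)\}|$, which you can compute directly. $\Gaifman(H)$ is complete iff $D_b(v)=M_b(v)$ for all $b$ and $v$, because any non-adjacent $u\neq v$ differs from $v$ in some bit and so forces $D_b(v)<M_b(v)$.

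Alternatively, your ``duplicate the instance'' option works if you make it precise as the paper's blow-up with $k=2$: put both copies of every vertex into every hyperedge. Then a single call returns $|N(v)|+1$ exactly.
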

As computing the counters of \Cref{eq:treelet_counter} is the very heart of the color-coding technique, \Cref{thm:quadratic} says that color coding alone is unlikely to give\\a sub-quadratic algorithm for Hypergraphlet Counting.
Note also that \Cref{thm:quadratic} still agrees with the fact that, \emph{on graphs}, color-coding yields linear-time algorithms---indeed, for a graph we have $\sum_{e \in E}|e|^2 = 4|E|$.

The rest of this section gives the proof of \Cref{thm:quadratic}; the uninterested reader may skip it.

\subsubsection{Proof of \Cref{thm:quadratic}}\label{sec:NWIntro}
The proof goes through an intermediate problem, the \emph{Neighbor Count} problem (\probNC), which, given in input a hypergraph $H=(V,E)$, asks to compute the cardinality $|N(v)| = |\{u \in V : u \sim v\}|$ of each vertex $v \in V$.
We show that OVC reduces efficiently to \probNC, then we show that \probNC reduces efficiently to the problem of computing the counters of \Cref{eq:treelet_counter}.

\begin{lemma}\label{lemma:no_neigh}
\probOV can be solved in time $O(dn)+T(n,d)$, where $T(n,d)$ is the complexity of \probNC on hypergraphs with $n$ vertices and $d$ edges.
\end{lemma}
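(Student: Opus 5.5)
The plan is the standard incidence-hypergraph encoding. Given an instance $A=\{a_1,\ldots,a_n\}\subseteq\{0,1\}^d$ of \probOV, build the hypergraph $H_A=(V,E)$ with one vertex $v_j$ per vector $a_j$ and one edge $e_i=\{v_j : a_j[i]=1\}$ per coordinate $i\in[d]$, dropping empty edges. Then $H_A$ has $n$ vertices and at most $d$ edges. By scanning every vector once, it can be built in time $O(dn)$. Its size is $\size{H_A}\le n+dn$.

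The key correspondence is that $v_j\sim v_\ell$ in $H_A$ if and only if some coordinate $i$ has $a_j[i]=a_\ell[i]=1$, i.e., if and only if $a_j$ and $a_\ell$ are \emph{not} orthogonal. Hence a pair of distinct orthogonal vectors exists if and only if some vertex $v_j$ has fewer than $n-1$ neighbours other than itself.

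There are two details to handle.

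\begin{itemize}
\item \emph{Self-adjacency.} By the paper's definition $u\sim v$ iff $E(u)\cap E(v)\neq\emptyset$, so $v_j\in N(v_j)$ exactly when $a_j\neq 0$. We therefore test the condition $|N(v_j)\setminus\{v_j\}|<n-1$. This is computable from $|N(v_j)|$ by subtracting $1$ when $d(v_j)>0$, and $d(v_j)$ is available from the construction. A zero vector makes $v_j$ isolated with $|N(v_j)|=0$, which correctly signals orthogonality whenever $n\ge 2$.
\item \emph{Duplicates.} If \probOV is read as asking for $u,v\in A$ possibly equal, the only extra case is a self-orthogonal vector, i.e.\ the zero vector, which is detected in $O(dn)$ anyway.
\end{itemize}

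The algorithm is: build $H_A$ in $O(dn)$, call the \probNC solver once in time $T(n,d)$, then scan the $n$ returned counts in $O(n)$, accepting iff some corrected count is below $n-1$. The total is $O(dn)+T(n,d)$.

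Correctness follows directly from the equivalence above. I expect no real obstacle here. The only care needed is the off-by-one from the convention that a vertex is adjacent to itself, plus empty edges and zero vectors, so I would state the correction explicitly.
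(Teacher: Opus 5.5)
Your proposal is correct and matches the paper's proof: the same incidence hypergraph (one vertex per vector, one edge per coordinate), the same equivalence between orthogonality and non-adjacency, and a single $O(n)$ scan of the neighbour counts returned by the \probNC solver. Your explicit self-adjacency correction is a worthwhile refinement. Under the paper's literal definition, $v\sim v$ whenever $d(v)>0$, so its test $|N(v)|<n-1$ is off by one; for example, it misses the instance $(1,0),(0,1)$. Your test $|N(v_j)\setminus\{v_j\}|<n-1$, together with dropping empty edges, is what makes the reduction airtight.
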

\begin{proof}
Let $\bv_1,\ldots,\bv_n \in \R^d$ be the input of \probOV.
Consider the hypergraph $H=(V,E)$ with $V=\{\bv_1,\ldots,\bv_n\}$ and $E=\{e_1,\ldots,e_d\}$, where $e_\ell=\{\bv_i : \bv_i[\ell]=1\}$ for $\ell=1,\ldots,d$; note that $H$ can be constructed in time $O(nd)$.
Now, observe that $\bv_i \cdot \bv_j = 0$ if and only if $E_{\bv_i} \cap E_{\bv_j} = \emptyset$, that is, if $\bv_i \nsim \bv_j$.
Thus, the answer to \probOV is YES if and only if $H$ contains two vertices that are not adjacent, that is, if and only if $|N(v)|<n-1$ for some $v \in V$.
This can be checked in time $O(|V|)=O(dn)$ from the solution to \probNC.
We conclude that \probOV can be solved in time $O(dn)+T(n,d)$, as claimed.
\end{proof}

\begin{lemma}
For every fixed $k \ge 2$, \probNC can be solved in time $O(\size{H})+R(k|V|,|E|)$, where $R(kn,d)$ is the complexity of computing the counters of \Cref{eq:treelet_counter} on a hypergraph with $kn$ vertices and $d$ edges.
\end{lemma}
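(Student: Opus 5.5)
We reduce \probNC to the computation of the counters for $h=2$, the single-edge treelet $T$ rooted at one endpoint. For a graph, $\C(T,S,v)$ with $S=\{a,b\}$ and $c(v)=a$ counts the neighbors $u\sim v$ with $c(u)=b$. Summing over all $S\ni c(v)$ therefore gives exactly $|N(v)|$, since every neighbor $u$ has $c(u)\neq c(v)$ only if the coloring separates them. The difficulty is that the counters are defined with respect to a \emph{random} coloring, and neighbors sharing $v$'s color are missed. To remove this, we will fix the coloring ourselves; the statement concerns computing the counters for a given coloring, which is how the reduction is used. The construction is to make $k$ copies of $H$ and color each copy with a distinct color. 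This explains the $k|V|$ vertices in $R(k|V|,|E|)$.

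Concretely, build $H'$ on vertex set $V\times[k]$ and assign $c((v,i))=i$. For each $e\in E$ we let $e'=e\times[k]$, so $H'$ has exactly $|E|$ edges and can be built in time $O(k\size{H})=O(\size{H})$ for fixed $k$. In $H'$, $(u,j)\sim(v,i)$ iff $u$ and $v$ share an edge of $H$, which includes $u=v$ whenever $v$ lies in some edge. Now take $T$ to be the 2-vertex treelet and $S=\{1,2\}$. The counter $\C(T,S,(v,1))$ counts pairs $\{(v,1),(u,2)\}$ forming a colorful treelet rooted at $(v,1)$. By the recursion \eqref{eq:treelet_counter}, with $T_1,T_2$ single vertices and $d=1$, this equals $|\{u : (u,2)\sim (v,1)\}|$. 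That set is $N(v)\cup\{v\}$ if $v$ is non-isolated, and empty otherwise.

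From this, $|N(v)|$ is obtained in $O(1)$ per vertex by subtracting the self term. Under the convention $u\sim v$ requiring a common edge, $v\in N(v)$ itself whenever $d(v)\ge1$, so we must match whatever convention \probNC uses; the correction is one bit, namely whether $d(v)>0$, computable in $O(\size{H})$. The total time is $O(\size{H})+R(k|V|,|E|)$. The counters for $h\le 2$ are a subset of all counters, so computing all counters suffices. For $k>2$, the colors $3,\dots,k$ are dummy copies that only serve to make the vertex count match; alternatively one could use only two copies, but we keep $k$ copies to use a genuine $k$-coloring.

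The main obstacle is justifying that ``computing the counters'' refers to an arbitrary, adversarially chosen coloring rather than a random one, and checking that the edge count does not blow up. The latter holds by construction, since each hyperedge is replaced by a single hyperedge. If the counters must be for a random coloring, I would instead argue over repeated random colorings with $O(\log n)$ rounds, but I would first try the fixed-coloring interpretation, which the $k|V|$ bound strongly suggests.
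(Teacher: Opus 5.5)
Your proof is correct, and it uses the same gadget as the paper: the blow-up $V\times[k]$ with hyperedges $e\times[k]$ and the fixed coloring $c((v,i))=i$. The paper also fixes the coloring inside the reduction, so your fallback with repeated random colorings is not needed.

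The difference is which counter you read off. Write $x_v=|\{u : E(u)\cap E(v)\neq\emptyset\}|$. The paper takes the top-level counter of the $k$-star rooted at its center, $\C(K_{1,k-1},[k],(v,c))=x_v^{k-1}$, and recovers $x_v$ as a $(k-1)$-th root. You read the $h=2$ edge counter $\C(T,\{1,2\},(v,1))=x_v$ directly.

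Your version is simpler. There is no root extraction, and colors $3,\dots,k$ are mere padding, whereas the paper needs all $k$ copies to build a colorful $k$-star. Your version also handles the self-term explicitly. The paper's identity $\C(K_{1,k-1},[k],v')=|N(v)|^{k-1}$ holds only if $N(v)$ contains $v$ itself for non-isolated $v$, which is the literal reading of $\sim$. The OV lemma's test $|N(v)|<n-1$, however, uses the exclusive reading. Your $\Ind{d(v)>0}$ correction avoids this mismatch.

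The paper's version shows slightly more: the barrier already holds for the final-level counters $\C(\cdot,[k],\cdot)$ alone. Yours relies on the intermediate $h=2$ tables being part of the required output. They are, since the counters of \Cref{eq:treelet_counter} range over all $h=1,\dots,k$, so both arguments prove the lemma as stated.
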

\begin{proof}
Let $H=(V,E)$ be the input of \probNC.
We construct the hypergraph $H'=(V',E')$ where $V' = V \times [k]$ and $E'$ contains, for every $e \in E$, the hyperedge $e' =\{(u,c) : u \in e, \ c \in [k] \}$ to $E(H')$.
We also construct the coloring $c: V' \to [k]$ given by the indices $c$ of the vertices $(v,c) \in V'$.
As $k$ is fixed, both $H$ and the coloring $c$ can be constructed in time $O(\size{H})$.
Note also that $|V'|=k|V|$ and $|E'|=|E|$.

Suppose now we have computed the counters of $H'$ in time $R(k|V|,|E|)$.
Let $T$ be the $k$-star (i.e., the star on $k-1$ leaves, $K_{1,k-1}$) and let $S=[k]$.
For every $v' = (v,c) \in V'$, by definition of the counters, and by a simple counting argument, we have:
\[
\C(K_{1,k-1}, [k], v') = (|N(v)|)^{k-1}\,,
\]
where $N(v)$ is the neighborhood of $v$ in $H$.
Therefore $|N(v)| = \left(\C(K_{1,k-1}, [k], v')\right)^{\frac{1}{k-1}}$ for every $v \in V$ and every $v' = (v,c) \in V'$.
Hence, given the counters of $H'$, one can compute the solution to \probNC on $H$ in time $O(|V|)=O(\size{H})$.
This completes the proof.
\end{proof}

\noindent \textbf{Wrap-up.}
We conclude the proof of \Cref{thm:quadratic}.
Suppose that, for some $k \ge 2$ and $\epsilon > 0$, we can compute the counters of \Cref{eq:treelet_counter} in time $O(|V| + \sum_{e \in E} |e|^{2-\varepsilon})$.
This implies:
\[
R(kn,d) = O\left(n + \sum_{e \in E} |e|^{2-\epsilon}\right) = O\left(d n^{2-\epsilon}\right)
\]
Moreover, by combining the two lemmas above, \probOV has complexity:
\[
O(dn) + T(n,d) = O(dn) + \left(O(nd)+ R(kn,d)\right) = O(dn) + R(kn,d)
\]
We conclude that \probOV has complexity $O(d n^{2-\epsilon})$, contradicting \Cref{conj:OVC}.

\ignore{
\vspace*{3em}
\hrule
\textbf{RESTO DELLA SEZIONE È VECCHIO}

The proof goes through an intermediate problem, the \emph{Neighbor Weight} problem (\probNW).
A \emph{weighted hypergraph} is a pair $(H,w)$ where $H=(V,E)$ is a hypergraph and $w:V\to\N$.
The \emph{neighbor weight} of a vertex $v$ is $\eta(v) = \sum_{u \sim v} w(u)$.
We have:
\begin{definition}
The Neighbor Weight problem ({\probNW}) asks, given a weighted hypergraph $(H,w)$, to return the vector $(\eta(v))_{v \in V}$.
\end{definition}
\ignore{
Consider now the following naive approach to \probNW. First, for every vertex $v \in V$ we compute $N(v)$.
This can be done by considering every edge $e \in E$ and, for every pair of distinct vertices $u,v \in e$, adding $u$ to the neighbors of $v$ and vice versa. 
Second, for every $v \in V$ we iterate over $N(v)$ and compute $\eta(v)$.
Clearly, the total time is bounded by the time taken by the first step (which bounds $\sum_v |N(v)|$ and thus the time taken by the second step), which is in $O(|V|+\sum_{e \in E}|e|^2)$.
Therefore, if we call \nwNaive the resulting procedure, we have:
\begin{lemma}
    \nwNaive runs in time $O(|V|+\sum_{e \in E}|e|^2)$.
\end{lemma}
\ignore{We show that \nwNaive is optimal under the widely believed Orthogonal Vectors Conjecture (OVC).
In the Orthogonal Vector problem (OV) we are given a set $A$ of $n$ Boolean vectors of dimension $d$, and we are asked to decide if there exist $u,v \in A$ that are orthogonal---i.e., such that $u[i]\cdot v[i] = 0$ for all $i \in \{1,...,d\}$.
The Strong Exponential Time Hypothesis (SETH) implies the following lower bound for OV, named \assumption{Moderate-dimension OVC}; see~\cite{GJI}.
\begin{conjecture}[\emph{Moderate-dimension OVC}]\label{conj:OVC}
There is no algorithm for OV running in time ${O}(n^{2-\varepsilon}poly(d))$ for any $\varepsilon > 0$.
\end{conjecture}
}
We now show that \nwNaive is asymptotically optimal.
}
\begin{lemma}\label{lemma:no_neigh}
Assuming the moderate-dimension OVC (\Cref{conj:OVC}), there is no algorithm for \probNW running in time $O(|V|+\sum_{e \in E}|e|^{2-\epsilon})$ for any $\epsilon>0$.
\end{lemma}
\begin{proof}
We show that \probOV can be solved in time $O(dn)+T(n,d)$, where $T(n,d)$ is the complexity of \probNW on hypergraphs with $n$ vertices and $d$ edges.
This implies the claim: if \probNW has an algorithm running in time $O(|V|+\sum_{e \in E}|e|^{2-\epsilon})$, then $T(n,d)=O(n+dn^{2-\epsilon})$, making the complexity of \probOV in $O(dn^{2-\epsilon})$, too, and contradicting \Cref{conj:OVC}.

Let $\bv_1,\ldots,\bv_n \in \R^d$ be the input of \probOV.
Consider the hypergraph $H=(V,E)$ with $V=\{\bv_1,\ldots,\bv_n\}$ and $E=\{e_1,\ldots,e_d\}$, where $e_\ell=\{\bv_i : \bv_i[\ell]=1\}$ for $\ell=1,\ldots,d$.
Observe that $\bv_i \cdot \bv_j = 0$ if and only if $E_{\bv_i} \cap E_{\bv_j} = \emptyset$, that is, if $\bv_i \nsim \bv_j$.
Now consider the instance of \probNW given by $H$ together with the constant vertex weighting $w=1$.
For any $v \in V$ we have $\eta(v) = \sum_{u \sim v} w(u) = |N(v)|$; thus $v \in V$ is nonadjacent to some $u \in V \setminus \{v\}$ if and only if $\eta(v)<n-1$, where $n=|V|$.
We conclude that the solution $(\eta(v))_{v \in V}$ to \probNW contains an entry $\eta(v)<n-1$ if and only if the answer to the \probOV instance is YES.
To conclude, note that constructing $H$ and $w$ and checking the solution $(\eta(v))_{v \in V}$ requires time $O(nd)$.
\end{proof}
\ignore{
Even though \Cref{thm:quadratic} does not follow immediately from \Cref{lemma:no_neigh}, the connection with \Cref{eq:treelet_counter} should already be apparent. \Cref{eq:treelet_counter} involves terms of the form $\sum_{u \sim v}\C(T_2,S_2,u)$, that is, weighted sums over the (colored) neighborhood of a vertex. In contrast, \Cref{lemma:no_neigh} shows that already computing analogous uncolored neighbor-weight sums $\eta(v)$ is conditionally hard. The next result makes this connection formal by reducing \probNW to the problem of counting colorful rooted treelets.

First, we formally define the problem.
\begin{definition}[Colorful rooted treelet counting]
    Given a hypergraph $H$, the colorful rooted treelet counting problem (\probCRT) asks to compute the number $\C(T,S,v)$ of treelet $T$ with label set $S$ \Giacomo{(Da rendere più chiaro come sono colorati i treelet)} rooted in $v$ for every $v \in V(H)$.
\end{definition}
Next, we show that the lower bound of \Cref{lemma:no_neigh} lifts to \probCRT.
\begin{lemma}\label{thm:nocolorfulhypergraphlets}
    Assuming \Cref{conj:OVC}, there is no algorithm for \probCRT running in time $O(|V| + \sum_{e \in E}|e|^{2-\epsilon})$ for any $\epsilon>0$.
\end{lemma}
}
\ignore{
\begin{proof}
    Fix $\epsilon>0$ and consider any algorithm for counting the number of colorful treelets rooted in any given $v \in V$. We show that for any $k$ (that is, the number of colors), there exists a colorful $k$-treelet whose occurrences cannot be counted in time $O(|V| + \sum_{e \in E}|e|^{2-\epsilon})$ if \Cref{conj:OVC} holds.
    
     Consider a hypergraph H = (V, E) and the instance of \probNW with constant vertex weighting $w=1$ (which we have just proven to be at least as hard as OV). For every vertex $v \in V$, create $k$ copies $v_i$ of $v$, each of a different color.
    
    Consider the $k$-star rooted $v \in V$. If $NC(v)$ is the number of vertices adjacent to $v$, the total number of colorful $k$-stars rooted in $v$ is $k \binom{NC(v)+k-1}{k-1}(k-1)! = \binom{NC(v) + k -1}{NC(v)}k!$ \footnote{That is, for a fixed copy of $v$, a colored $k$ stars is made by choosing $k-1$ neighbors of different colors among all the possible ones, including the other copies of $v$}, from which we can easily derive the cardinality of $N(v)$ for each $v \in V$. The same argument as \Cref{teo:no_neigh} applies: if such an algorithm existed, we would be able to retrieve the number of neighbors for each $v \in V$ in time $O(|V| + \sum_{e \in E}|e|^{2-\epsilon})$ (it would suffice to inspect all the corresponding $|V|$ counters), which contradicts \Cref{conj:OVC}. 
\end{proof}
}
Next, we prove \Cref{thm:quadratic}, by reduction from \probNW.

\begin{proof}
     Consider the instance $(H,w)$ of \probNW with constant vertex weighting $w=1$ \Giacomo{(Posso farlo?)}; note that in this particular case $\eta(v) = |N(v)|$. We build a hypergraph $H'$ with $V(H') = V \times [k]$, and for every $e \in E(H)$ we add a hyperedge $e' =\{(u,c) : u \in e, \ c \in [k] \}$ to $E(H')$. It is easy to see that this requires time $O(\lVert H \rVert)$.

    Assume now that there exists an algorithm that can compute the number of colorful treelets in time $O(|V(H')|+ \sum_{e' \in E(H')}|e'|^{2 -\epsilon})$. In particular, this means that the algorithm can efficiently compute the number of colorful $k$-stars rooted in every vertex. We now show that the number of colorful copies of this particular treelet allows to also get $|N_H(v)|$ for free, where $N_H(v)$ is the neighborhood of $v$ in $H$. From this fact, we easily get a contradiction.
    
    Fix a colored vertex $(v,c_0) \in V(H')$, the number $\C(\star,(v,c_0))$ \Giacomo{(Da migliorare notazione)} of colorful $k$-stars rooted in this vertex is $(|N_H(v)|+1)^{k-1}$, because for each of the remaining $(k-1)$ colors we have exactly $|N_H(v)|$ choices plus the copy of $v$ itself. Since $\C(\star,(v,c_0)) = (|N_H(v)| + 1)^{k-1}$, one can easily see that $|N_H(v)| = \C(\star,(v,c_0))^{\frac{1}{k-1}}-1$. With $k$ being fixed, $|N_H(v)|$ can be retrieved in time $O(1)$. Note now that $|V(H')| = k|V(H)|$ and for every $e' \in E(H')$ $|e'| = k|e|$ for some $e \in E(H)$, this means that we are able to compute the number of colorful rooted treelets in time
    \begin{equation*}
        O(k|V(H)| + \sum_{e \in E(H)}(k|e|)^{2 -\epsilon}) = O(|V(H)| + \sum_{e \in E(H)}|e|^{2 -\epsilon}).
    \end{equation*}
    In particular, we can compute $\C(\star,(v,c_0))$ for all $v \in V(H'), c \in [k]$. We have just shown that from these counters we are able to retrieve $|N_H(v)|$ for every $v \in V(H)$ in time $O(|V(H)|)$, thus we can also solve \probNW in same asymptotic time, clearly contradicting \Cref{lemma:no_neigh}. This proves the claim.
\end{proof}
Putting everything together, we can now conclude the proof of \Cref{thm:quadratic}. \Cref{lemma:no_neigh} shows that, under \Cref{conj:OVC}, \probNW cannot be solved in time $O(|V| + \sum_{e \in E} |e|^{2-\varepsilon})$ for any 
$\varepsilon > 0$.  \Cref{thm:nocolorfulhypergraphlets} then transfers this conditional lower bound to \probCRT. Since, as discussed above, computing the counters in \Cref{eq:treelet_counter} is precisely an instance of \probCRT on the underlying hypergraph, any algorithm that computes all such counters in time $O(|V| + \sum_{e \in E} |e|^{2-\varepsilon})$ would  contradict \Cref{thm:nocolorfulhypergraphlets}. This is exactly the claim of \Cref{thm:quadratic}.
}

%% file: figures/4-edge-hypergraph.tikz
    \begin{tikzpicture}

        \node[vertex] (v1) at (0,0) {};
        \node[vertex] (v2) at (1,0) {};
        \node[vertex] (v3) at (1,1) {};
        \node[vertex] (v4) at (3,0) {};
        \node[vertex] (v5) at (1,-1) {};
        \node[vertex] (v6) at (3,-1) {};
        \node[vertex] (v7) at (-1,-1) {};
        \node[vertex] (v8) at (-0.3,-1.5) {};

        \def\r{0.3}
        \hyperedge[color=violet,tension=0.9]{v3,v1,v2};
        \hyperedge[color=red,tension=0.3]{v6,v4,v2,v5};
        \hyperedge[color=blue,tension=0.3]{v6,v4,v3};
        \hyperedge[color=green,tension=0.3]{v5,v1,v7,v8};
    \end{tikzpicture}

%% file: 4_alphabeta_algorithm.tex
\label{sec4:abnice}
\Cref{sec:cc} shows that, without further assumptions, there is no hope to obtain $O(\size{H}^2)$-time algorithms for HC via color coding.
Thus, we shall look for additional properties of $H$ that can make color coding easier.
One such property is having small rank, say $\rank(H) \le 10$.
This obviously implies $\sum_{e \in E} |e|^2 = O(|E|)$, making the naive algorithm (\Cref{algo:Naive}) run in time linear in $\size{H}$.
However, this assumption is quite strong, and real-world hypergraphs seem far from it---see \Cref{tab:freq}.
Another such property could be having small degree, say $\Delta(H) \le 10$.
It is not immediately clear how this could accelerate color coding; but, again, \Cref{tab:freq} suggests that in practice this property often does not hold.

We bypass these limitations by introducing \emph{$(\alpha,\beta)$-niceness}, an ``interpolation'' between small rank and small degree.
On the one hand, hypergraphs from real-world datasets are $(\alpha,\beta)$-nice for small values of $\alpha,\beta$; on the other hand, we show how to exploit $(\alpha,\beta)$-niceness to bring color coding into near-linear time.

\ignore{

In this section, we present our algorithm: \hypermotivo.

In \Cref{sec:quad_lim}, we showed the existence of a tight lower bound that applies to any approach based on color coding.
To circumvent this limitation from a practical standpoint, we propose a \emph{parameterized} version of \probHC, with respect to which \hypermotivo will be shown to be FPT-linear.

In \Cref{subsec:motiv}, we introduce the notions of $\alpha$-split and $(\alpha, \beta)$-niceness and we motivate this parameterization, analyzing the structure of some real-world hypergraphs and providing empirical evidence of its validity.

Starting from \Cref{fpt:nw}, we gradually introduce the building blocks of \hypermotivo, mainly \emph{another} algorithm for the approximated \probHC, parameterized by the maximum degree of the hypergraph.

Finally, in \Cref{sec:hypermotivo}, we describe \hypermotivo in full and show how it combines all the previously developed ideas, effectively extending the naive method based on \motivo and preserving its worst-case time complexity while - as confirmed by our experiments - achieving significant performance improvements on many instances.

\subsection{Definition and motivation of$(\alpha, \beta)$-niceness}\label{subsec:motiv}
}

\begin{definition}\label{def:alfa_beta_nice}
    Let $H=(V,E)$ be a hypergraph and let $\alpha \ge 0$. The \emph{$\alpha$-split} of $H$ is the pair of hypergraphs ($H_{\leq \alpha},H_{> \alpha})$ where $H_{\leq \alpha} = (V, E_{\leq \alpha})$ and $H_{> \alpha} = (V, E_{> \alpha})$ satisfy:
    \begin{equation}
        \begin{aligned}
        & E_{\leq \alpha} = \{e \in H: |e| \leq \alpha \}\\      
        & E_{> \alpha} = \{e \in H: |e| > \alpha \}\\
        \end{aligned}
    \end{equation}
    A hypergraph $H=(V,E)$ is $(\alpha,\beta)$-nice if $\Delta(H_{> \alpha}) \le \beta$.
\end{definition}
\noindent In other words, $H$ is $(\alpha,\beta)$-nice if it can be edge-partitioned into a hypergraph $H_{\le \alpha}$ with rank at most $\alpha$ and a hypergraph $H_{> \alpha}$ of degree at most $\beta$.
We call $H_{\le \alpha}$ the \emph{lower part} and $H_{>\alpha}$ the \emph{upper part} of the split. Furthermore, we denote by $N_{\leq \alpha}(v)$ and $N_{> \alpha}(v)$ the set of neighbors of $v$ in their respective parts. The same convention is used for the type $E(v)$ as well.

Note that our algorithm \hypermotivo has a build-up time of $f(\alpha,\beta,k) \cdot O(\size{H})$, see \Cref{thm:hypermotivo}, and is thus efficient if $\alpha,\beta,k$ are small.
\Cref{fig:alfa_beta} shows that this is indeed the case in practice.
For every $\alpha=0,1,\ldots$, the figure shows the smallest value of $\beta$ for which a hypergraph is $(\alpha,\beta)$-nice, for all hypergraphs used in our experiments.
Note that, in each curve, the rightmost point corresponds to $\alpha=\rank(H)$ and $\beta=0$, while the topmost point to $\beta=\Delta(H)$ and $\alpha=0$.
These are trivial values, because obviously any $H$ is both $(\rank(H),0)$-nice and $(0,\Delta(H))$-nice.
The interesting point of the curves is that real-world hypergraphs are often $(\alpha,\beta)$-nice for $\alpha,\beta$ \emph{orders of magnitude smaller} than $\rank(H),\Delta(H)$.
\New{This seems to be consistent with the general idea that, in social networks, individuals typically participate in a large number of small communities, but only in a few (if any) large ones~\cite{Leskovec2008CommunitySI}.}
Thus, $(\alpha,\beta)$-niceness is a good candidate for a parameterization of an algorithm's running time.
The dot across each curve denotes the pair $(\alpha,\beta)$ chosen by \hypermotivo, which is typically with $\alpha \simeq 10^3$ and $\beta \simeq 10$.

The next section shows how to leverage $(\alpha,\beta)$-niceness to break through the quadratic barrier of color coding on hypergraphs.

\begin{figure}[t]
    \centering
    \def\myheight{10em}
    \begin{subfigure}[t]{0.49\linewidth}
        \centering
        \includegraphics[height=\myheight,trim={0 0.15cm 0 0},clip]{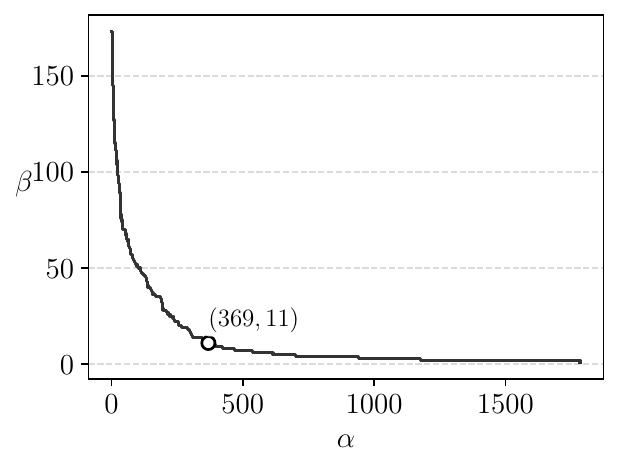}
        \caption{MA}
        \label{fig:alfa_beta_MA}
    \end{subfigure}\hfill
    \begin{subfigure}[t]{0.49\linewidth}
        \centering
        \includegraphics[height=\myheight,trim={0 0.15cm 0 0},clip]{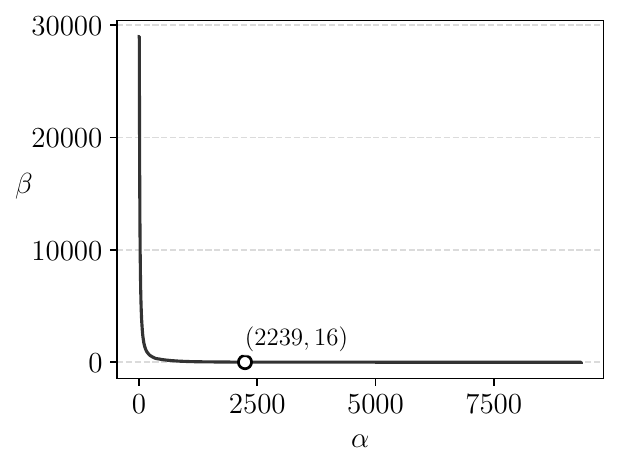}
        \caption{AR}
        \label{fig:alfa_beta_AR}
    \end{subfigure}

    \smallskip

    \begin{subfigure}[t]{0.49\linewidth}
        \centering
        \includegraphics[height=\myheight,trim={0 0.15cm 0 0},clip]{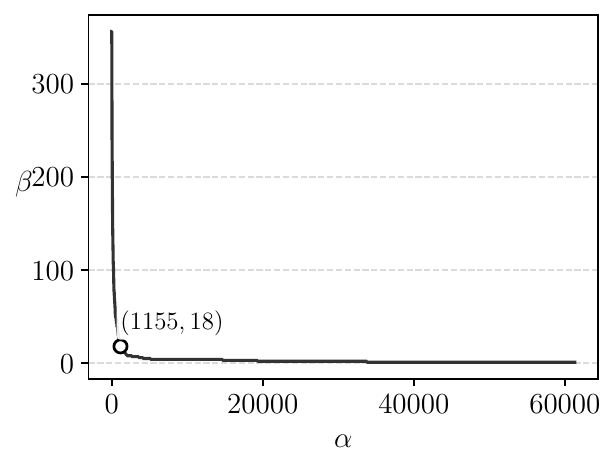}
        \caption{SA}
        \label{fig:alfa_beta_SA}
    \end{subfigure}\hfill
    \begin{subfigure}[t]{0.49\linewidth}
        \centering
        \includegraphics[height=\myheight,trim={0 0.15cm 0 0},clip]{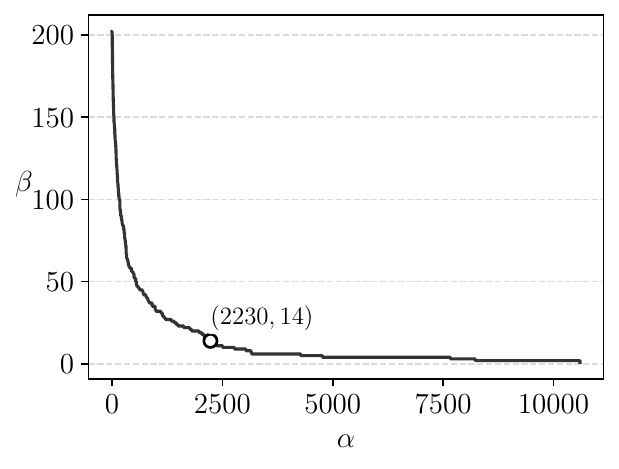}
        \caption{GP}
        \label{fig:alfa_beta_CP}
    \end{subfigure}

    \smallskip

    \begin{subfigure}[t]{0.49\linewidth}
        \centering
        \includegraphics[height=\myheight,trim={0 0.15cm 0 0},clip]{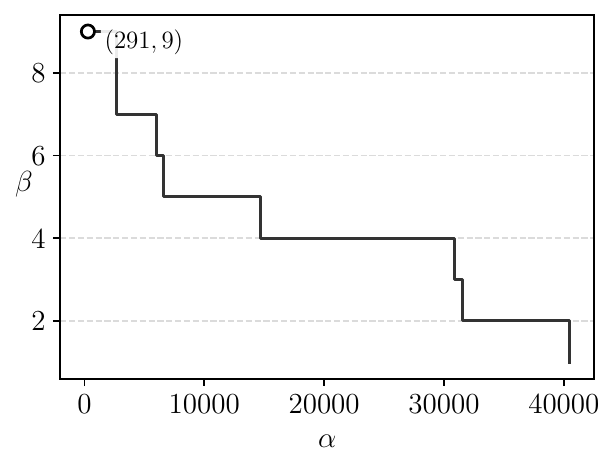}
        \caption{AX}
        \label{fig:alfa_beta_AX}
    \end{subfigure}\hfill
    \begin{subfigure}[t]{0.49\linewidth}
        \centering
        \includegraphics[height=\myheight,trim={0 0.15cm 0 0},clip]{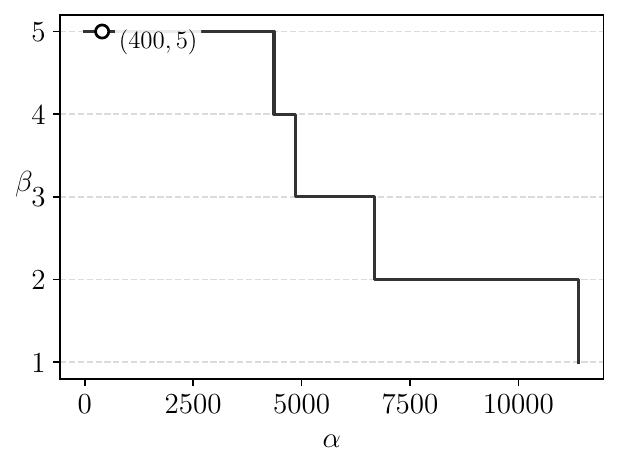}
        \caption{SE}
        \label{fig:alfa_beta_SE}
    \end{subfigure}

    \smallskip

    \begin{subfigure}[b]{\linewidth}
        \centering
        \includegraphics[height=\myheight,trim={0 0.15cm 0 0},clip]{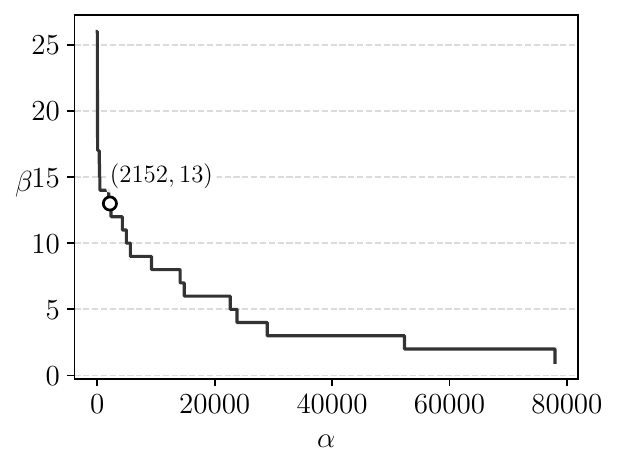}
        \caption{RH}
        \label{fig:alfa_beta_RH}
    \end{subfigure}

    \caption{The $(\alpha,\beta)$-curve of our hypergraphs. For each $\alpha=0,\ldots,\rank(H)$, the curve shows the smallest of $\beta$ such that the hypergraph is $(\alpha,\beta)$-nice. The dot marks the point $(\alpha,\beta)$ chosen by \hypermotivo. See \Cref{sec:experiments_datasets} for more details.}
    \label{fig:alfa_beta}
\end{figure}

\section{\hypermotivo}

\newcommand{\mysize}{\normalsize}

\begin{figure}
    \centering
    \begin{tikzpicture}[
      font=\small,
      >=Latex,
      node distance=4mm and 10mm,
      block/.style={
        draw=black,
        rounded corners,
        thick,
        fill=white,
        align=center,
        text width=5.6cm,
        minimum height=7.5mm,
        inner sep=3pt
      },
      motivo/.style={block, line width=0.9pt},
      hyper/.style={
        block,
        fill=gray!15,
        fill opacity=1,
        text opacity=1,
        execute at begin node=\bfseries
      },
      overlap/.style={
        block,
        line width=1.0pt,
        dashed,
        pattern=dots,
        pattern color=black
      },
      phase/.style={draw=black, dashed, rounded corners, thick, inner sep=6pt},
      lab/.style={font=\bfseries}
    ]
    
    \node[motivo] (color)
    {Choose random coloring $c:V\to[k]$};
    
    \node[hyper, below=of color] (split)
    {Choose $\alpha$ and compute $(H_{\le \alpha},H_{>\alpha})$\\
    \mysize \Cref{sub:splitting}
    };
    
    \node[motivo, below=20pt of split] (build_loop)
    {Split colorful treelet $(T,S)$ into $(T_1,S_1), (T_2, S_2)$};
    
    \node[hyper, below left=4mm and -22mm of build_loop] (build_low)
    {$\eta_{\le \alpha}(v)=\sum_{u\in N_{\le \alpha}(v)}\omega(u)$\\
    via neighbor enumeration on $H_{\le \alpha}$};
    
    \node[hyper, below right=4mm and -22mm of build_loop] (build_high)
    {$\eta_{> \alpha}(v)=\sum_{u\in N_{> \alpha}(v)}\omega(u)$\\
    via inclusion-exclusion on $H_{>\alpha}$};
    
    \coordinate (midB) at ($(build_low.south)!0.5!(build_high.south)$);
    
    \node[hyper] (overlapB) at ($(midB)+(0,-.65)$)
    {Overlap correction};
    
    \node[motivo, text width=8.6cm, below=18pt of overlapB] (dp)
    {Update counters\\
    $C(T,S,v)=C(T,S,v)+C(T_1,S_1,v)\cdot\eta(v)$};
    
    \node[motivo, below=18mm of dp] (untilacc)
    {Until $u$ is accepted};
    
    \node[motivo, below left=4mm and -22mm of untilacc] (samplow)
    {Sample $u$ from $N_{\le \alpha}(v)$ on $G_{\le \alpha}$};
    
    \node[hyper, below right=4mm and -22mm of untilacc] (samphigh)
    {Sample $u$ via hyperedges in $H_{> \alpha}$};
    
    \coordinate (midS) at ($(samplow.south)!0.5!(samphigh.south)$);
    
    \node[hyper] (overlapS) at ($(midS)+(0,-.65)$)
    {Accept/Reject $u$};
    
    \draw[->, thick, dashed]
      (overlapS.north) to
      node[midway, fill=white, inner sep=1pt] {\mysize reject}
      (untilacc.south);
    
    \node[hyper, text width=8.6cm, below=18pt of overlapS] (induced)
    {Compute $H[U]$ from sampled set $U$\\
    \mysize \Cref{lemma:subhypergraphlet_build}};
    
    \node[motivo, below=of induced] (final)
    {Update $H[U]$ counter};
    
    \coordinate (inTop) at ($(color.north)+(0,8mm)$);
    \node (input) at (inTop) {$H=(V,E)$};
    \draw[->, thick] ($(inTop)+(0,-0.2)$) -- (color.north);
    
    \draw[->, thick] (color) -- (split);
    \draw[->, thick] (split) -- (build_loop);
    
    \draw[->, thick] (build_loop.west) -| (build_low.north);
    \draw[->, thick] (build_loop.east) -| (build_high.north);
    
    \draw[->, thick] (build_low) |- node[midway, left] {\scriptsize $\eta_{\leq \alpha}$} (overlapB.west);
    \draw[->, thick] (build_high) |- node[midway, right] {\scriptsize $\eta_{> \alpha}$} (overlapB.east);
    
    \draw[->, thick] (overlapB) -- node[midway, left] {\scriptsize $\eta$} (dp);
    
    \draw[->, thick] (dp) -- (untilacc);
    
    \draw[->, thick] (untilacc.west) -| (samplow.north);
    \draw[->, thick] (untilacc.east) -| (samphigh.north);
    
    \draw[->, thick] (samplow) |- (overlapS.west);
    \draw[->, thick] (samphigh) |- (overlapS.east);
    
    \draw[->, thick] (overlapS) -- node[midway, right=5pt] {\scriptsize accept} (induced);
    \draw[->, thick] (induced) -- (final);
    
    \node[
      phase,
      fit=(build_loop)(build_low)(build_high)(overlapB)(dp),
      label={[anchor=west,xshift=-24mm,yshift=2mm,fill=white,inner sep=1pt]above:
      \footnotesize BUILD-UP (\Cref{subsec:hypermotivo-buildup}, \Cref{alg:alpha-build})}
    ] (buildbox) {};
    
    \coordinate (sampPadTop) at ($(untilacc.north)+(0,7mm)$);
    \node[
      phase,
      fit=(sampPadTop)(untilacc)(samplow)(samphigh)(overlapS)(induced)(final),
      label={[anchor=west,xshift=-24mm,yshift=2mm,fill=white,inner sep=1pt]above:
      \footnotesize SAMPLING (\Cref{subsec:hypermotivo-sample}, \Cref{algo:sample})}
    ] (sampbox) {};
    
    \node[
      draw=black,
      rounded corners,
      densely dotted,
      line width=0.8pt,
      inner sep=2pt,
      fit=(untilacc)(samplow)(samphigh)(overlapS),
      label={[lab,anchor=west,xshift=-30mm,yshift=2mm,fill=white,inner sep=1pt]above:
      \footnotesize Sample rooted $k$-colorful treelet}
    ] (subsamplebox) {};
    
    \end{tikzpicture}
    \caption{Structure of \hypermotivo. White blocks are in common with \motivo; gray ones are introduced in this work.}
    \label{fig:hypermotivo_diagram}
\end{figure}
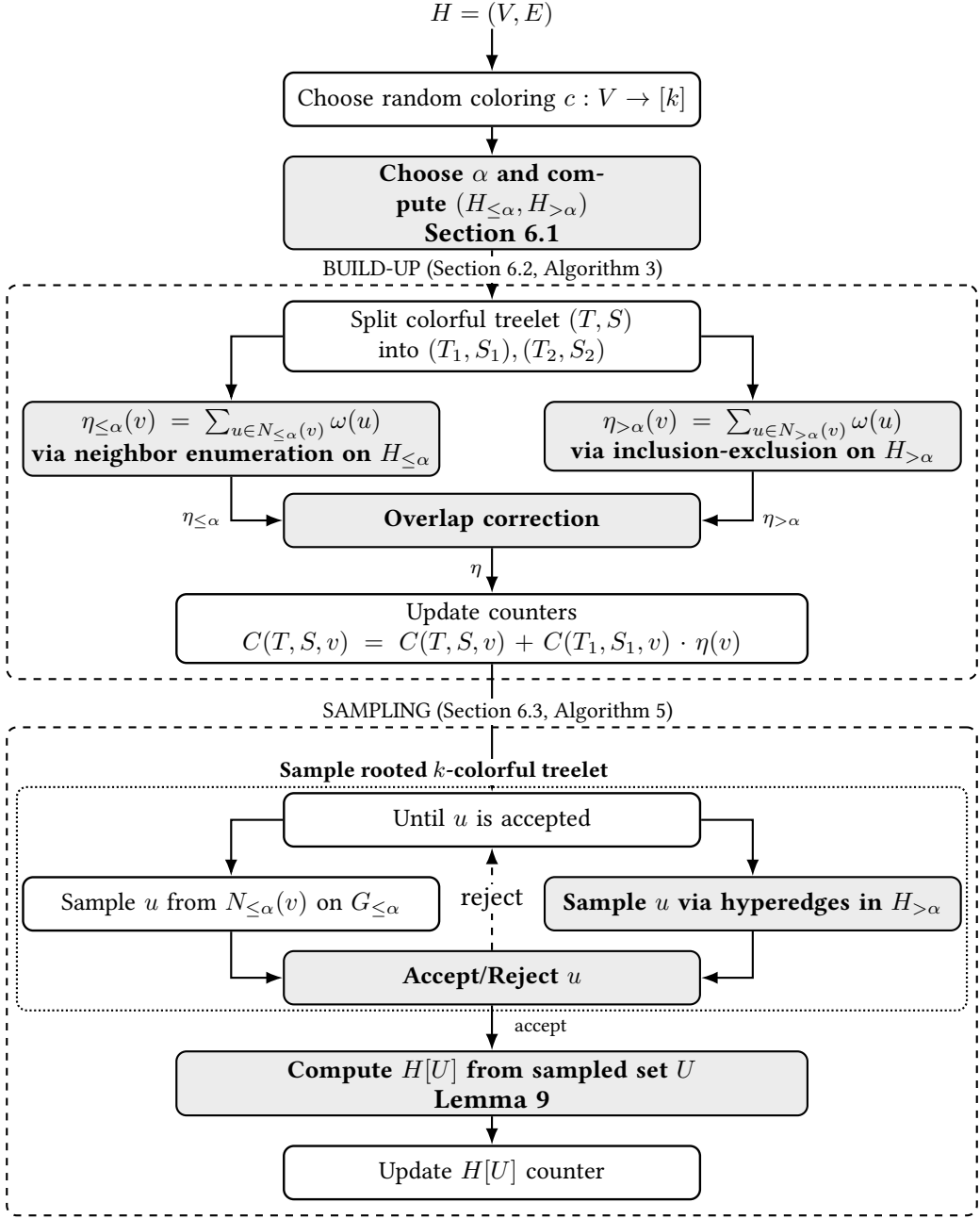

We describe \hypermotivo, our algorithm for sampling and approximately counting $k$-hypergraphlets through color coding.
The main idea behind \hypermotivo is to exploit the $(\alpha,\beta)$-niceness of a hypergraph (\Cref{sec4:abnice}) to bypass the quadratic barrier of the naive color-coding algorithm (\Cref{sec:quad_lim}).
Our main result is:
\begin{theorem}\label{thm:hypermotivo}
There exists a two-phase algorithm, \hypermotivo, with the following guarantees.
Given in input an integer $k \ge 2$, a value $\alpha \ge 0$, and an $(\alpha,\beta)$-nice hypergraph $H=(V,E)$, 
\begin{itemize}[leftmargin=1.5em]
    \item the \emph{build-up phase} takes time 
    \[2^{O(k)}\cdot O\left(2^\beta |V| + \alpha^k |E| + \alpha^2 \beta \size{H}\right)\,;\]
    \item the \emph{sampling phase} takes, on each invocation, expected time
    \[k^{O(k)} \cdot (\beta^2 + \ln{|V|})\]
    and yields a colorful $k$-hypergraphlet $H[U]$  uniformly at random.
\end{itemize}
\end{theorem}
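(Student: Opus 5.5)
The plan is to implement the dynamic program of \Cref{eq:treelet_counter} on $H$ directly, never materializing $\Gaifman(H)$. Its only expensive ingredient is the neighbor sum
\[
\eta(v)=\sum_{u\in N(v)}\omega(u), \qquad \omega(u)=\C(T_2,S_2,u),
\]
computed for every $v$ and every pair $(T_2,S_2)$. There are $2^{O(k)}$ such pairs, so it suffices to compute one vector $(\eta(v))_{v\in V}$ in time $O(2^\beta|V|+\alpha^2\beta\size{H})$. The term $\alpha^k|E|$ will be paid separately, by the sampling-side tables and by the final hypergraphlet reconstruction via \Cref{lemma:subhypergraphlet_build}.

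Using the $\alpha$-split, we write
\[
N(v)=N_{\le\alpha}(v)\cup N_{>\alpha}(v),
\qquad
\eta(v)=\eta_{\le\alpha}(v)+\eta_{>\alpha}(v)-\sum_{u\in N_{\le\alpha}(v)\cap N_{>\alpha}(v)}\omega(u).
\]
Each of the three terms is handled as follows.

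\begin{itemize}
\item \textbf{Lower part.} We build $G_{\le\alpha}=\Gaifman(H_{\le\alpha})$ once, in time $O(\sum_{e\in E_{\le\alpha}}|e|^2)=O(\alpha\size{H})$. We then sum $\omega$ over its adjacency lists.
\item \textbf{Upper part.} Here we use inclusion--exclusion over the at most $\beta$ upper edges of $E_{>\alpha}(v)$:
\[
\eta_{>\alpha}(v)=\sum_{\emptyset\neq F\subseteq E_{>\alpha}(v)}(-1)^{|F|+1}\,\omega\Bigl(\bigcap F\Bigr),
\]
where $\omega(X)=\sum_{u\in X}\omega(u)$. The obstacle is computing $\omega(\bigcap F)$ fast. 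Every vertex $u$ contributes exactly to the sets $F\subseteq E_{>\alpha}(u)$, so we group vertices by their upper type $E_{>\alpha}(u)$. For each distinct type we store the total weight, which costs $O(\beta)$ per vertex, i.e.\ $O(\beta\size{H})$ overall, using hashing of sorted edge-index tuples. A superset-sum (zeta transform) restricted to $v$'s type then costs $2^\beta$ per vertex. Concretely, $\omega(\bigcap F)=\sum_{\text{types }\tau\supseteq F}W(\tau)$, and it suffices to evaluate this for $F\subseteq E_{>\alpha}(v)$. I would organise it by iterating over edges $e\in E_{>\alpha}$ and their vertices, accumulating $W$ keyed by $F=\tau\cap E_{>\alpha}(v)$; the whole bound of $2^\beta|V|$ plus $\beta\size{H}$ for the accumulation is the part I expect to be delicate.
\item \textbf{Overlap.} We subtract the weight of vertices $u$ that lie both in some lower edge with $v$ and in some upper edge with $v$. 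For each $u\in N_{\le\alpha}(v)$ (at most $\alpha\cdot d_{\le\alpha}(v)$ candidates, i.e.\ $O(\alpha\size{H})$ pairs in total), we test whether $E_{>\alpha}(u)\cap E_{>\alpha}(v)\neq\emptyset$ in $O(\beta)$ time via sorted type lists. This is where the $\alpha^2\beta\size{H}$ factor comes from, after accounting for duplicate pairs generated by the lower edges.
\end{itemize}

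Correctness of the counters then follows verbatim from \Cref{eq:treelet_counter}, since we compute exactly the same quantities.

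For sampling, each step of \motivo's recursive procedure must draw $u\in N(v)$ with probability proportional to $\omega(u)$; everything else is unchanged, and the final acceptance by inverse spanning-tree count keeps the output uniform over colorful $k$-hypergraphlets.

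\begin{itemize}
\item We first choose the lower part with probability $\eta_{\le\alpha}(v)/(\eta_{\le\alpha}(v)+\eta_{>\alpha}(v))$, and otherwise the upper part.
\item \textbf{Lower sampling.} We use prefix sums on $G_{\le\alpha}$ adjacency lists, taking $O(\log|V|)$ time per draw.
\item \textbf{Upper sampling.} We pick an upper edge $e\in E_{>\alpha}(v)$ with probability proportional to $\omega(e)$, then $u\in e$ proportionally to $\omega(u)$ via prefix sums on $e$. Since $u$ is then overcounted $|E_{>\alpha}(u)\cap E_{>\alpha}(v)|$ times, we accept with the inverse of that multiplicity, which is computable in $O(\beta)$ time.
\item \textbf{Overlap rejection.} We also reject draws landing in the overlap from the second branch, checked in $O(\beta)$ time against the lower neighbourhood via hashing, so that overlap vertices are not counted twice.
\end{itemize}

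The acceptance probability of the upper branch is at least $1/\beta$ per attempt, giving the expected $O(\beta^2)$ overhead per step. With $k$ steps and the $k^{O(k)}$ spanning-tree rejection, the sampling bound is $k^{O(k)}(\beta^2+\ln|V|)$. The main obstacle I expect is making the inclusion--exclusion aggregation genuinely run in $2^\beta|V|$ total rather than $2^\beta$ per incident pair.
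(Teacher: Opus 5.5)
The gap is the step you flag yourself: computing $\eta_{>\alpha}$ in time $O(2^\beta|V|)$. The rest of your plan follows the paper. That includes the decomposition $\eta=\eta_{\le\alpha}+\eta_{>\alpha}-(\text{overlap})$, the lower-part enumeration on $\Gaifman(H_{\le\alpha})$, the $O(\beta)$ sorted-type overlap test, and the sampling scheme. Your variant of rejecting overlap vertices only when they come from the upper branch is also correct: each $u\in N(v)$ then survives a round with probability $\omega(u)/(W_{\le\alpha}(v)+W_{>\alpha}(v))$.

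The organisation you sketch for $\eta_{>\alpha}$ is this: for each $v$, scan the vertices $u$ of the edges in $E_{>\alpha}(v)$, accumulate $\omega(u)$ under the key $E_{>\alpha}(u)\cap E_{>\alpha}(v)$, then zeta-transform over subsets of $v$'s type. That is a per-$v$ enumeration of the upper neighbourhood, so it costs $\sum_{e\in E_{>\alpha}}|e|^2$. This is exactly the quadratic quantity the theorem must avoid, since upper edges may have size $\Theta(|V|)$. Grouping vertices by type does not rescue it. The types $\tau\supseteq F$ are not confined to $v$'s type, and projecting them onto $E_{>\alpha}(v)$ again requires visiting the neighbours.

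The missing idea is that $\omega(\bigcap X)$ depends only on the edge set $X$, not on $v$. So it can be computed once, globally, by pushing rather than pulling:
\begin{itemize}
\item Keep a single dictionary $t$ keyed by sets of upper edges.
\item For every vertex $u$, add $\omega(u)$ to $t[X]$ for every $X\subseteq E_{>\alpha}(u)$. Then $t[X]=\sum_{u:\,X\subseteq E_{>\alpha}(u)}\omega(u)=\omega(\bigcap X)$ for every $X$.
\item Each $v$ obtains $\eta_{>\alpha}(v)=\sum_{\emptyset\ne X\subseteq E_{>\alpha}(v)}(-1)^{|X|+1}t[X]-\omega(v)$ with $2^{d_{>\alpha}(v)}$ lookups. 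The $-\omega(v)$ removes $v$ itself, which lies in every $\bigcap X$.
\end{itemize}
Both passes cost $\sum_{v}2^{d_{>\alpha}(v)}\le 2^\beta|V|$; this is the paper's \algoNWIE and \Cref{lemma:nwiecomplexity}. Without this or an equivalent argument, the $2^\beta|V|$ term of the build-up bound is not established, and hence neither is the first item of the theorem.
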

\Giacomo{A high-level overview of the algorithm is shown in
\Cref{fig:hypermotivo_diagram}.}
Note that, instead of sampling $H[U]$ uniformly at random, one can sample it with probability proportional to the number of its spanning trees $\sigma(H[U])$, while also computing $\sigma(H[U])$ itself.
The $k^{O(k)}$ factor in the sampling time decreases to $2^{O(k)}$, and one can estimate the hypergraphlets counts in the same way as \motivo, by adding $\frac{1}{\sigma(H[U])}$ to the count of hypergraphlets isomorphic to $H[U]$.
The same statistical guarantees of \motivo apply, including concentration bounds that depend on the number of $k$-hypergraphlets of $H$; see~\cite{Bressan&2018b}.
Note that, in our implementation, we avoid the $\alpha^k |E|$ part of the build-up phase.
In theory, this makes the sampling phase slower; in practice, it is still almost as fast as \motivo's.
See \Cref{sec:experiments}.
Finally, let us observe that \Cref{thm:hypermotivo} is obtained by a very careful combination of several parts---from computing counters, to sampling neighbors, to computing $H[U]$---each of which exploits $(\alpha,\beta)$-niceness in a very specific way.
We find this quite interesting.

The remainder of the section is organized as follows. \Cref{sub:splitting} describes how \hypermotivo computes an $\alpha$-split of the input hypergraph. \Cref{subsec:hypermotivo-buildup} describes the build-up phase of \hypermotivo, and \Cref{subsec:hypermotivo-sample} describes its sampling phase.

\subsection{Splitting the hypergraph}\label{sub:splitting}
This section discusses the first step of \hypermotivo: given the hypergraph $H$, compute an $\alpha$-split $(H_{\leq\alpha}, H_{>\alpha})$.
\New{In fact, we give a $O(\size{H})$-time algorithm that finds a ``good'' value of $\alpha$, computes the corresponding split $(H_{\leq\alpha}, H_{>\alpha})$, and moreover provides an asymptotic estimate of the number of operations performed by our algorithm when given in input $(H_{\leq\alpha}, H_{>\alpha})$.
In~\cref{sec:experiments} we also show that this choice of $\alpha$ is rather robust to variations in $\alpha$ itself (and, therefore, our method to choose $\alpha$ does not ``overfit'').}

It is straightforward that, given $\alpha$, one can compute $(H_{\leq\alpha}, H_{>\alpha})$ in time $O(\size{H})$ by simply placing each edge $e$ of $H$ in $H_{\leq\alpha}$ if $|e|\le \alpha$ and in $H_{>\alpha}$ otherwise.
We shall thus focus on the choice of $\alpha$.
Roughly speaking, we seek to choose $\alpha$ so as to minimize the running time of \hypermotivo's build-up phase.
That running time is dominated by the computation of the counters $C(\cdot,\cdot,\cdot)$, which is carried out independently, and by two distinct algorithms, on $H_{\leq\alpha}$ and on $H_{>\alpha}$.
As explained later in \Cref{subsec:hypermotivo-buildup}, the times $T_{\le \alpha}$ and $T_{> \alpha}$ spent by the two algorithms can be bound as follows (we use $x \lesssim y$ to denote $x = O(y)$):
\begin{align}
    T_{\le \alpha} &\lesssim \sum_{e \in E_{\le \alpha}} |e|^2 \\
    T_{> \alpha} &\lesssim \sum_{v \in V} 2^{d_{> \alpha}(v)}
\end{align}
If $H$ is $(\alpha,\beta)$-nice, it follows immediately that:
\begin{align}\label{eq:T_bound}
    T_{\le \alpha} + T_{> \alpha} \lesssim \alpha^2 |E| + 2^{\beta} |V|
\end{align}
Therefore, one sensible way to choose $\alpha$ is to (1) compute all pairs $(\alpha,\beta)$ such that $H$ is $(\alpha,\beta)$-nice (i.e., the curve shown in \Cref{sec4:abnice}), and (2) choose the pair that minimizes the right-hand side of \Cref{eq:T_bound}.
To this end we proceed as follows.
First, we sort the edges of $H$ in nondecreasing order of size, $e_1,\ldots,e_{|E|}$.
We also let $\alpha_0 = 0$ and $\beta_0 = \Delta$, where $\Delta$ is the maximum degree of $H$.
Clearly, $H$ is $(\alpha_0,\beta_0)$-nice.
Then, for $i=1,\ldots,|E|$, we let $\alpha_i=|e_i|$, and we let $\beta_i$ be the maximum degree of $H_{> \alpha_i}$.
Note that, to compute $\beta_i$, it is sufficient to keep track of the degrees in $H_{> \alpha_i}$, and take their maximum through a max-heap.
To update those degrees from $H_{> \alpha_{i-1}}$ to $H_{> \alpha_i}$, one simply decreases by $1$ the degree of every vertex in $e_i$.
This yields an update time of $O(|e_i| \log |V|)$ for computing $\alpha_i,\beta_i$ from $\alpha_{i-1},\beta_{i-1}$.
Once we have the complete curve $(\alpha_i,\beta_i)_{i=0,\ldots,|E|}$, it is straightforward to choose $i$ so that $(\alpha_i,\beta_i)$ minimizes the right-hand side of \Cref{eq:T_bound}.
Overall, we have proved:
\begin{lemma}\label{lem:compute_split_1}
Given $H$, in time $O(\size{H} \log |V|)$ one can compute a pair $(\alpha,\beta)$, and the corresponding $\alpha$-split $(H_{\leq\alpha}, H_{>\alpha})$, such that $H$ is $(\alpha,\beta)$-nice and the right-hand side of \Cref{eq:T_bound} is minimized.
\end{lemma}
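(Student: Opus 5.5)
The proof follows the sweep already described in the text, so the plan is to make it rigorous and account for its cost. There are three steps: verify correctness of the curve, bound the running time, and show that the chosen index minimizes the right-hand side of \Cref{eq:T_bound}.

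\textbf{Step 1: sorting and initialization.} Sort the edges by size in $O(|E| + \rank(H))$ time using counting sort on sizes, since sizes are at most $|V|$; this is $O(\size{H})$, and even a comparison sort costs only $O(|E|\log|E|) \subseteq O(\size{H}\log|V|)$ after discarding duplicate edges, because $|E| \le \size{H}$. Compute all degrees $d(v)$ in $O(\size{H})$ and build a max-heap on them in $O(|V|)$. This certifies $(\alpha_0,\beta_0)=(0,\Delta)$, since $H_{>0}=H$.

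\textbf{Step 2: the sweep.} Prove by induction on $i$ that after processing $e_1,\ldots,e_i$ the heap stores exactly $d_{>\alpha_i}(v)$ for every $v$. There is one subtlety: when several edges share the same size, $H_{>\alpha_i}$ must exclude \emph{all} edges of size $\alpha_i$. I handle this by recording $\beta$ only at the last index of each block of equal sizes, or equivalently by defining $\beta_i$ as the degree after removing $e_1,\ldots,e_i$ and noting that the value at the end of a block is the correct one; intermediate indices give pairs that are still valid (nice) but are dominated. Each removal performs $|e_i|$ decrease-key operations at $O(\log|V|)$ each, for a total of $\sum_i |e_i|\log|V| = O(\size{H}\log|V|)$. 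Reading the maximum after each block costs $O(1)$.

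\textbf{Step 3: optimality and output.} Every pair on the curve is nice by construction. Moreover, for any $\alpha$, the minimal $\beta$ making $H$ $(\alpha,\beta)$-nice equals $\Delta(H_{>\alpha})$. Since $H_{>\alpha}$ only changes at values $\alpha\in\{0\}\cup\{|e|\}$, and within each interval between consecutive sizes the objective $\alpha^2|E|+2^\beta|V|$ is nondecreasing in $\alpha$, the minimum is attained at a recorded pair. Evaluating the objective at the $O(|E|)$ recorded points takes $O(|E|)$ arithmetic operations. Because $2^\beta$ may exceed the word size, I compare the quantities via $\log$ values, or cap $\beta$ at $O(\log\size{H})$ beyond which the term is certainly not minimal; this detail is the only real obstacle and is routine. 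Finally, materialize the split for the chosen $\alpha$ in $O(\size{H})$ time, giving the total bound $O(\size{H}\log|V|)$.
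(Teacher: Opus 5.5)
Your proposal is correct and is essentially the paper's proof: sort edges by size, sweep through them while decrementing, in a max-heap, the upper-part degrees of the vertices of each removed edge at $O(\log|V|)$ per incidence, and output the recorded pair minimizing $\alpha^2|E|+2^\beta|V|$. Your extra care fills in details the paper leaves implicit: you record $\beta$ only at the end of each block of equal-size edges, a tie case the paper's per-edge description glosses over, and you argue that the optimum over all $\alpha\ge 0$ is attained at $0$ or at an edge size. Only your aside on comparison sorting is loosely justified ($|E|\le\size{H}$ alone gives a $\log\size{H}$ rather than a $\log|V|$ factor), but counting sort makes this moot.
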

The simple algorithm behind \Cref{lem:compute_split_1} can actually be refined in order to compute a pair $(\alpha,\beta)$ that minimizes the more accurate estimate of the running time:
\begin{align}\label{eq:T_bound_2}
    T_{\le \alpha} + T_{> \alpha} \lesssim \sum_{e \in E_{\le \alpha}} |e|^2+\sum_{v \in V} 2^{d_{> \alpha}(v)}
\end{align}
Indeed, we can prove:
\begin{lemma}\label{lem:compute_split_2}
Given $H$, in time $O(\size{H})$ one can compute a pair $(\alpha,\beta)$, and the corresponding $\alpha$-split $(H_{\leq\alpha}, H_{>\alpha})$, such that $H$ is $(\alpha,\beta)$-nice and the right-hand side of \Cref{eq:T_bound_2} is minimized.
\end{lemma}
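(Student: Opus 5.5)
We remove the two sources of $\log$ factors in the proof of \Cref{lem:compute_split_1}: sorting the edges and the max-heap. Both the cost function and the quantity $\beta$ can be maintained incrementally in amortized constant time per vertex--edge incidence.

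First, sort the edges by size with counting sort. Sizes lie in $\{1,\ldots,|V|\}$, so this takes time $O(|E|+|V|)=O(\size{H})$. Only the distinct sizes $s_1<\dots<s_m$ (together with $\alpha=0$) need to be considered as candidate values of $\alpha$. For any $\alpha$ strictly between two consecutive sizes, the split coincides with the one at the smaller size. Let
\[
F(\alpha)=\sum_{e \in E_{\le \alpha}} |e|^2+\sum_{v \in V} 2^{d_{> \alpha}(v)}
\]
be the right-hand side of \Cref{eq:T_bound_2}. At $\alpha=0$ we compute $d_{>0}(v)=d(v)$ for all $v$ in time $O(\size{H})$.

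To evaluate $F$ at every candidate, move from one candidate to the next by transferring every edge $e$ of the new size class from the upper part to the lower part. Each such transfer does two things:
\begin{itemize}
\item it adds $|e|^2$ to the first sum;
\item for each $v\in e$ it decrements $d_{>\alpha}(v)$, which changes the second sum by $-2^{d_{>\alpha}(v)-1}$.
\end{itemize}
Each update costs $O(1)$. The powers $2^{j}$ for $j\le \Delta$ can be precomputed, or we can simply compare the costs in a representation where exponentials are fine. If word-size is a concern, note that we only need $\arg\min F$. We can cap each $2^{d}$ term at a sentinel once it exceeds $\size{H}^2$, since such values cannot be minimal compared with the naive bound at $\alpha=\rank(H)$, where $F\le\size{H}^2$. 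With this, all arithmetic stays within $O(\log n)$-bit words. The total work over the whole sweep is $\sum_e |e| = O(\size{H})$, and we record the $\alpha$ attaining the minimum of $F$.

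The main obstacle is computing $\beta=\Delta(H_{>\alpha})$ without a heap. The claim only requires it for the single chosen $\alpha$, not along the entire curve. So after the sweep we recompute the degrees in $H_{>\alpha}$ from scratch and take their maximum in time $O(\size{H})$. We then build the split itself in $O(\size{H})$ as described above. By definition $H$ is $(\alpha,\beta)$-nice for this $\beta$, and $F$ is minimized over all $\alpha$ by construction.

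If one insists on the full curve, $\beta_i$ can still be maintained in linear total time. Keep a bucket count of vertices per degree value plus a pointer to the current maximum. Degrees only decrease, so the pointer only moves downward, at most $\Delta\le\size{H}$ steps in total.
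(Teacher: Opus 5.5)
Your proof is correct and is essentially the paper's argument: a counting-sort-based sweep over the distinct edge sizes that maintains both sums of \Cref{eq:T_bound_2} with $O(1)$ work per vertex--edge incidence. The differences are bookkeeping, plus two details you handle that the paper glosses over. The paper sweeps from $\rank(H)$ downward and reads $d_{>\alpha}(v)$ off positions in size-sorted incidence lists, so the maximum degree is monotone non-decreasing and is maintained on the fly; you instead sweep upward, decrement degrees per incidence, and recover $\beta$ afterwards or with bucket counts. The two extra details are your capping of the $2^{d}$ terms to respect the word size and your inclusion of the candidate $\alpha=0$.
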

\noindent The algorithm of \Cref{lem:compute_split_2} is the one actually used by \hypermotivo.
The rest of this section describes this algorithm, proving \Cref{lem:compute_split_2}. At a high level, the procedure evaluates all candidate thresholds $s_i$ by sweeping them from larger to smaller values and maintaining on the fly the quantities appearing in the right-hand side of \Cref{eq:T_bound_2}. After a linear-time preprocessing phase, each new threshold is handled by a constant amount of work, so that the best $(\alpha,\beta)$-split can be found in overall $O(\size{H})$ time.

We begin by constructing, for every vertex $v \in V$, the list $I_v$ of hyperedges incident with $v$, sorted in non-increasing order of hyperedge size.
Thus $I_v[0]$ is the largest hyperedge incident with $v$, $I_v[1]$ is the second largest, and so on. For each vertex–hyperedge incidence $(v,e)$, we let $s = |e|$ be the size of $e$ and let $p$ be the position of $e$ in $I_v$ (with $p=0$ for $I_v[0]$). We then create a triple $(v,s,p)$ for every incidence and collect all such triples into a single list $S$. Next, we sort $S$ by non-increasing $s$ and, for equal $s$, by non-increasing $p$, and process them in this order. Let $s_1 < s_2 < \dots < s_{m}$ be the distinct hyperedge sizes appearing in $H$. For each $i \in [m]$, we denote by $S_i$ the (maximal) contiguous sublist of $S$ consisting of all triples $(v,s,p)$ with $s = s_i$. Since $S$ is sorted by non-increasing $s$, these blocks appear in $S$ in the order $S_m, \dots, S_1$. We now show how, by sweeping through all the blocks $S_i$ in this order, we can compute the right-hand side of \Cref{eq:T_bound_2} for every candidate threshold $s_i$ and its corresponding $s_i$-split $(H_{\leq s_i}, H_{> s_i})$.

We initialize the procedure at the largest candidate value, $s_m$. In this configuration, all hyperedges belong to the lower part, so
$T_{\le s_m} = \sum_{e \in E}|e|^2$, while $T_{> s_m} = 0$. Moreover, we precompute, for each $i \in [m]$, the number $c_i = |\{e \in E : |e| = s_i\}|$. Now, by iterating on $S_m, \dots, S_1$ in order, it is easy to see that after processing block $S_m$ we are at threshold $s_m$, and for every $i = m-1, \dots, 1$ we can compute $T_{\le s_i}$ from $T_{\le s_{i-1}}$ as $T_{\le s_i} = T_{\leq s_{i-1}} - c_i \cdot s_i^2$. For the upper part, fix $i \in [m]$ and $v \in V$. Among all incidences $(v,e)$ with $|e| = s_i$, let $p_i(v)$ be the maximum position of such an edge in $I_v$. The key observation is that, since $I_v$ is sorted by non-increasing edge size, the entries $I_v[0], \dots, I_v[p_i(v)]$ are exactly the hyperedges incident to $v$ that belong to the upper part at threshold $s_i$, and therefore $d_{> s_i}(v) = p_i(v) + 1$. Within block $S_i$, the triples $(v,s_i,p)$ corresponding to a fixed vertex $v$ appear in non-increasing order of $p$, and hence the first such triple we encounter has $p = p_i(v)$. Thus, to update $T_{> s_i}$ we just need to maintain for each vertex $v$ its current degree. When we process the first triple $(v,s_i,p)$ for vertex $v$ in $S_i$, we set its degree to $p+1$ and update its contribution in the sum accordingly.

At the end of this process, we have computed the right-hand side of \Cref{eq:T_bound_2} for every candidate threshold $s_i$. It is trivial to see that we can also keep track of the $s_i$ that minimizes the equation. At the same time, for each $s_i$ we maintain the maximum degree $\beta(s_i) = \max_{v \in V} d_{> s_i}(v) = \max_{v \in V} p_i(v) + 1$. Let $\alpha^\star$ be the threshold chosen by this procedure, and let $\beta^\star = \beta(\alpha^\star)$. By construction, the corresponding split $(H_{\le \alpha^\star}, H_{> \alpha^\star})$ satisfies $\Delta(H_{> \alpha^\star}) \le \beta^\star$, i.e., $H$ is $(\alpha^\star,\beta^\star)$-nice, and the quantity in \Cref{eq:T_bound_2} is minimized at $(\alpha^\star,\beta^\star)$. Since sorting $S$ takes $O(\size{H})$ time via counting sort, and the sweep as well, this proves \Cref{lem:compute_split_2}.

\ignore{
In our implementation, we actually minimize a weighted variant of this cost function to better balance the contributions of the lower and upper parts; this choice is described more in detail in \Cref{sec:finding_alpha}.
}

\subsection{The build-up phase}\label{subsec:hypermotivo-buildup}
The role of the build-up phase is, given $H$ and its coloring, to compute the counters $\C(T,S,v)$ of \Cref{eq:treelet_counter}.
By \Cref{thm:quadratic}, there is little hope of doing so in sub-quadratic time \emph{as a function solely of} $|V|$ and $|e|$ for $e \in E$.
The goal of this section is to show how, if $H$ is $(\alpha,\beta)$-nice, then we can perform the build-up phase in time near-linear in $\size{H}$, times a function of $\alpha,\beta$.
Let $(H_{\le \alpha},H_{> \alpha})$ be the $\alpha$-split of $H$; see \Cref{sub:splitting}.
The idea is to compute counters for $H_{\le \alpha}$ and $H_{> \alpha}$ separately, and postprocess them in order to get the counters of $H$.
To this end, observe how, by \Cref{eq:treelet_counter}, $\C(T,S,v)$ can be written as a sum $S_1,S_2,\ldots$ of terms in the form:
\begin{align}
    \C(T_1,S_1,v) \cdot \C(T_2,S_2,N(v))
\end{align}
where $N(v)$ is the set of vertices adjacent to $v$ in $H$, and
\begin{align}
    \C(T_2,S_2,N(v)) = \sum_{u \in N(v)} \C(T_2,S_2,u) \,.
\end{align}
It is immediate to see that, if one can compute $\C(T_2,S_2,N(v))$ in time $t$, then one can compute all counters $\C(T,S,v)$ in time $2^{O(k)} \cdot t$.

By abstracting details away, one reduces to the following problem.
A \emph{weighted hypergraph} is a hypergraph $H=(V,E)$ with a vertex weighting $w : V \to \N_0$.
Extend $w$ to subsets of $V$ in the natural way.
The \emph{neighbor weight} of a vertex $v$ is $\eta(v) = w(N(v))$.
The \emph{Neighbor Weight Problem} (\probNW) asks, given $(H,w)$, to compute $(\eta(v))_{v \in V}$.
Clearly, computing $\C(T_2,S_2,N(v))$ for all $v \in V$ reduces, in time $O(\size{H})$, to \probNW on $(H,w)$ with weights $w(v)=\C(T_2,S_2,v)$.
To prove the first item of \Cref{thm:hypermotivo}, we shall thus prove how one can solve \probNW in the time claimed by that item.

\ignore{
\hypermotivo  of $C(T,[k],v)$ colorful $k$-treelets that are isomorphic to $T$ and rooted in $v$. Also, the underlying mechanism is based upon the same dynamic programming approach and \Cref{eq:treelet_counter}.
However, \hypermotivo processes separately the hypergraphs $H_{\le \alpha}$ and $H_{> \alpha}$ of the $\alpha$-split, recovering the correct counters for $H$ at the end.
In the lower part, where the size of the hyperedges is bounded by $\alpha$, \hypermotivo simply executes the naive algorithm (see \Cref{algo:Naive}). 
In the upper part, where the maximum degree is bounded by $\beta$, it instead executes a procedure based on the inclusion–exclusion principle.
Before introducing the algorithm for the build-up phase, we briefly discuss and motivate this technique. 
}

\subsubsection{Solving \probNW on $(\alpha,\beta)$-nice hypergraphs}
Let $(H,w)$ be a \allowbreak \ weighted hypergraph, let $(H_{\le\alpha},H_{>\alpha})$ be the $\alpha$-split of $H$, and suppose $H$ is $(\alpha,\beta)$-nice.
We shall adopt the following strategy.
First, we solve \probNW on the weighted hypergraph $(H_{\le\alpha},w)$ and obtain the neighbor weights $\eta_{\le\alpha}(v)$.
Second, we solve \probNW on the weighted hypergraph $(H_{>\alpha},w)$ and obtain the neighbor weights $\eta_{>\alpha}(v)$.
Third, for each vertex we sum the counters and obtain 
\begin{align}
    \tilde\eta(v) = \eta_{\le\alpha}(v)+\eta_{>\alpha}(v) = \sum_{u \in N_{\le\alpha}(v)}w(u) + \sum_{u \in N_{>\alpha}(v)}w(u)
\end{align}
Clearly we may have $\eta(v) < \tilde\eta(v)$, since a vertex $u \in N(v)$ can appear in both $N_{\le\alpha}(v)$ and $N_{>\alpha}(v)$.
The third step thus entails subtracting $w(u)$ from $\tilde\eta(v)$ for all those $u$ to retrieve $\eta(v)$.
The rest of the subsection describes the three steps above.

\subsubsection{Solving \probNW on $H_{\leq \alpha}$}
For $H_{\le \alpha}$, we simply compute $G_{\le \alpha} = \Gaifman(H_{\le\alpha})$, and then compute $\eta_{\le\alpha}(v)$ by explicitly enumerating the neighborhood $N_{\le\alpha}(v)$ of $v$ in $G_{\le \alpha}$.
As $\rank(H_{\le\alpha}) \le \alpha$, computing $G_{\le \alpha}$ takes time $O(|V|+\alpha^2 |E_{\le \alpha}|) \le O(|V|+\alpha^2 |E|)$.
Enumerating the neighborhoods obviously requires linear time.
This gives a total time of $O(|V|+\alpha^2 |E|)$.

\subsubsection{Solving \probNW on $H_{>\alpha}$}
To ease the notation we describe the algorithm on a generic weighted hypergraph $(H,w)$.
The first observation is that, obviously, every $v \in V$ satisfies:
\begin{align}
    \eta(v) = w\Big(\bigcup E(v)\Big) - w(v)\,.
\end{align}
Hence, solving \probNW reduces to computing $w(\bigcup E(v))$ for each $v$. 
As $|E(v)| \le \beta$, we resort to compute $w(\bigcup E(v))$ via \emph{inclusion-exclusion}, as follows.
Let $E(v)=\{e_1, \dots, e_\ell\}$. Then, the inclusion-exclusion principle says that:
\begin{equation}\label{eq:weighted_incl_excl}
    w\Big(\bigcup E(v)\Big) = w\Big(\bigcup_{i \in [\ell]} e_i\Big) = \sum_{\emptyset \neq I \subseteq [\ell]}(-1)^{|I| + 1} \,w\Big(\bigcap_{i \in I} e_i\Big)\,.
\end{equation}
Thus, it suffices to (1) compute $w\left(\bigcap_{i \in I} e_i\right)$ for all $I \subseteq [\ell]$ for all $E(v)$, and (2) apply \Cref{eq:weighted_incl_excl}.
This is what \Cref{algo:NW-IE} does.

\begin{algorithm}[h]
\LinesNumbered
\caption{\algoNWIE}
\label{algo:NW-IE}
{\fontsizealgo
    \Input{A weighted hypergraph $(H=(V, E),w)$}
    \Output{The vector $\eta$ of neighbor weights}
    \BlankLine
    $t \gets$ empty dictionary with default value $0$\;
    \ForEach{$v \in V$}{
        \ForEach{$X \subseteq E(v)$}{
            $t[X] \gets t[X] + w(v)$\; \nllabel{algo:nw:t_increm}
        }
    }
    $\eta \gets$ empty dictionary with default value $0$\; \nllabel{algo:nw:eta_init}
    \ForEach{$v \in V$}{
        \BlankLine
        \ForEach{$\emptyset \ne X \subseteq E(v)$}{
            $\eta(v) \gets \eta(v) + (-1)^{|X|+1} t[X]$\; \nllabel{algo:nw:eta_increm}
        }
        $\eta(v) \gets \eta(v) - w(v)$\;
    }
    \Return $\eta$\;
}
\end{algorithm}


We prove:
\begin{lemma}\label{lemma:nwiecomplexity}\algoNWIE solves \probNW in time $O(2^{\Delta(H)}\cdot |V|)$.
\end{lemma}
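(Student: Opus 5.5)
The proof has two parts: correctness of \algoNWIE, and a bound on its running time. I will handle them in that order.

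\emph{Correctness.} The key invariant is that after the first loop, for every nonempty set $X$ of edges,
\[
t[X]=\sum_{u\in V:\,X\subseteq E(u)} w(u)=w\Big(\bigcap_{e\in X} e\Big).
\]
Indeed, a vertex $u$ contributes $w(u)$ to $t[X]$ exactly when $X\subseteq E(u)$, that is, when $u\in\bigcap_{e\in X}e$. Conversely, every key $X$ that the second loop queries is a subset of some $E(v)$. For such an $X$, any vertex $u$ in the intersection has $X\subseteq E(u)$, so it was counted in the first loop. Hence the dictionary correctly holds the value $w(\bigcap X)$ for every queried key, even though it was only filled on subsets of types.

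Plugging this into the second loop, $\eta(v)$ is set to
\[
\sum_{\emptyset\ne X\subseteq E(v)}(-1)^{|X|+1}w\Big(\bigcap X\Big)-w(v).
\]
By \Cref{eq:weighted_incl_excl} this equals $w(\bigcup E(v))-w(v)$. Since $\bigcup E(v)$ is exactly $N(v)\cup\{v\}$, this is $w(N(v))$. If $E(v)=\emptyset$, the sum is empty and $\bigcup E(v)=\emptyset$, so the formula is off by $-w(v)$. I will handle isolated vertices separately, either by setting $\eta(v)=0$ or by noting that $v\in\bigcup E(v)$ only when $E(v)\neq\emptyset$. I will also note that $u\sim v$ requires a shared edge, so $N(v)=\bigcup E(v)\setminus\{v\}$ when $v$ is not isolated.

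\emph{Running time.} Each vertex $v$ has at most $2^{d(v)}\le 2^{\Delta(H)}$ subsets $X\subseteq E(v)$. Each loop therefore performs at most $2^{\Delta(H)}|V|$ dictionary operations in total. The main obstacle is justifying that each operation costs $O(1)$. I will rely on the model assumption of \Cref{sec:prelims}, that indexing tables whose keys are small tuples takes constant time. Concretely, I would represent $X$ as a sorted tuple of at most $\Delta$ edge identifiers, or as a bitmask relative to $E(v)$ mapped to canonical edge IDs. Subsets can be enumerated in Gray-code order, so each successive key is produced with $O(1)$ amortized updates. I will also maintain the sign $(-1)^{|X|+1}$ incrementally.

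Adding the $O(|V|)$ initialization and final subtraction gives a total time of $O(2^{\Delta(H)}|V|)$, as claimed.
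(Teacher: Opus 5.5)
Your proof is correct and matches the paper's argument: show $t[X]=w(\bigcap X)$ after the first loop, apply \Cref{eq:weighted_incl_excl} in the second loop, and charge at most $2^{d(v)}\le 2^{\Delta(H)}$ constant-time operations to each vertex. Your isolated-vertex remark is a genuine refinement that the paper's proof glosses over: as written, \algoNWIE returns $-w(v)$ instead of $0$ when $E(v)=\emptyset$, so the final subtraction of $w(v)$ should be performed only when $E(v)\neq\emptyset$.
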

\begin{proof}
 Fix $v \in V$ and $X \subseteq E(v)$.
 Note that \Cref{algo:nw:t_increm} increments $t[X]$ by $w(v)$ if and only if $v \in \cap_{e \in C} e$.
 Therefore, at \Cref{algo:nw:eta_init} we have:
     \begin{equation}\label{eq:phase1}
         t[X] = \sum_{v \in \bigcap X} w(v) =w\Big(\bigcap X\Big) \,.
     \end{equation}
 By construction, then, the second loop ensures that for any $v \in V$:
     \begin{align}
         \eta(v) &= \sum_{\emptyset \ne X \subseteq E(v)} (-1)^{|X|+1} t[X] - w(v)
        \\ &= \sum_{\emptyset \ne X \subseteq E(v)} (-1)^{|X|+1} w\left(\bigcap X\right) - w(v) && \text{by \cref{eq:phase1}}
         \\ &= w\left(\bigcup E(v)\right) - w(v) && \text{by \cref{eq:weighted_incl_excl}}
     \end{align}
 For the time complexity, for every $v \in V$ the algorithm performs at most $2^{|E(v)|} \le 2^{\Delta(H)}$ operations, each of which takes time $O(1)$.
\end{proof}
To conclude, since $\Delta(H_{>\alpha}) \leq \beta$, applying \algoNWIE to $(H_{>\alpha},w)$ yields the counters $\eta_{>\alpha}(v)$ in time $O(2^\beta\cdot |V|)$.

\subsubsection{Correcting $\tilde\eta(v)$}
The third and final step in the build-up phase is retrieving the correct neighbor weights $\eta$ from $\eta_{\le\alpha}$ and $\eta_{>\alpha}$.
To this end, for each $v \in V$ we iterate over $u \in N_{\le \alpha}(v)$ and check if $u \in N_{> \alpha}(v)$ too.
If this is the case, then we subtract $w(u)$ from $\tilde\eta(v)$.
Iterating over $v$ and $u$ takes time $O(\alpha^2 |E|)$ by using $\Gaifman(H_{\le\alpha})$.
For each such iteration, checking $u \in N_{>\alpha}(v)$ takes time $O(\beta)$ by intersecting the list of edges (i.e., the types) of $u$ and $v$ in $H_{> \alpha}$.
This requires those types to be sorted, which one can do beforehand in time:
\begin{align}
    O\left(\sum_{v \in V} d_{>\alpha}(v) \ln d_{>\alpha}(v)\right) &= O\left(\size{H_{>\alpha}} \ln \beta\right) = O\left(\size{H}\beta\right)\,.
\end{align}
Hence, computing $\eta$ from $\eta_{\le\alpha}$ and $\eta_{>\alpha}$ takes total time $O(\alpha^2 \beta \size{H})$.

\subsubsection{Wrapping up}
\Cref{alg:alpha-build} gives the pseudocode for computing the counters $\C(\cdot)$.
Note that this is not \emph{all} the build-up phase: in order to have an efficient sampling phase, we need some additional preprocessing, detailed in \Cref{subsec:hypermotivo-sample}.
In particular, the $O(\alpha^k |E|)$ term appearing in the bounds of \Cref{thm:hypermotivo} is due to \Cref{lemma:subhypergraphlet_build}.
The other two terms are instead due to \Cref{alg:alpha-build}:
\begin{lemma} \label{lemma: complexity_build_hypermotivo}
     If $H$ is $(\alpha, \beta)-$nice, \Cref{alg:alpha-build} computes the counters of \Cref{eq:treelet_counter} in time  $2^{O(k)}\cdot \left(2^\beta |V| + \alpha^2 \beta \size{H}\right)$.
\end{lemma}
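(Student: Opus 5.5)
The plan is to treat \Cref{alg:alpha-build} as the standard color-coding dynamic program of \Cref{eq:treelet_counter}. The only non-trivial ingredient is a solver for \probNW, so I will prove correctness and running time by reducing to the three-step \probNW procedure above.

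\textbf{Reduction to \probNW.} For every pair $(T_2,S_2)$ with $|S_2| = |V(T_2)| \le k-1$, define the weighting $w(u)=\C(T_2,S_2,u)$. The inner sum of \Cref{eq:treelet_counter} is then exactly $\eta(v)=w(N(v))$. Rooted treelets on at most $k$ vertices number $2^{O(k)}$, and color sets number $2^k$. Hence the algorithm issues $2^{O(k)}$ calls to the \probNW solver. Given the $\eta$ values, the remaining work in \Cref{eq:treelet_counter} is the sum over pairs $(S_1,S_2)$ of products $\C(T_1,S_1,v)\cdot\eta(v)$. This costs $2^{O(k)}$ per vertex, so $2^{O(k)}|V|$ in total.

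Correctness follows by induction on $h=|V(T)|$, exactly as in \motivo. The base case is $h=1$, where $\C(\cdot,\{c(v)\},v)=1$. In the inductive step, the counters for smaller treelets are correct, so the weights are correct. Since $u\sim v$ in $H$ if and only if $u,v$ are adjacent in $\Gaifman(H)$, \Cref{eq:treelet_counter} applied on $H$ coincides with its application on $\Gaifman(H)$.

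\textbf{Correctness and cost of one \probNW call.} Here I follow the three subsections above. First, every neighbor relation comes from an edge in $E_{\le\alpha}$ or one in $E_{>\alpha}$, so $N(v)=N_{\le\alpha}(v)\cup N_{>\alpha}(v)$. By inclusion--exclusion over two sets,
\[
\eta(v)=\eta_{\le\alpha}(v)+\eta_{>\alpha}(v)-w\big(N_{\le\alpha}(v)\cap N_{>\alpha}(v)\big).
\]
The correction step computes exactly this last term. The cost of each step is then as follows.
\begin{itemize}[leftmargin=1.5em]
\item The $G_{\le\alpha}$ enumeration costs $O(|V|+\alpha^2|E|)$.
\item \algoNWIE costs $O(2^\beta|V|)$ by \Cref{lemma:nwiecomplexity}, since $\Delta(H_{>\alpha})\le\beta$.
\item The correction costs $O(\alpha^2\beta\size{H})$.
\end{itemize}
Note that $|E|\le\size{H}$, so $\alpha^2|E|\le\alpha^2\beta\size{H}$ when $\beta\ge1$. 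When $\beta=0$, $H_{>\alpha}$ has no edges and the correction is vacuous; I will handle this edge case by reading the bound with $\beta$ replaced by $\max(\beta,1)$.

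\textbf{One-time work.} Building $G_{\le\alpha}$ and sorting the types of $H_{>\alpha}$ do not depend on $w$. I will do them once, before the dynamic program, so they are not multiplied by $2^{O(k)}$; in any case they fit within the stated bound.

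\textbf{Main obstacle.} The delicate point is the correction-step bound. One issue is that a vertex $u$ may be reached from $v$ through several edges of $E_{\le\alpha}$. The iteration must therefore be over the deduplicated neighborhood $N_{\le\alpha}(v)$ in $G_{\le\alpha}$; otherwise $w(u)$ would be subtracted more than once. The other issue is that $\sum_v|N_{\le\alpha}(v)|\le\sum_{e\in E_{\le\alpha}}|e|^2\le\alpha\size{H}$. This is because $|e|\le\alpha$ gives $|e|^2\le\alpha|e|$.

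Each membership test $u\in N_{>\alpha}(v)$ intersects two sorted type lists of length at most $\beta$, at cost $O(\beta)$. Multiplying the number of pairs by the cost per test gives the required $O(\alpha^2\beta\size{H})$, in fact with room to spare. The pre-sorting cost of $O(\size{H}\ln\beta)$ is also within this bound. Multiplying the per-call cost by the $2^{O(k)}$ calls yields $2^{O(k)}\cdot(2^\beta|V|+\alpha^2\beta\size{H})$.
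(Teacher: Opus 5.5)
Your proposal is correct and follows the paper's proof. Both bound one iteration of the main loop by the $G_{\le\alpha}$ neighbor enumeration, \algoNWIE on $H_{>\alpha}$ in $O(2^\beta|V|)$, and the $O(\beta)$-per-pair overlap correction, then multiply by the $2^{O(k)}$ iterations. Your extra care is warranted: the paper's step $\alpha^2|E|+\alpha^2\beta\size{H}=O(\alpha^2\beta\size{H})$ silently needs $\beta\ge 1$, which is exactly the case you patch with $\max(\beta,1)$, and your bound $\sum_v|N_{\le\alpha}(v)|\le\alpha\size{H}$ even sharpens the correction term to $O(\alpha\beta\size{H})$.
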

 \begin{proof}
 The correctness follows from the discussion above, which shows that at \cref{algo:build:C_update} we have $\eta(v)=\C(T_2,S_2,v)$, from \cref{eq:treelet_counter}, and by construction of the algorithm. Let us bound the running time.

 By \Cref{lem:compute_split_2}, \cref{algo:build:split} takes time $O(\size{H})$.
 By the discussion above, \cref{algo:build:Gaif_low} and \cref{algo:build:sort_Ev} take time $O(\alpha^2 \beta \size{H})$.
 The same discussion showed that each execution of the loop of \cref{algo:build:for_v} takes time take time $O(\alpha^2 \beta \size{H})$ as well, while \cref{algo:build:w}, \cref{algo:build:etalow}, and \cref{algo:build:etaup} take time respectively $O(|V|)$, $O(\alpha^2 |E|)$, and $O(2^\beta |V|)$.
 Thus, every iteration of the loop of \cref{algo:build:main_for} takes time
 \begin{align}
     O\left(2^\beta |V| + \alpha^2 |E| + \alpha^2 \beta \size{H}\right) = O\left(2^\beta |V| + \alpha^2 \beta \size{H}\right) \,.
 \end{align}
 which already dominates the running time of the rest.
 Noting that the said loop makes $2^{O(k)}$ iterations completes the proof.
 \end{proof}

\begin{algorithm}[h]
\caption{\hypermotivo build-up}
\label{alg:alpha-build}
\LinesNumbered
{\fontsizealgo
    \Input{A hypergraph $H=(V, E)$ and $k\geq 2$}
    \Output{The counters $\C(T,S,v)$ of \Cref{eq:treelet_counter}}
    \BlankLine
    $\C \gets $ empty dictionary with default value $0$\;
    \ForEach{$v \in V$}{
        $c(v) \gets $ random color from $[k]$\;\
        $\C(T, \{c(v)\},v) \gets 1$ where $T$ is the trivial tree on one vertex\;
    }
    compute the $\alpha$-split $(H_{\leq\alpha}, H_{>\alpha})$ of $H$ as per \Cref{lem:compute_split_2}\nllabel{algo:build:split}\;
    compute $G_{\le\alpha} = \Gaifman(H_{\le\alpha})$\nllabel{algo:build:Gaif_low}\;
    sort the types of each $v \in V$ in $H_{>\alpha}$\nllabel{algo:build:sort_Ev}\;
    \For{every $h = 2,\ldots,k$, every $h$-treelet $T$, and every $S \in \binom{[k]}{h}$}{
        let $T_1, T_2$ be the canonical decomposition of $T$\;
        \ForEach{partition $S_1,S_2$ of $S$ with $|S_1|=|T_1|$\nllabel{algo:build:main_for}}{
            let $w : V \to \N$ be given by $w(v)=\C(T_2,S_2,v)$\nllabel{algo:build:w}\;
            $\eta_{\le\alpha} \leftarrow \algoNWNaive(H_{\le\alpha},w)$\nllabel{algo:build:etalow}\;
            $\eta_{>\alpha} \leftarrow \algoNWIE(H_{>\alpha},w)$\nllabel{algo:build:etaup}\;
            \ForEach{$v \in V$\nllabel{algo:build:for_v}}{
                $\eta(v) \leftarrow \eta_{\le\alpha}(v)+\eta_{>\alpha}(v)$\label{algo:build:etasum}\;
                \ForEach{$u \in N_{\le\alpha}(v)$\nllabel{algo:build:for_u}}{
                    \If{$u \in N_{> \alpha}(v)$\nllabel{algo:build:u_v_check}}{
                        $\eta(v) \leftarrow \eta(v) - w(u)$\nllabel{algo:build:etacorrect}\;
                   }
                }
                $\C(T, S, v) \gets \C(T, S, v) + d^{-1}\C(T_1, S_1, v) \cdot \eta(v)$\nllabel{algo:build:C_update}\;
            }
        }
    }
    \Return $\C$ 
}
\end{algorithm}

\subsection{The sampling phase}\label{subsec:hypermotivo-sample}
The goal of the sampling phase is to sample colorful, uniform $k$-hypergraphlets of $H$.
To this end, we follow the same high-level strategy of \motivo.
To begin with, we need an efficient \emph{neighbor-sampling} routine.
Given a vertex $v$, a set of colors $S_2 \subseteq [k]$, and a treelet $T_2$ with $|T_2|=|S_2|$, we need to draw $u \in N(v)$ with probability proportional to $\C(T_2, S_2, u)$ (see \Cref{motivo_sample}), while avoiding the explicit visit of $N(v)$.
Once we have this routine, we can then implement an efficient procedure for sampling $k$-colorful rooted treelets uniformly at random from $H$.
At this point, we can use the sampled $k$-colorful rooted treelets to sample $k$-hypergraphlets, by following standard rejection sampling techniques.
While at a high level this may look easy, the nontrivial challenge is to make all these routines run in time nearly independent of $\size{H}$.
Somewhat surprisingly, we will show how to do so by repeatedly exploiting the $(\alpha,\beta)$-niceness of $H$.


For the remainder of the section, we assume that the projection $\Gaif(H_{\le \alpha})$ has already been computed, and, for each $v \in H$, $N_{\le \alpha}(v)$ and $E_{> \alpha}(v)$ have been sorted. Such computations can be carried out during the build-up phase without increasing its running time.

\subsubsection{Sampling a neighbor}
The subroutine for sampling a neighbor is presented in \Cref{sample_neigh}. 
For the purpose of readability we simplify the notation as follows.
Recall that, given a $k$-treelet $T$ with $k \ge 2$, there is a canonical decomposition of $T$ into smaller trees $T_1$ and $T_2$; see \Cref{sec:prelims}.
For any subset $S_1$ of colors, and any vertex $v$, we then write $\C(S_1, v)$ for the counter $\C(T_1,S_1,v)$, and $\C(S_2, v)$ for the counter $\C(T_2,S_2,v)$.
Similarly, for an edge $e \in E$, we write $\C(S_1,e)$ for $\sum_{u \in e}\C(T_1,S_1,e)$, and so on.
Finally, let $W_{\le \alpha}(v) = \sum_{u \in N_{\le \alpha}(v)}\C(S_2,u)$ and $W_{> \alpha}(v) = \sum_{u \in N_{> \alpha}(v)}\C(S_2,u)$.

\begin{algorithm}[h]
\caption{\SampleNeigh}
\label{sample_neigh}
\LinesNumbered
        {\fontsizealgo
         \Input{$(T,S,v)$}
         \Output{$(T_2,S_2,u)$}
         let $T_1,T_2$ be the canonical decomposition of $T$\;
         choose $S_1,S_2$ w.p.\ $\propto \C(S_1,v) \cdot \sum_{u \in N(v)}\C(S_2,u)$\; \nllabel{sample:line:labels}
         let $p(N_{\le \alpha}(v)) := \frac{W_{\le \alpha}(v)}{W_{\le \alpha}(v) + W_{> \alpha}(v)}$\;
        \While{$u$ is not accepted\nllabel{sample:line:outer_while}}{
        \With{probability $p(N_{\le \alpha}(v))$}{
        choose $u \in N_{\le \alpha}(v)$ with probability $\frac{\C(S_2,u)}{W_{\le \alpha}(v)}$ \nllabel{sample:line:lower_draw}\;
        }
        \Else{\nllabel{sample:line:else}
         \While{$u$ is not accepted\nllabel{sample:line:inner_while}}{
        choose $e \in E(v)$ with probability $\frac{\C(S_2, e)}{\sum_{e' \in E(v)}
        \C(S_2,e')}$\nllabel{sample:line:upper_draw_edge}\;
         choose $u \in e$  with probability $\frac{\C(S_2, u)}{\C(S_2,e)}$\nllabel{sample:line:upper_draw_node}\;
         accept $u$ with probability $\frac{1}{|E(v) \cap E(u)|}$\nllabel{sample:line:upper_accept} 
         }\nllabel{sample:line:outer_while_beforelast}}
        accept $u$ with probability $\frac{1}{\Ind{u \in N_{\le \alpha}(v)}
     + \Ind{u \in N_{> \alpha}(v)}}$\nllabel{sample:line:outer_accept}\;
        }
        \Return $(T_2, S_2, u)$\;
        }
\end{algorithm}

\begin{lemma}\label{lemma:number_of_samples}
Given as input $(T,S,v)$, \SampleNeigh returns $(T_2,S_2,u)$ such that $u \in N(v)$ with probability proportional to $C(T_2,S_2,u)$.
\end{lemma}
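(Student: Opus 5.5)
Fix $(T,S,v)$ and the pair $(S_1,S_2)$ chosen at \cref{sample:line:labels}. We only need the conditional law of the returned $u$ to be proportional to $\C(S_2,u)$ over $u \in N(v)$. We analyse one iteration of the outer loop of \cref{sample:line:outer_while} and compute the probability $q(u)$ that it produces and accepts a given $u$. If $q(u) = \lambda\, \C(S_2,u)$ for a constant $\lambda>0$ independent of $u$, then the output, being the first accepted proposal of i.i.d.\ iterations, has law $q(u)/\sum_{u'} q(u') \propto \C(S_2,u)$. It remains to compute the proposal distribution of one iteration before the final acceptance at \cref{sample:line:outer_accept}.

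\textbf{Lower branch.} With probability $p(N_{\le\alpha}(v)) = W_{\le\alpha}(v)/(W_{\le\alpha}(v)+W_{>\alpha}(v))$, the iteration draws $u \in N_{\le\alpha}(v)$ with probability $\C(S_2,u)/W_{\le\alpha}(v)$. The contribution to the proposal probability is therefore $\Ind{u\in N_{\le\alpha}(v)}\,\C(S_2,u)/W$, where $W = W_{\le\alpha}(v)+W_{>\alpha}(v)$.

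\textbf{Upper branch.} Here $E(v)$ means the type in $H_{>\alpha}$. I first analyse one iteration of the inner loop at \cref{sample:line:inner_while}. Write $Z = \sum_{e\in E(v)} \C(S_2,e)$. The probability of picking some $e \ni u$ and then $u$ is
\[
\sum_{e \in E(v)\cap E(u)} \frac{\C(S_2,e)}{Z}\cdot\frac{\C(S_2,u)}{\C(S_2,e)} = \frac{|E(v)\cap E(u)|\,\C(S_2,u)}{Z}.
\]
After the acceptance step at \cref{sample:line:upper_accept}, this becomes $\C(S_2,u)/Z$ for every $u$ with $E(v)\cap E(u) \neq \emptyset$, i.e.\ for every $u \in N_{>\alpha}(v)\cup\{v\}$. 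The inner loop thus returns $u$ with probability $\C(S_2,u)/\sum_{u'} \C(S_2,u')$, the sum ranging over that set. A point to handle with care is the vertex $v$ itself: $v \in e$ for all $e \in E(v)$, so it can be drawn. I would treat $W_{>\alpha}(v)$ as this normaliser, matching the convention used for $N_{\le\alpha}(v)$; if $v \notin N(v)$, $v$ must be rejected as well, and this rejection still leaves the law on the other vertices proportional to $\C(S_2,\cdot)$. I expect this bookkeeping, and making the normalisers of the two branches consistent, to be the main obstacle. Under that convention, the upper branch proposes $u$ with probability $\frac{W_{>\alpha}(v)}{W}\cdot\frac{\C(S_2,u)}{W_{>\alpha}(v)} = \Ind{u\in N_{>\alpha}(v)}\,\C(S_2,u)/W$.

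\textbf{Combining.} Summing the two branches, one outer iteration proposes $u$ with probability $\big(\Ind{u\in N_{\le\alpha}(v)}+\Ind{u\in N_{>\alpha}(v)}\big)\C(S_2,u)/W$. Acceptance at \cref{sample:line:outer_accept} divides by exactly this multiplicity, which is $1$ or $2$ for $u \in N(v) = N_{\le\alpha}(v)\cup N_{>\alpha}(v)$. This gives $q(u) = \C(S_2,u)/W$, so $\lambda = 1/W$ as required.

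Termination is immediate: if $W>0$, the acceptance probabilities are at least $1/\beta$ in the inner loop and at least $1/2$ in the outer loop, so both loops end almost surely. The pair $(S_1,S_2)$ is only ever chosen with $W>0$, since its selection probability is proportional to $\C(S_1,v)\,W$.
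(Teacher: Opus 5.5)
Your proof is correct and follows the paper's argument essentially step for step: analyse one outer iteration, obtain the lower-branch proposal $\C(S_2,u)/W_{\le\alpha}(v)$, show that the inner loop's $1/|E(v)\cap E(u)|$ rejection cancels the edge multiplicity so the upper branch proposes $\C(S_2,u)/W_{>\alpha}(v)$, and let the final acceptance step at \cref{sample:line:outer_accept} cancel the double count. The self-vertex issue you flag as the main obstacle, which the paper's proof passes over silently, goes away without any convention or extra rejection. The selected pair has $\C(S_1,v)>0$, so $c(v)\in S_1$, hence $c(v)\notin S_2$ and $\C(S_2,v)=0$; thus $v$ is never drawn at \cref{sample:line:upper_draw_node}, and the inner-loop normaliser coincides with $W_{>\alpha}(v)$.
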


 \begin{proof}
 Consider any iteration of the loop of \cref{sample:line:outer_while}.
 Let $p(u)$ be the probability, ignoring \cref{sample:line:outer_accept}, that the iteration draws the particular vertex $u$.
 We shall prove that:
 \begin{align}\label{eq:proportional}
    p(u) \propto \C(S_2,u) \cdot  \left(\Ind{u \in N_{\le \alpha}(v)}
      + \Ind{u \in N_{> \alpha}(v)}\right)
 \end{align}
 By \cref{sample:line:outer_accept}, this implies that every single iteration draws \emph{and} accepts $u$ with probability proportional to $C(T_2,S_2,u)$, proving the claim.

 Let $p_{\le \alpha}(u)$ be the probability that an execution of \cref{sample:line:lower_draw} draws $u$, and $p_{> \alpha}(u)$ the probability that an execution of the block of \cref{sample:line:else} draws $u$.
 By the law of total probability:
 \begin{align}\label{eq:total_law}
 p(u)
 &=p(N_{\leq \alpha}(v))\cdot p_{\leq \alpha}(u) + p(N_{> \alpha}(v))\cdot p_{> \alpha}(u)
 \\
 & = \frac{W_{\le\alpha}(v)\cdot p_{\leq \alpha}(u)}{W_{\le\alpha}(v)+W_{>\alpha}(v)} + \frac{W_{>\alpha}(v)\cdot p_{> \alpha}(u)}{W_{\le\alpha}(v)+W_{>\alpha}(v)} \label{eq:prop2}
 \end{align}
Let us then analyse $p_{\leq \alpha}(u)$ and $p_{> \alpha}(u)$ separately.

 For $p_{\leq \alpha}(u)$, \cref{sample:line:lower_draw} implies:
 \begin{align}
      p_{\leq \alpha}(u) = \frac{\C(S_2,u)}{W_{\le\alpha}(v)} \cdot \Ind{u \in N_{\le \alpha}(v)}
 \end{align}

 For $p_{> \alpha}(u)$, let $p_{inn}(u)$ be the probability that, ignoring \cref{sample:line:upper_accept}, one iteration of \cref{sample:line:inner_while} draws a particular vertex $u$; let $p(e)$ be the probability that one execution of \cref{sample:line:upper_draw_edge} returns the specific hyperedge $e$; and let $p(u|e)$ be the probability that, conditional on this event, one execution of \cref{sample:line:upper_draw_node} returns $u$.
 By the law of total probability:
 \begin{align}
 p_{inn}(u)
 &= \Ind{u \in N_{> \alpha}(v)} \cdot
 \sum_{e \in E(v)}
   p(u \mid e)\,p(e)
 \\
 &=\Ind{u \in N_{> \alpha}(v)}\\
 &\cdot
 \sum_{e \in E(v)}
   \frac{\C(S_2,u)}{\C_2(S_2,e)}
   \frac{\C(S_2,e)}
        {\displaystyle
         \sum_{e' \in E(v)} \C(S_2,e')}\, \cdot\Ind{u \in e}\,
 \\[0.25em]
 &=  \Ind{u \in N_{> \alpha}(v)} \cdot \sum_{e \in E(v)} \frac{\C(S_2,u)}{ \sum_{e' \in E(v)} \C(S_2,e')}\cdot\Ind{u \in e} \\
 &=  \Ind{u \in N_{> \alpha}(v)} \cdot \frac{\C(S_2,u) |E(v) \cap E(u)|}{ \sum_{e' \in E(v)} \C(S_2,e')}
 \end{align}
 By \cref{sample:line:upper_accept}, this implies that every single iteration draws and accepts $u$ with probability proportional to $\C(S_2, u)\cdot \Ind{u \in N_{> \alpha}(v)}$, that is:
 \begin{align}
    p_{> \alpha}(u) \propto \C(S_2,u) \cdot \Ind{u \in N_{> \alpha}(v)}
      \end{align}
 Since $\sum_{u \in N(v)} \C(S_2, u) \cdot \Ind{u \in N_{> \alpha}(v)} = W_{> \alpha}(v)$, we deduce that:
 \begin{align}
 p_{> \alpha}(u) = \frac{\C(S_2,u)}{W_{> \alpha}(v)}\cdot \Ind{u \in N_{> \alpha}(v)}    
 \end{align}
 Plugging $p_{\le \alpha}(u)$ and $p_{> \alpha}(u)$ in \Cref{eq:prop2} proves the claim. 
 \end{proof}

\subsubsection{Sampling colorful treelets}
We now use \SampleNeigh to sample a colorful rooted treelet of $H$.
To reduce the complexity of the task, we assume that the build-up phase computes a set of random generators that produce samples from certain distributions.
Using the Alias method~\cite{Vose91-alias}, these generators can be constructed in time linear in the support of the distribution, and yield independent random samples in time $O(1)$.
In particular, we will assume the following generators:
\begin{itemize}[leftmargin=10pt]
    \item A generator that returns a pair $(T,v)$ with probability proportional to $\C(T,[k], v)$.
     The size of the support is $2^{O(k)}\cdot|V|$.
     \item For every $S_2 \subset [k]$, every treelet $T_2$ on $|S_2|$ vertices, and every $v \in V$,  a generator that returns a vertex $u \in N_{\le \alpha}(v)$ with probability proportional to $C(T_2,S_2,u)$.
     This can be done by iterating over $C(T_2,S_2,u)$ for all $u \in N_{\le \alpha}(v)$, which can be done in time $O(\alpha^2 |E|)$.
     The total time is therefore $2^{O(k)} \cdot \alpha^2 |E|$.
     \item For every $S_2 \subset [k]$, every treelet $T_2$ on $|S_2|$ vertices, and every $e \in E_{>\alpha}$, we compute $C(T_2,S_2, e)$ in time $O(|e|)$. At this point we compute a generator that returns $u \in e$ with probability proportional to $C(T_2,S_2,u$), which takes again time $O(|e|)$. The total time is therefore $2^{O(k)}\cdot|E_{>\alpha}|$.    
    \item For every $S_2 \subset [k]$, every treelet $T_2$ on $|S_2|$ vertices, and every vertex type $E(v)$, a generator returning $e \in E(v)$ with probability proportional to $C(T_2,S_2,e)$, in time $O(\beta)$, for a total time of $2^{O(k)}\cdot \beta|V|$.
    \item For every $v \in V$, a generator that returns $(S_1, S_2)$ with probability proportional to $\C(T_1,S_1,v) \cdot \sum_{u \in N(v)}\C(T_2,S_2,u)$. Recall that the neighbor sums $\sum_{u \in N(v)}\C(T_2,S_2,u)$ are computed by the build-up phase. This takes total time $2^{O(k)} \cdot |V|$.
\end{itemize}
We also pre-compute the canonical decomposition of every treelet $T$ on at most $k$ vertices (this can be done in time $2^{O(k)}\cdot O(k)$).
One can check that the total time spent by the calculations above is bounded by the running time of the build-up phase.
%
%
%

\begin{lemma}\label{lemma:sample_complexity} With the random number generators precomputed as above, each invocation of {\SampleNeigh} takes expected time $O(\beta^2)$.
\end{lemma}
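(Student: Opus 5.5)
We bound the expected number of iterations of each loop in \SampleNeigh and the cost of each iteration, using the precomputed generators. The choice of $(S_1,S_2)$ at \cref{sample:line:labels} is one draw from the per-vertex generator, so it costs $O(1)$. The probability $p(N_{\le\alpha}(v))$ needs $W_{\le\alpha}(v)$ and $W_{>\alpha}(v)$. The build-up phase already computes $\eta_{\le\alpha}(v)$ and $\eta_{>\alpha}(v)$ for every $(T_2,S_2)$, so we store these values and read them in $O(1)$.

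\textbf{Outer loop.} One iteration of the loop at \cref{sample:line:outer_while} draws $u$ with probability $p(u)\propto \C(S_2,u)\,(\Ind{u\in N_{\le\alpha}(v)}+\Ind{u\in N_{>\alpha}(v)})$, as in the proof of \Cref{lemma:number_of_samples}. This multiplicity is at most $2$, so the acceptance probability at \cref{sample:line:outer_accept} averaged over $p(u)$ is
\[
\frac{\sum_{u\in N(v)}\C(S_2,u)}{W_{\le\alpha}(v)+W_{>\alpha}(v)}\ge \frac12 .
\]
Hence the expected number of outer iterations is at most $2$. The lower branch (\cref{sample:line:lower_draw}) is one $O(1)$ draw from the $N_{\le\alpha}(v)$ generator.

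\textbf{Acceptance test.} Evaluating $\Ind{u\in N_{>\alpha}(v)}$ means checking whether $E_{>\alpha}(u)\cap E_{>\alpha}(v)\neq\emptyset$. We merge the two sorted types, which takes $O(\beta)$. Evaluating $\Ind{u\in N_{\le\alpha}(v)}$ is a binary search in the sorted list $N_{\le\alpha}(v)$, or a hash lookup into $\Gaif(H_{\le\alpha})$. This should be $O(1)$ under the paper's table-indexing assumption; otherwise it costs $O(\log(\alpha^2\beta))$, which we absorb or note.

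\textbf{Inner loop (main obstacle).} Each inner iteration makes two $O(1)$ generator draws (\cref{sample:line:upper_draw_edge,sample:line:upper_draw_node}). It then computes $|E(v)\cap E(u)|$ in $H_{>\alpha}$ by merging the sorted types, in $O(\beta)$. From the computation in \Cref{lemma:number_of_samples}, one iteration accepts with overall probability
\[
\frac{\sum_u \C(S_2,u)\,\Ind{u\in N_{>\alpha}(v)}}{\sum_{e\in E(v)}\C(S_2,e)}
=\frac{W_{>\alpha}(v)}{\sum_{e\in E(v)}\C(S_2,e)}.
\]
Each $u$ is counted at most $|E(v)\cap E(u)|\le\beta$ times in the denominator, so this probability is at least $1/\beta$. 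The expected number of inner iterations is therefore at most $\beta$, at $O(\beta)$ each, giving $O(\beta^2)$ per entry into the else-branch. This step is the main obstacle, since both the acceptance ratio and the intersection cost must be tied to the degree bound of $H_{>\alpha}$.

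\textbf{Combining.} By Wald's identity, the expected time is at most $2\cdot(O(1)+O(\beta^2)+O(\beta))=O(\beta^2)$.
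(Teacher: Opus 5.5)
Your proof is correct and follows essentially the paper's argument. The inner loop succeeds with probability at least $1/\beta$ per $O(\beta)$-cost iteration, the outer loop accepts with probability at least $1/2$, and Wald's identity combines the two; your explicit acceptance-probability computations are a cleaner version of the paper's coupling, and your fallback bound $O(\log(\alpha^2\beta))$ should be dropped, because $|N_{\le\alpha}(v)|$ can be $\Theta(|V|)$ (only $H_{>\alpha}$ has bounded degree), so you genuinely need the constant-time pair lookup into $\Gaifman(H_{\le\alpha})$ allowed by the paper's table-indexing assumption, a cost the paper's own proof tacitly ignores.
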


 \begin{proof}
  Let $I$ be the random variable that counts the total number of iterations executed by the while loop at \cref{sample:line:inner_while}. It holds $I \sim Geom(|E(v) \cap E(u)|)$. Let $T$ be the random variable that counts the total number of iterations within the body of a single execution of the while loop at \cref{sample:line:outer_while}. Then:
 \[
 T = \begin{cases}
 1 &  \text{with probability } p(N_{\le \alpha})\\
 I & \text{with probability } 1-p(N_{\le \alpha})
 \end{cases}
 \]
 It follows that 
 \begin{align}
 \E[T] = p(N_{\le \alpha}) + (1- p(N_{\le \alpha}))\E[I] \leq 1+\beta
 \end{align}
 Thus, it holds $\E[T] = O(\beta)$. Let $K$ be the random variable that counts total number of times the loop at \cref{sample:line:outer_while} is executed, and $N$ the random variable that counts the total number of executions (including all the loops).
 Consider now another version of \Cref{sample_neigh} where \cref{sample:line:outer_accept} accepts with fixed probability $\frac{1}{2}$, and let $K',N'$ be the equivalent of $K,N$ for this version.
 By a simple coupling argument, $K \le K'$ and $N \le N'$, and clearly $K'$ is independent of the variables $T_i$.
 By Wald's equality, then,
 \begin{align}
     N \le N' = \sum_{i=1}^{K'} T_i = \E[K']\E[T] \le 2 \E[T] = O(\beta) \,.
 \end{align}

 Now let us bound the time taken by the execution of every line. Outside the loop at \cref{sample:line:outer_while}, the execution of each line takes at most $O(1)$. Drawing $u$ at \cref{sample:line:lower_draw} takes $O(1)$. The same holds for \cref{sample:line:upper_draw_node}. Sampling a hyperedge at \cref{sample:line:upper_draw_edge} takes time $O(1)$, whereas computing $|E(v) \cap E(u)|$ at \cref{sample:line:upper_accept} takes time at most $O(\beta)$. Thus, if the expected number of iterations is $O(\beta)$, it follows that each invocation of {\SampleNeigh} takes expected time $O(\beta^2)$.
 \end{proof}

The procedure for sampling a treelet is presented in \Cref{algo:sample}.

\begin{algorithm}[h]
\caption{\Sample}
\label{algo:sample}
\LinesNumbered
{\fontsizealgo
\Input{$(T,S,v)$ or \texttt{NULL}}
\Output{random uniform colorful treelet of $H$ isomorphic to $T$ colored with $S$ rooted in $v$}
\If{input is \texttt{NULL}}{
$S=[k]$\;
draw $(T, v)$ with probability proportional to $\C(T,S,v)$\;
}
\If{$|T|=1$}{
    \Return $(\{v\},\emptyset)$\;
}
$(T_2,S_2,u) \leftarrow $\SampleNeigh$(T,S,v)$\;
let $T_1 = T \setminus T_2$, $S_1=S \setminus S_2$\;
\Return \Sample$(T_1,S_1,v)$ + $uv$ + \Sample$(T_2,S_2,u)$\;
}
\end{algorithm}

\begin{corollary} \label{corollary:sample} \Cref{algo:sample} takes, on each invocation, expected time $O(k\beta^2)$.
Moreover, the vertex set of the output treelet is $U \in \scV_k(H)$ with probability proportional to the number of spanning trees $\sigma(H[U])$ of $\Gaifman(H[U])$.
\end{corollary}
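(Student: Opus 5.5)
The proof will go by induction on $|T|$, handling the running time and the output distribution separately.

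\emph{Running time.} Let $t(h)$ be the expected time of \Cref{algo:sample} on an input whose treelet has $h$ vertices. If the input is \texttt{NULL}, the initial draw of $(T,v)$ costs $O(1)$ using the precomputed alias generator. The base case $|T|=1$ also costs $O(1)$. Otherwise the call does three things: it invokes \SampleNeigh once, which by \Cref{lemma:sample_complexity} costs expected $O(\beta^2)$; it looks up the precomputed canonical decomposition in $O(1)$; and it recurses on $T_1$ and $T_2$, with $|T_1|+|T_2|=h$ and both parts nonempty. This gives $t(h)\le t(|T_1|)+t(|T_2|)+O(\beta^2)$. The recursion tree is binary with $h$ leaves, hence has $h-1$ internal nodes. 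Each internal node makes exactly one call to \SampleNeigh, so $t(k)=O(k\beta^2)$. Linearity of expectation handles the fact that each call's time is random, since the bound of \Cref{lemma:sample_complexity} holds for every input $(T,S,v)$.

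\emph{Distribution, step 1.} I will prove by induction on $|T|$ the following claim: on input $(T,S,v)$, \Sample returns each colorful copy $\tau$ of $T$ colored by $S$ and rooted at $v$ with probability exactly $1/\C(T,S,v)$.
\begin{itemize}
\item The base case is trivial.
\item For the inductive step, fix $\tau$. Its canonical decomposition determines $(\tau_1,\tau_2)$, the colour split $(S_1,S_2)$, and the child $u$ of the root. By line~\ref{sample:line:labels} of \SampleNeigh, the split $(S_1,S_2)$ is chosen with probability $\C(S_1,v)\,\C(S_2,N(v))/(d\,\C(T,S,v))$; here the normalizer is exactly \Cref{eq:treelet_counter}.
\item By \Cref{lemma:number_of_samples}, $u$ is then chosen with probability $\C(S_2,u)/\C(S_2,N(v))$.
\item By induction, the two recursive calls return $\tau_1$ and $\tau_2$ with probabilities $1/\C(S_1,v)$ and $1/\C(S_2,u)$, independently.
\end{itemize}
Multiplying, the terms telescope to $1/(d\,\C(T,S,v))$. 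The factor $d$ is the main obstacle. It accounts for the fact that the same unlabeled copy $\tau$ arises from several choices of the root child $u$ matching the decomposition. Equivalently, $\C(T,S,v)$ counts copies up to the automorphisms handled by \motivo's canonical decomposition. I will import this point verbatim from \motivo's correctness argument~\cite{Bressan21TKDD}, as it is unchanged here. The only new ingredient is \Cref{lemma:number_of_samples}, which shows the neighbor draw has the same law as in the graph case on $\Gaifman(H)$.

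\emph{Distribution, step 2.} With the \texttt{NULL} input, $(T,v)$ is drawn with probability $\C(T,[k],v)/\sum_{T',v'}\C(T',[k],v')$. Combined with step 1, every colorful rooted $k$-treelet of $\Gaifman(H)$ is output with the same probability. Three facts then finish the argument:
\begin{itemize}
\item The edges $uv$ of the output join adjacent vertices of $H$, so its vertex set $U$ spans a connected subgraph of $\Gaifman(H)$, and $U\in\scV_k(H)$.
\item Each spanning tree of $\Gaifman(H)[U]=\Gaifman(H[U])$ corresponds to a fixed number of rooted copies. Because every treelet is colorful, there are exactly $k$ choices of root, and the isomorphism type is determined. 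So the number of such copies is $k\,\sigma(H[U])$.
\item Summing over these copies, $\Pr[U]\propto\sigma(H[U])$.
\end{itemize}
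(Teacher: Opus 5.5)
Your proposal is correct and follows the same route as the paper. The paper derives the time bound from \Cref{lemma:sample_complexity} and the $k-1$ calls to \SampleNeigh, and the distribution from \Cref{lemma:number_of_samples} plus \motivo's analysis~\cite{Bressan&2019VLDB}, which you simply spell out in more detail. One phrasing slip: the canonical decomposition of $\tau$ does not ``determine'' $u$, since $\tau$ admits exactly $d$ decompositions, each reached with probability $1/(d\,\C(T,S,v))$ and summing to $1/\C(T,S,v)$, as you yourself go on to indicate.
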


 \begin{proof}
     The running time bound follows from \Cref{lemma:sample_complexity} and the fact that \Sample makes $k-1$ calls to \SampleNeigh, together with the bounds on the time for drawing $(T,v)$.
     The claim on the distribution follows from \Cref{lemma:number_of_samples} and the same analysis of~\cite{Bressan&2019VLDB}.
 \end{proof}

\subsubsection{Sampling hypergraphlets}
As a final step, we show how to obtain an unbiased estimate of the frequency of the $k$-hypergraphlets. 
To this end, consider an invocation of \Sample and let $U$ the set of vertices of the returned treelet.
We then need to compute:
\begin{enumerate}
    \item the $k$-hypergraphlet $H[U]$,
    \item the number of spanning trees $\sigma(H[U])$ of $\Gaifman(H[U])$.
\end{enumerate}
Thereafter, we just increase the sample counter of $H[U]$ by $\frac{1}{\sigma(H[U])}$.
By virtue of \Cref{corollary:sample}, this yields an unbiased estimator of the number of colorful $k$-hypergraphlets of $H$ isomorphic to $H[U]$.
Note that, at this point, one can sample $k$-hypergraphlets uniformly by accepting $H[U]$ with probability $\frac{1}{\sigma(H[U])}$. Since $\frac{1}{\sigma(H[U])} \ge k^{-O(k)}$, see~\cite{Bressan&2019VLDB}, this makes the sampling time's dependence on $k$ grow to $k^{O(k)}$, as per \Cref{thm:hypermotivo}.

For item 1, one could list the vertex types of the vertices of $U$ to find the hyperedges of $H[U]$.
This, however, would have a running time that scaled with the maximum degree $\Delta(H)$.
Interestingly, we can do better by taking advantage of the $(\alpha, \beta)$-niceness of $H$.
\begin{lemma}\label{lemma:subhypergraphlet_build}
By spending an additional $O(\alpha^k \cdot |E_{\le \alpha}|)$ preprocessing time, one can compute the $k$-hypergraphlet $H[U]$ induced by any given $k$-vertex $U \subseteq V$ in time $2^{O(k)} \cdot \beta$.
\end{lemma}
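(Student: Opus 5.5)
We compute $E(H[U])=\{e\cap U : e\in E,\ e\cap U\neq\emptyset\}$ by handling the lower part $E_{\le\alpha}$ and the upper part $E_{>\alpha}$ separately and taking the union of the two resulting sets of traces. The upper part is cheap because of the degree bound. The lower part needs preprocessing, because a vertex may lie in many small edges.

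\emph{Upper part.} For each $u\in U$ we scan $E_{>\alpha}(u)$, which has at most $\beta$ edges. This yields at most $k\beta$ candidate edges. For each candidate $e$ we compute $e\cap U$ by testing membership of the $k$ vertices of $U$ in $e$. Two ways to make this fast are a hash set per large edge, or checking $e\in E_{>\alpha}(u)$ by binary search in the sorted types. This costs $O(k^2\beta\log\beta)$, or $O(k^2\beta)$ with hashing. We then deduplicate the traces, which are subsets of $U$ and so fit in a $k$-bit mask; this gives at most $2^k$ distinct traces. The total is $2^{O(k)}\beta$.

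\emph{Lower part, preprocessing.} For every $e\in E_{\le\alpha}$ and every nonempty subset $X\subseteq e$ with $|X|\le k$, we insert the key $X$ into a dictionary $D$; $X$ is stored as a sorted tuple of at most $k$ vertices, which by the model assumptions is indexable in constant time. The value attached to $X$ records the set of traces $e\cap Y$ for $Y\supseteq X$. A simpler approach is to record only that $X$ occurs as a subset of some small edge. The number of keys is at most $\sum_{e}\sum_{j\le k}\binom{|e|}{j}\le |E_{\le\alpha}|\cdot O(\alpha^k)$, which matches the stated preprocessing.

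\emph{Lower part, query.} The main obstacle is this: a trace $X=e\cap U$ requires $X\subseteq e$ and also $e\cap(U\setminus X)=\emptyset$, and mere presence of $X$ in $D$ does not certify the second condition. I would fix this with Möbius inversion over subsets of $U$. In preprocessing, store for each key $X$ the count $a(X)=|\{e\in E_{\le\alpha}: X\subseteq e\}|$. At query time, for each $X\subseteq U$, the number of small edges whose trace on $U$ is exactly $X$ is
\[ b(X)=\sum_{X\subseteq Y\subseteq U}(-1)^{|Y\setminus X|}a(Y), \]
computed for all $X$ in $O(3^k)$ or $O(k2^k)$ time with $2^k$ dictionary lookups. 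Then $X$ is a trace iff $b(X)>0$. Since each $a(Y)$ requires enumerating only subsets of size at most $k$ of each edge, the preprocessing bound is unchanged.

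Combining the traces from both parts, and deduplicating via bitmasks, gives $E(H[U])$ in time $2^{O(k)}\cdot\beta$ (plus $2^{O(k)}$), as claimed.
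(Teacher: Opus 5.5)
Your proof is correct. For the lower part it is essentially the paper's argument. The paper also precomputes $N[X]=|\{e\in E_{\le\alpha}: X\subseteq e\}|$ by enumerating all subsets of at most $k$ vertices of each small edge, in $O(\alpha^k|E_{\le\alpha}|)$ time. It then recovers the number of edges whose trace on $U$ is exactly $X$ by the same inclusion--exclusion, $\sum_{Y\subseteq U\setminus X}(-1)^{|Y|}N[X\cup Y]$, which is your Möbius inversion $b(X)$.

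Where you genuinely differ is the upper part. The paper does not enumerate large edges. It runs the same count-then-invert scheme there too: it obtains $N[X]$ for $E_{>\alpha}$ on the fly from the sorted types in $O(k\beta)$ per $X$, and feeds it into the same inclusion--exclusion, for $O(2^k k\beta)$ overall. (The relevant quantity is $|\bigcap_{v\in X}E_{>\alpha}(v)|$; the paper writes a union there, which is a slip.)

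You instead list the at most $k\beta$ large edges meeting $U$ and read off their traces directly. This is more elementary, costs only $O(k^2\beta)$ instead of $2^k k\beta$ on the $\beta$ term, and makes the role of the degree bound transparent. The paper's version has the merit of uniformity: a single mechanism decides, for each candidate $X$, whether some edge of either part has trace exactly $X$.

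One small caveat concerns your binary-search variant. It gives $O(k^2\beta\log\beta)$, and that extra $\log\beta$ factor is not absorbed by $2^{O(k)}$, so rely on the hashing variant instead. Alternatively, stay deterministic by either of two routes. You can merge the sorted lists $E_{>\alpha}(u)$ and $E_{>\alpha}(u')$ pairwise in $O(\beta)$ each. Or you can set bit $u$ in an edge-indexed mask array while scanning each $E_{>\alpha}(u)$.

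Your additive $2^{O(k)}$ term, needed when $\beta=0$, is a sensible correction to the statement as written. So is your inclusion of singleton traces, which the paper's proof skips by testing only $|X|\ge 2$.
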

 \begin{proof}
 Given $U$, we check for each $X \subseteq U$ with $|X|\ge 2$ whether there exists $e \in E$ such that $e \cap U = X$, in which case we add $X$ to the hyperedges of $H[U]$.
 We now discuss how $e$ can be found in a generic set of edges $E$; we then show how to apply the idea to $E_{\le \alpha}$ and $E_{> \alpha}$.
 Suppose we have precomputed, for every $X \subseteq U$, the number $N[X] = |\{e \in  E : X \subseteq e\}|$.
 Moreover let $N^*(X,U) = |\{e \in E : e \cap U = X\}|$; note that $N^*(X,U) > 0$ means $e \cap U = X$ for some $e \in E$.
 Now, by inclusion-exclusion,
 \begin{align}
     N^*(X,U) = \sum_{Y \subseteq U \setminus X} (-1)^{|Y|} N[X \cup Y] \,.
 \end{align}
 Therefore, given the counters $N[X]$, we can compute $N^*(X,U)$ in time $2^{O(|U|)} \le 2^{k}$.

 Let us now discuss how to implement this for $E_{\le \alpha}$ and $E_{> \alpha}$.
 For $E_{\le \alpha}$, in the build-up phase for every $e \in E_{\le \alpha}$ we enumerate all subsets $X \subseteq e$ of at most $k$ vertices, increasing $N[X]$ accordingly.
 To this end we store $N[X]$ in an associative array with logarithmic update and access time.
 This computes the counters $N[X]$ explicitly in time $O(\alpha^k \cdot |E_{\le \alpha}|)$.
 For $E_{> \alpha}$, observe that $N[X] = \left|\bigcup_{v \in X} E_{>\alpha}(v)\right|$, and the right-hand side can be computed in time $O(k \beta)$, yielding a total time of $O(2^k \cdot k \beta)$ for computing all counters once given $U$.
 By a final bound, given $U$, checking whether $e$ exists in $E = E_{\le \alpha} \cup E_{> \alpha}$ takes a total time of $2^{O(k)} \cdot \beta$, as claimed.
 \end{proof}

For item 2, we first compute $G[U]=\Gaifman(H[U])$ and then $\sigma(G[U])$.
It is well known that $\sigma(G[U])$ can be computed via Kirchhoff's matrix tree theorem in time $O(k^{\omega})$, where $\omega < 2.38$ is the matrix multiplication exponent.
Let us then focus on computing $G[U]$.
The obvious approach is to consider every pair $\{u,v\} \in U$ and check whether they are adjacent in $H[U]$, that is, in $H$.
This could be done by intersecting their types, that is, by checking whether $E(u) \cap E(v) = \emptyset$.
This however would take time $\Omega(E(u)+E(v))$, which could be linear in $V$.
Once again, we leverage the $(\alpha,\beta)$-niceness of $H$, together with the precomputations done in the build-up phase, to obtain a faster algorithm:


\ignore{
\begin{itemize}
    \item Sample an occurrence $T$ of a treelet on $k$ nodes from $H$
    \item Consider the projection $\Gaifman(H[U])$ of the sub-hypergraph $H[U]$ induced by the nodes of $T$
    \item Reject it with probability $1 - \frac{1}{\sigma(H[U])}$
\end{itemize}

The value $\sigma(H[U])$ can be computed for
any given $H[U]$ in time $O(k^\omega)$, where $\omega$ is the matrix multiplication exponent.

The main obstacle in computing this quantity is the construction of the adjacency matrix of $H[U]$, since in general this depends on the maximum degree $\Delta(H)$, which could be impractically large.
However, we observe that one can exploit the properties of $(\alpha, \beta)$-nice hypergraphs to reduce the complexity of this step.
}

\begin{lemma}\label{lemma:adjacency}
Given a subset $U \subseteq V$ of $k$ vertices, one can compute the adjacency matrix $A_U$ of $\Gaifman(H[U])$ in time $O(k^2(\beta +\ln{|V|}))$.
\end{lemma}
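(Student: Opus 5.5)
We compute $A_U$ one pair at a time. For each of the ${k \choose 2}$ pairs $\{u,v\} \subseteq U$ we decide whether $u \sim v$ in $H$. Note that $u,v$ are adjacent in $\Gaifman(H[U])$ if and only if some $e \in E$ contains both, since then $e \cap U \supseteq \{u,v\}$ is an edge of $H[U]$. Adjacency in $H$ is the disjunction of adjacency in the two parts of the $\alpha$-split: $u \sim v$ in $H$ if and only if $u \in N_{\le\alpha}(v)$ or $u \in N_{>\alpha}(v)$. So it suffices to answer each of these two membership queries within time $O(\beta + \ln|V|)$. Summing over the ${k\choose 2}=O(k^2)$ pairs then gives the claimed bound $O(k^2(\beta+\ln|V|))$.

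\textbf{Lower part.} We use the assumption stated at the beginning of \Cref{subsec:hypermotivo-sample}. There, $G_{\le\alpha}=\Gaif(H_{\le\alpha})$ has been precomputed during the build-up phase, and each adjacency list $N_{\le\alpha}(v)$ is stored sorted. Testing $u \in N_{\le\alpha}(v)$ is then a binary search in a sorted array of length at most $|V|$, which takes $O(\ln|V|)$ time. This is exactly where the $\ln|V|$ term of the bound comes from. We do not scan $E(v)$ in $H_{\le\alpha}$, because that list can be arbitrarily long.

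\textbf{Upper part.} We use the fact that $H$ is $(\alpha,\beta)$-nice, so $|E_{>\alpha}(u)|,|E_{>\alpha}(v)| \le \beta$. The types $E_{>\alpha}(\cdot)$ were sorted in the build-up phase (\Cref{alg:alpha-build}, line sorting types), for instance by edge identifier. We decide $u \in N_{>\alpha}(v)$, that is, $E_{>\alpha}(u)\cap E_{>\alpha}(v)\neq\emptyset$, by a linear merge of the two sorted lists, in time $O(\beta)$. This is the same check used in the correction step of the build-up phase.

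\textbf{Assembly and main concern.} We set $A_U[u][v]=A_U[v][u]=1$ exactly when either test succeeds, and $0$ otherwise; the diagonal entries are $0$. Correctness follows from the equivalence established above. The only delicate point is that no step may depend on $\Delta(H)$. The high-degree contributions all come from $H_{\le\alpha}$, and we sidestep them by the logarithmic-time lookup in the precomputed sorted Gaifman adjacency. This relies on the assumption that the storage and sorting of $G_{\le\alpha}$ were already paid for in the build-up phase, a cost bounded in \Cref{lemma: complexity_build_hypermotivo}.
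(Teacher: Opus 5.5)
Your proposal is correct and follows the paper's argument: you split adjacency into the $H_{\le\alpha}$ test (binary search in the precomputed sorted list $N_{\le\alpha}(v)$, time $O(\ln|V|)$) and the $H_{>\alpha}$ test (merging the sorted types $E_{>\alpha}(u)$ and $E_{>\alpha}(v)$, each of size at most $\beta$, time $O(\beta)$), then sum over the $O(k^2)$ pairs. You also justify explicitly that adjacency in $\Gaifman(H[U])$ coincides with adjacency in $H$, a detail the paper leaves implicit.
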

 \begin{proof}
     Clearly, $A_U[u][v]=1$ if and only if $u \in N_{\le \alpha}(v)$ or $E_{>\alpha}(u) \cap E_{>\alpha}(v) \ne \emptyset$.
     The condition $u \in N_{\le \alpha}(v)$ can be verified in time $O(\log |V|)$ via binary search as $N_{\le \alpha}(v)$ is sorted and has size at most $|V|$.
     The condition $E_{>\alpha}(u) \cap E_{>\alpha}(v) \ne \emptyset$ can be verified in time $O(\beta)$ since both sets are sorted and have size at most $\beta$.
     A union bound over all $k^2$ pairs yields the claim.
 \end{proof}

\ignore{
\begin{lemma}
    The \emph{sampling phase} takes, on each invocation, expected time $O(k \beta^2 + k^2 (\beta+\ln{|V|})) = O(k^2 (\beta^2 + \ln|V|))$, and returns a colorful subset $U \subseteq V$ in $\scV_k(H)$ with probability proportional to the number of spanning trees $\sigma(H[U])$ of $H[U]$, as well as $\sigma(H[U])$.
\end{lemma}
}

%% file: 5_Experiments.tex
We conducted experiments to evaluate the practical performance of \hypermotivo in terms of sensitivity to the parameters, running time, memory usage, and estimation accuracy.
We present the results for a limited choice of parameters; for further ones see our repository.
We implemented \hypermotivo in \CC\ as an extension of \motivo\footnote{\url{https://bitbucket.org/steven_/motivo/}} and made the source publicly available\footnote{\url{https://github.com/gfumagalli9/hyper-motivo/tree/hyper-motivo}}.
Like \motivo, we implemented \hypermotivo with native support for multi-threading.


\subsection{Datasets}\label{sec:experiments_datasets}
We use six hypergraphs obtained from publicly available data, and one synthetic hypergraph generated by a simple random model. See \Cref{tab:freq}.

\begin{table}[t]
\centering
\caption{Hypergraphs used in our experiments.}
\label{tab:freq}

\small

\begin{tabular}{|l|r|r|r|r|r|r|r|}
\hline
$H$ & $|V(H)|$ & $|E(H)|$ & $\rank(H)$ & $\Delta(H)$ & avg.\ deg & $\lVert H \rVert$ & $\lVert \Gaifman(H) \rVert$ \\
\hline
\texttt{SA} & 15,211,989 & 1,103,243 & 61,315 & 356 & 1.7 & 26,109,177 & 29,372,932,379 \\
\texttt{RH} & 100,000    & 55,000    & 77,965 & 26  & 9.6 & 960,168    & 12,083,305,610 \\
\texttt{AR} & 2,268,231  & 4,285,363 & 9,350  & 28,973 & 32.0 & 73,141,425 & 7,351,426,495 \\
\texttt{AX} & 259,961    & 164       & 40,571 & 9   & 1.9 & 487,469    & 5,744,552,301 \\
\texttt{GP} & 479,285    & 173,085   & 10,598 & 202 & 9.5 & 4,573,183  & 1,971,829,081 \\
\texttt{SE} & 36,775     & 702       & 11,391 & 5   & 3.1 & 113,074    & 295,309,540 \\
\texttt{MA} & 73,851     & 5,446     & 1,784  & 173 & 1.8 & 131,714    & 35,382,628 \\
\hline
\end{tabular}

\end{table}

We describe here below how each hypergraph was obtained.
For reproducibility, all scripts used to construct the hypergraphs are available at \href{https://github.com/gfumagalli9/hyper-motivo/tree/hyper-motivo}{our repository.}
Note that some hypergraphs are derived from inhomogeneous data through significant data wrangling (e.g., downloading several XML files from an interface, and then parsing and combining them).

\paragraph{\textsc{stackoverflow-answers} (\texttt{SA}) \& \textsc{mathoverflow-answers} (\texttt{MA})}
The SA and MA dataset, introduced in \cite{Veldt-2020-local}, encode answering activity respectively on StackOverflow and MathOverflow. Vertices correspond to questions, and each hyperedge contains all questions answered by a given user. 

\paragraph{\textsc{amazon-reviews} (\texttt{AR})}
The AR hypergraph has been introduced in \cite{ni2019justifying}. Vertices represent individual reviewers, while each hyperedge corresponds to a product and contains the set of its reviewers.

\paragraph{\textsc{cpc-group} (\texttt{CP})}
A patent–classification hypergraph built from USPTO PatentsView bulk data. Vertices are granted U.S. utility patents; hyperedges are Cooperative Patent Classification (CPC) group codes, each containing all patents assigned to that group.

\paragraph{\textsc{datascience-se-tags} (\texttt{SE})} Vertices represent questions, and hyperedges correspond to tags. By design, each question in stackexchange can have at most 5 tags; in other words, this means that this hypergraph is $(\alpha, 5)$-nice for every possible $\alpha$. The SE hypergraph is built from the public data dump of Datascience Stack-Exchange.

\paragraph{\textsc{arxiv-categories} (\texttt{AX})}
Vertices are arXiv articles, and hyperedges are arXiv subject categories. Each hyperedge contains all articles annotated with that category during a time span of a year. Both primary and secondary categories are included. The dataset has been generated by collecting data from the arXiv base endpoint.

\paragraph{\textsc{Random Hypergraph} (\texttt{RH})}
An artificial hypergraph generated by a simple power-law model. First, we let $V=\{1,\ldots,100.000\}$.
We then generated $55.000$ independent random hyperedges, each one as follows: we drew an integer $s \in \{2,\ldots,|V|\}$ with probability proportional to $s^{-2}$, and we then picked a subset $e \in {V \choose s}$ uniformly at random.

\subsection{Setup}
All experiments were conducted on a dedicated Linux machine equipped with two Intel(R) Xeon(R) CPU X5660  @ 2.80GHz, 128 GB of RAM and 2 TB of storage. \Giacomo{Unless otherwise specified, all experiments reported here are performed for $k=5$ using 8 threads; for other values see the supplementary material.}

For each dataset, and for each $3 \le k \le 8$, we ran both \hypermotivo and \Cref{algo:Naive}, asking each algorithm to take $K=100.000$ samples.
Whenever a time wall budget of 12 hours was reached, the process was killed.
The value $\alpha^\star$ of $\alpha$ chosen by \hypermotivo, see \Cref{fig:alfa_beta}, comes from the linear-time algorithm presented in \Cref{sub:splitting}, with the following modification: instead of minimizing the bound of \Cref{eq:T_bound_2}, we minimize a weighted version of that bound, in the form $\gamma \cdot (\sum_{e \in E_{\le \alpha}} |e|^2) +(1-\gamma)\cdot (\sum_{v \in V} 2^{d_{> \alpha}(v)})$. 
\Giacomo{We set $\gamma = 0.01$; this provided good performance across all datasets.}

In the sampling phase, the hypergraphlet $H[U]$ is not constructed using the algorithm of  \Cref{lemma:subhypergraphlet_build}. We instead use a simpler routine that, for each $v \in U$, intersects the incident hyperedges $e \in E_v$ with $U$. This approach performs well in our experiments and is considerably easier to implement. 

\Giacomo{Finally, we remark that the Amazon Reviews dataset represents a worst-case scenario for our algorithm. In this dataset, high-degree vertices are incentivized to participate in very large hyperedges, which affects the exponential term in the complexity bound. This behavior reflects the natural dynamics of review platforms, where popular products with many reviews attract an increasing number of buyers. Despite this, the performance of our algorithm is comparable to that of the baseline.}

\subsection{Computational Performance}\label{sec:experiments_results}

\subsubsection{Sensitivity to the choice of $\alpha$}
\New{
We evaluated the sensitivity of \hypermotivo's preprocessing time to perturbations of the value of $\alpha^\star$ by simply running it with values ranging from $0.5 \alpha^\star$ to $1.5 \alpha^\star$.
\Cref{fig:alpha_sensitivity} shows that, across all datasets, the algorithm is rather stable and the chosen value $\alpha^\star$ provides good performance.
(This is consistent with the fact that $\alpha^\star$ is not finely-tuned, but chosen in a way only coarsely dependent from $H$, see above).
\begin{figure}
    \centering
    \includegraphics[width=0.8\linewidth]{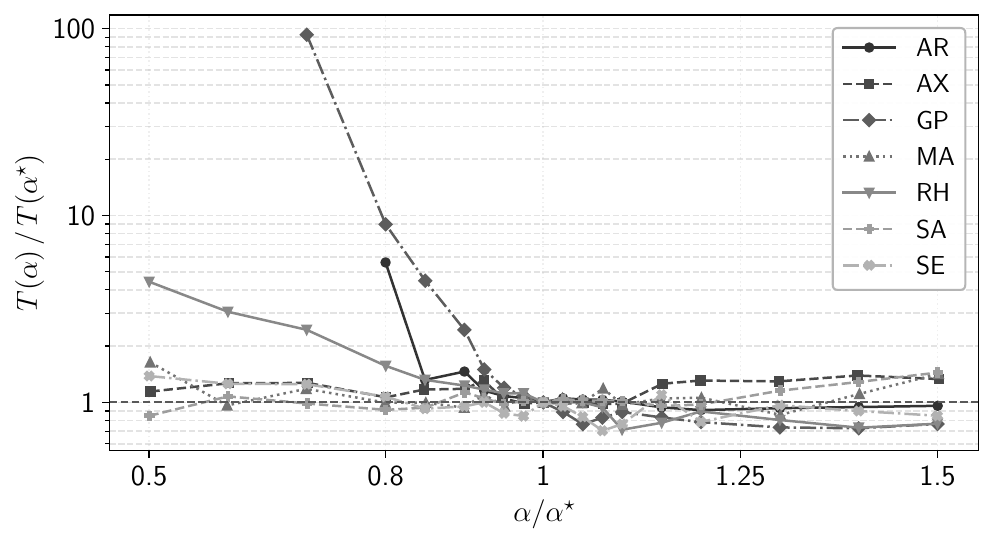}
    \caption{\Giacomo{Sensitivity of \hypermotivo to the choice of $\alpha^\star$. 
    For $\alpha$ varying around $\alpha^\star$, the plot shows the ratio $T(\alpha)/T(\alpha^\star)$ between the build-up times using $\alpha$ and using $\alpha^\star$. 
    Our choice of $\alpha^\star$ yields good performance over all inputs.
    For \texttt{AR} and \texttt{GP} we omit points with running times exceeding 24 hours.}}
    \label{fig:alpha_sensitivity}
\end{figure}
}

\subsubsection{The quadratic barrier and \hypermotivo.}
\Giacomo{
\Cref{fig:build-up-scalabililty} shows the build-up time of \Cref{algo:Naive} and of \hypermotivo as a function of the input size; note how \Cref{algo:Naive} exhibits a behavior coherent with $\Omega(\size{H}^2)$, and \hypermotivo a behavior consistent with $O(\size{H})$.
The inputs are generated as follows. 
Let $n,m,\alpha,\beta,\rho$ be fixed.
We generate an $n$-vertex, $m$-hyperedge hypergraph by sampling $\rho m$ \emph{small} hyperedges and $(1-\rho)m$ \emph{large} hyperedges independently. 
Each small hyperedge is generated by first drawing its size $s$ uniformly from $\{2,\dots,\alpha-1\}$ and then selecting $s$ uniform random vertices.
Each large hyperedge is generated by selecting $L$ uniform random vertices for a fixed $L > \alpha$; any vertex whose degree in $H_{>\alpha}$ exceeds $\beta$ is rejected and resampled.
This yields an $(\alpha,\beta)$-nice hypergraph. 
(This process is not meant to model real data, but rather to control the $(\alpha,\beta)$-niceness of $H$).}

\begin{figure}
    \centering
    \includegraphics[width=0.8\linewidth]{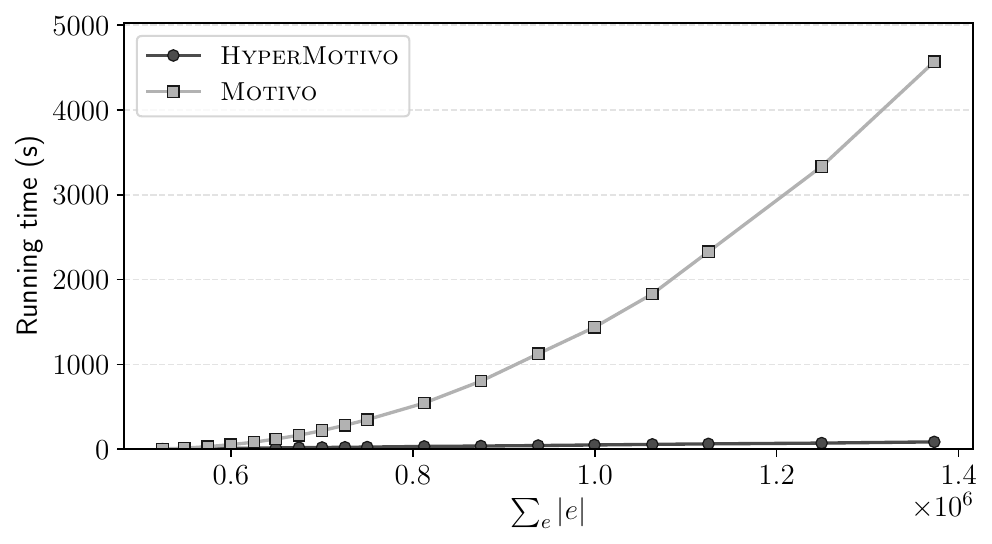}
    \caption{\Giacomo{The linear-time behavior of \hypermotivo and the quadratic-time behavior of the naive \Cref{algo:Naive}, on hypergraphs produced by a simple random model.}}
    \label{fig:build-up-scalabililty}
\end{figure}

\subsubsection{Time and memory improvement.}
\Cref{fig:K5_time_memory_comparison} shows the ratio between the running time and memory usage of the build-up phase of \Cref{algo:Naive} and \hypermotivo (the higher the bar, the larger the improvement), for $k=5$.
For \hypermotivo, this includes the computation of the $\alpha$-split (which is typically negligible) and the random number generators. 
For \Cref{algo:Naive}, this includes the computation of $\Gaifman(H)$, which often dominates, and the whole build-up phase of \motivo.
As expected, \hypermotivo is substantially faster in almost all cases, even in hypergraphs of high degree.
The only exception is \texttt{AR}, where \hypermotivo is only marginally outperformed.
\ignore{
We compared the build-up time (by far the dominant component of the overall running time) of the naive algorithm with that of \hypermotivo.
\Cref{fig:time} reports the build-up times for
k=5 using 8 cores, including the time required for the projection of the hypergraph. For easier comparison, the running time of the baseline algorithm was normalized so that the relative speed-ups for the various hypergraphs could be compared directly.
}
\ignore{
\begin{figure}[h]
  \centering
  \includegraphics[width=\linewidth]{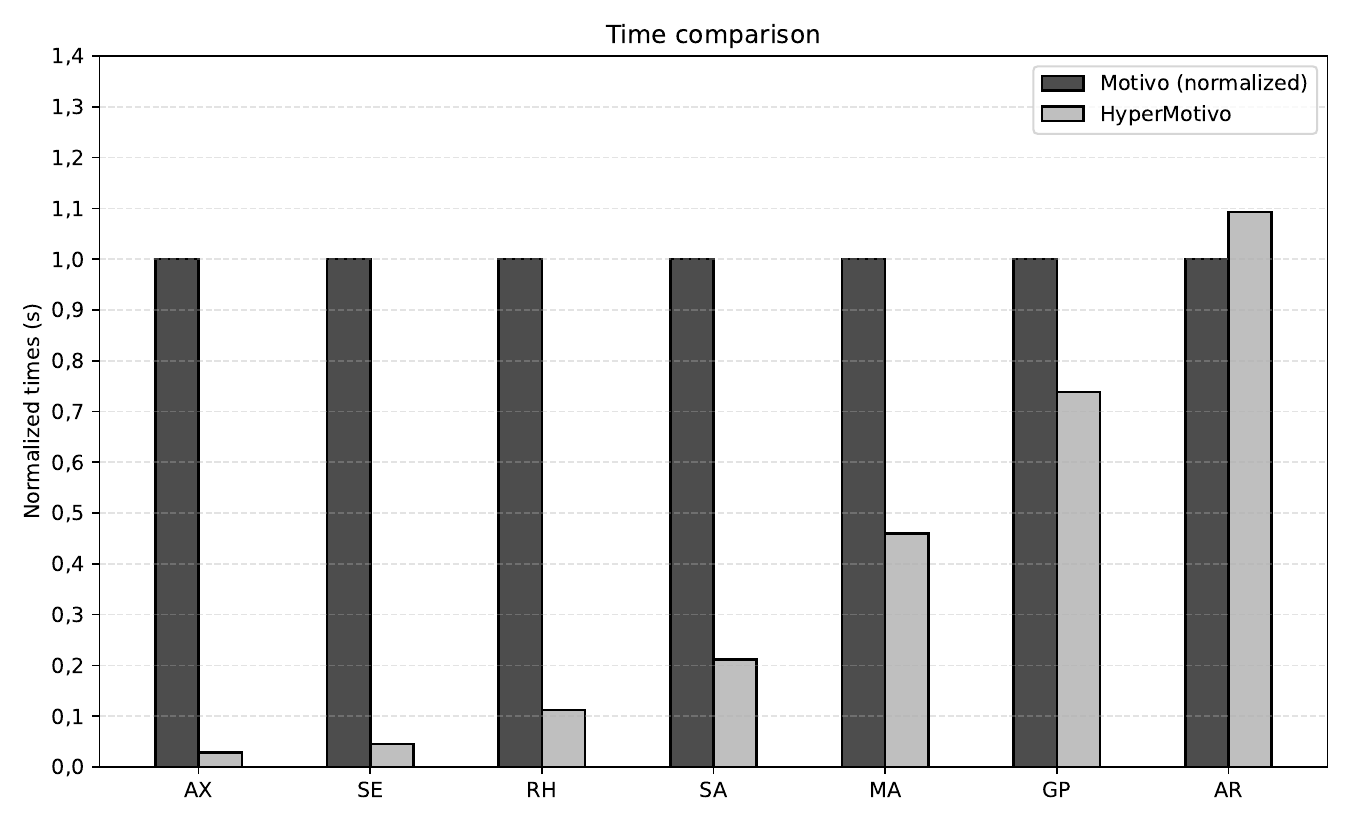}
  \caption{The build-up time of the baseline algorithm (dark bars) and \hypermotivo (light bars). Except for AR, where \hypermotivo pays a small overhead, \hypermotivo is always faster, often by orders of magnitude.}
  \label{fig:time}
\end{figure}

}
\begin{figure}
  \centering
  \includegraphics[width=0.8\linewidth]{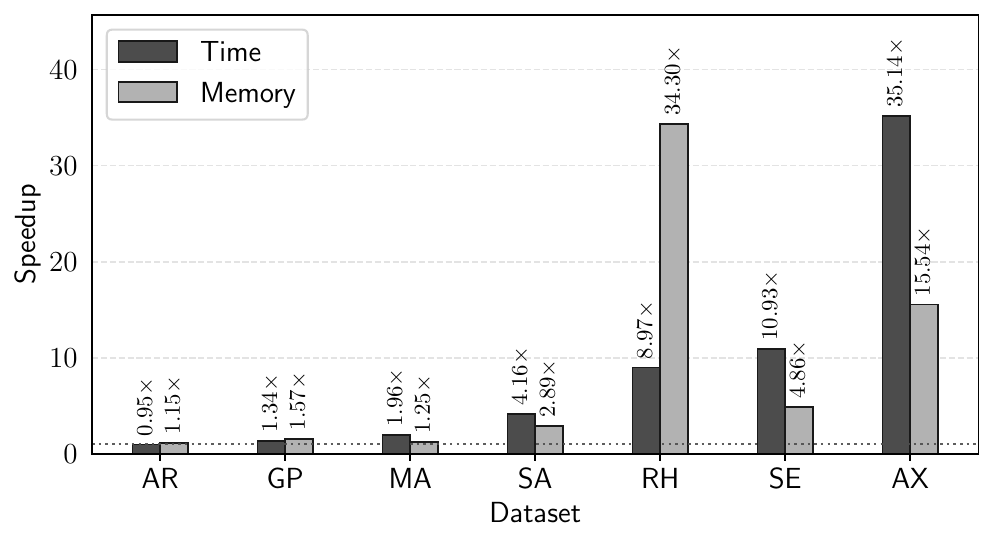}
  \caption{Reduction in wall-clock time and peak memory usage of \hypermotivo over \Cref{algo:Naive}, on the build-up phase.
  On three inputs, \hypermotivo saves at least an order of magnitude in time or memory.
  }
  \label{fig:K5_time_memory_comparison}
\end{figure}

\ignore{
\subsubsection{Memory usage}
We compared \hypermotivo to the baseline algorithm in terms of main-memory consumption. In all cases, the sampling phase is the most demanding part of the computation, as it requires loading all tables generated during the build-up phase.
\Cref{fig:memory} reports the peak main-memory usage during the entire execution of the algorithm for k=5 and 8 cores.
In all observed datasets, \hypermotivo uses significantly less memory than the baseline algorithm.
}
\ignore{
\begin{figure}
  \centering
  \includegraphics[width=0.9\linewidth]{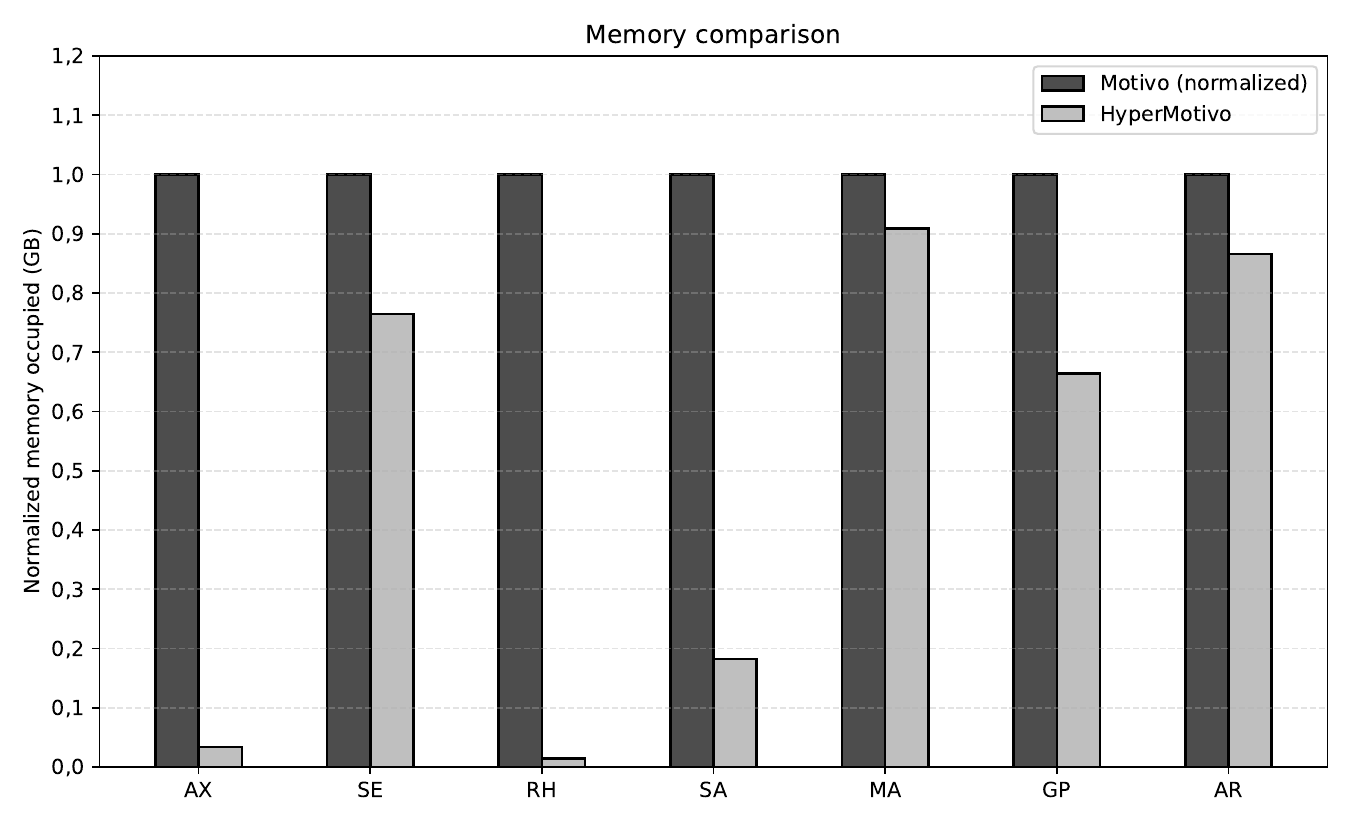}
  \caption{A comparison}
  \label{fig:memory}
\end{figure}
}

\subsubsection{Parallel scalability of the build-up phase}
\Cref{fig:parallel} shows the running time of \hypermotivo's build-up phase for \Giacomo{$k=5$} on the \texttt{MA} hypergraph, as a function of the number of CPU threads.
When using up to \Giacomo{$4$} threads, doubling the number of threads almost halves the running time.
Other datasets, and other values of $k$, exhibit a similar behaviour.
This shows that our techniques can be effectively employed in practice, too.

\begin{figure}
  \centering
  \includegraphics[width=0.8\linewidth,trim=0 20 0 0,clip]{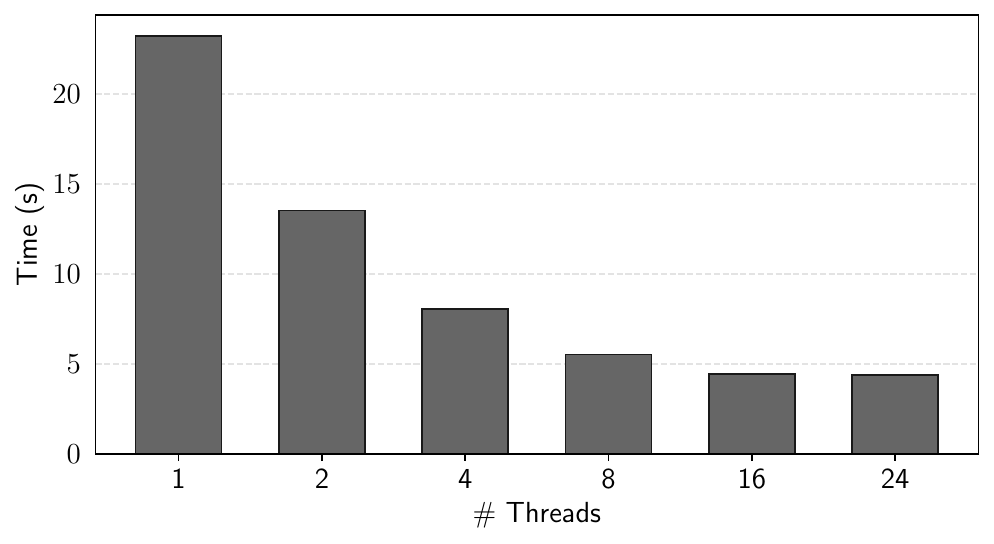}
  \caption{\Giacomo{Running time of \hypermotivo's build-up phase on the \texttt{MA} dataset, for $k=5$, as a function of the number of CPU threads used.
  Doubling the number of threads almost halves the running time, at least up to 4 threads}.
  }
  \label{fig:parallel}
\end{figure}

\subsubsection{Sampling speed}
\Giacomo{\Cref{fig:sampling_throughput_K5}} shows the number of samples per second achieved by \hypermotivo and by \Cref{algo:Naive}.
Note that \hypermotivo's sampling algorithm is more complex than \motivo's one (which is in fact used as a subroutine).
Nonetheless, \hypermotivo is always within a factor of 2 of \motivo.
\begin{figure}
    \centering
    \includegraphics[width=0.8\linewidth,trim=0 25 0 0,clip]{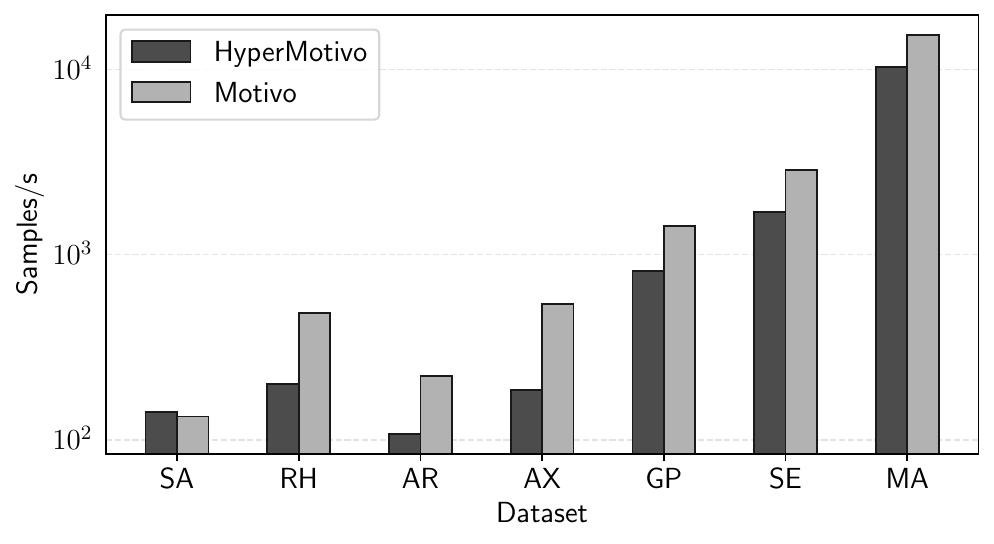}
    \caption{The sampling speed of \hypermotivo and \Cref{algo:Naive}, in samples per second.}
    \label{fig:sampling_throughput_K5}
\end{figure}

\subsection{Accuracy}\label{sec:accuracy}
\Giacomo{
We empirically evaluate the accuracy of \hypermotivo on small synthetic instances, for which exact counts can be computed.
To this end we fix $|V|=1000$ and $|E|=500$.
To generate an edge $e$, we first draw its size $s=|e|$ from a power-law distribution so that $\Pr[s]\propto s^{-3}$; then, we select $s$ vertices uniformly at random. 
We generate in this way four random hypergraphs, \texttt{RH1}, \ldots, \texttt{RH4}, so as to check that the results are consistent.
}
\Giacomo{For each hypergraphlet isomorphism type $H$, we measure the relative count error $\err_H = (\hat c_H - c_H)/c_H$, where $c_H$ denotes the exact count of $H$ in the input hypergraph and $\hat c_H$ is the estimate returned by \hypermotivo when using $10^5$ samples. Thus $\err_H=0$ corresponds to a perfect estimate, and $\err_H=-1$ means $H$ was never sampled.
\Cref{fig:accuracy_k5} reports the distribution of $\err_H$, clustered in bins in the form $[\err-0.25,\err+0.25)$. 
It can be seen that \hypermotivo yields accurate estimates for most of them, with the error distribution concentrated around $0$.
}

\Giacomo{
A note of warning is needed here.
The total number of (non-isomorphic) hypergraphlets on $k$ vertices is \emph{doubly exponential} in $k$; for $k=5$, it is already around $10$ million.\footnote{\url{https://oeis.org/A000612} shows that the number of hypergraphs on $5$ vertices is 18.666.624; at least half of them have the edge $e=V$ and are therefore connected.}
Thus, most of them necessarily have a very low frequency, and their count cannot be easily estimated.
For this reason, \Cref{fig:accuracy_k5} is constructed considering only the 50 hypergraphlets with largest absolute counts; across the four inputs, the ground truth exhibits roughly $100$ distinct hypergraphlet types in total; the remaining types are extremely rare and would therefore be missed by sampling with high probability (i.e., \(\err_H=-1\)).
}

\begin{figure}[!tbp]
    \centering
    \begin{minipage}{0.8\linewidth}
        \centering
        \includegraphics[width=\linewidth,trim=0 25 0 0,clip]{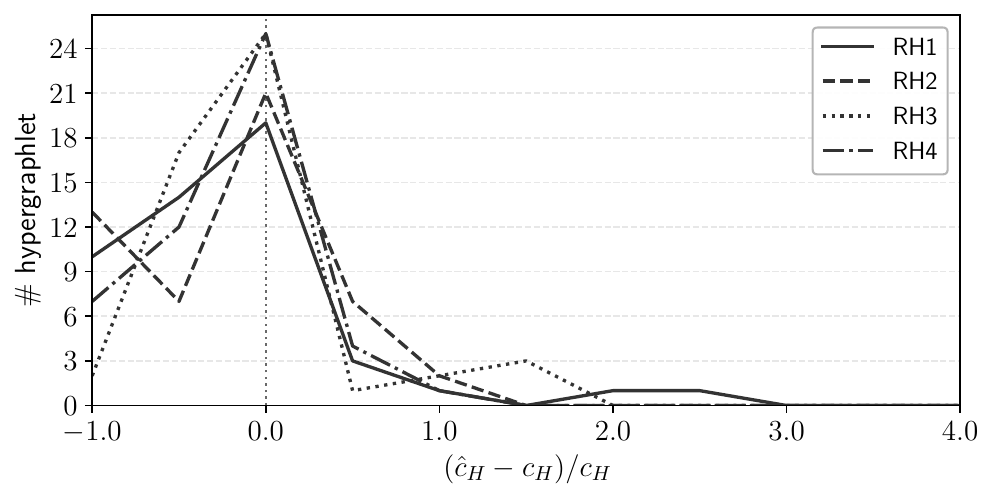}
        
        \vspace{2pt}
        {\small $\err_H$}
    \end{minipage}
    \caption{\Giacomo{Distribution of per-hypergraphlet count errors on four synthetic inputs, in bins in the form $[\err-0.25,\err+0.25)$. To prevent extremely rare hypergraphlets from dominating the tails, we considered only the 50 most frequent ones.}}
    \label{fig:accuracy_k5}
\end{figure}

\subsection{The $k$-hypergraphlet distribution}\label{counting_hypergraphlets}
\Giacomo{Finally, \Cref{fig:K5_frequencies} shows the frequency distribution of $5$-hypergraph\-lets, as estimated by \hypermotivo.
Note that the number of $k$-hypergraphlets explodes super-exponentially with $k$, and moreover for $k>3$ even \emph{depicting} $k$-hypergraphlets is challenging, save for the simplest ones. For this reason we show only the $5$-hypergraphlets with highest aggregate frequency across all datasets.}

\begin{figure}[t]
    \centering
    \includegraphics[width=0.8\linewidth]{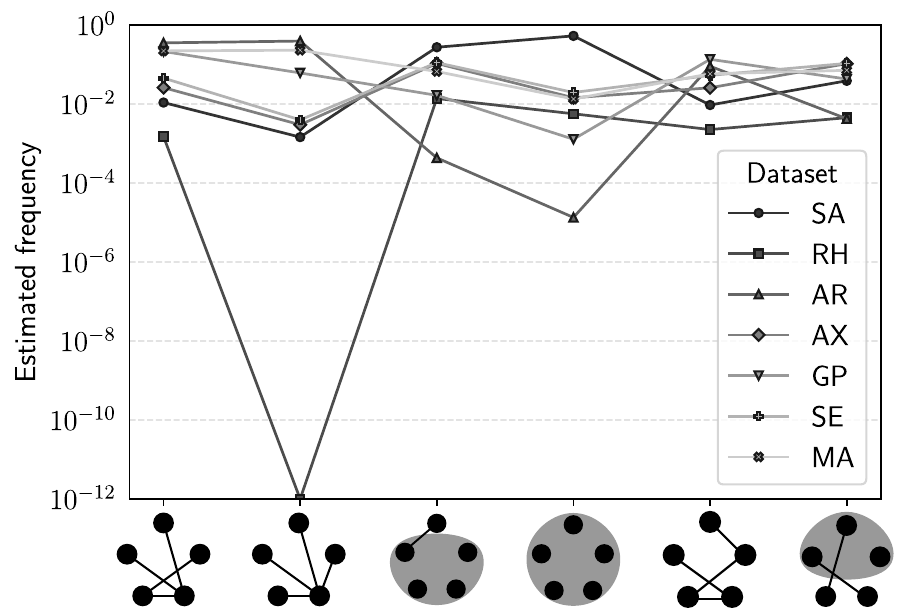}
    \caption{\Giacomo{Frequency distribution of $5$-hypergraphlets with highest aggregate frequency across all hypergraphs.}}
    \label{fig:K5_frequencies}
\end{figure}

%% file: 7_appendix.tex
\appendix

This section complements the main paper by providing additional material that was omitted from the main text due to space constraints.
It is organized as follows:
\begin{itemize}
    \item \Cref{appendix:proofs} contains the proof of \Cref{thm:KSH-Hard}
    \item \Cref{appendix:experiments} reports additional experimental results and the corresponding figures.
\end{itemize}

\ignore{\subsection*{More experiments}
This section of the appendix presents additional plots that complement those shown in \Cref{sec:experiments} but were omitted from the main paper for brevity.
\Cref{fig:time_vs_memory_k3-4-5} reports the ratio between the running time and the peak memory usage of the build-up phase of \Cref{algo:Naive} and \hypermotivo, as in \Cref{sec:experiments_results}. Results for $3$-hypergraphlets and $4$-hypergraphlets are included.
\Cref{fig:absolute_time_k3-4-5} shows the same underlying wall-clock times as \Cref{fig:time_vs_memory_k3-4-5}, but plotted as absolute times for each of the two algorithms, side by side, rather than as speedup ratio. \Cref{fig:K4_frequencies,fig:K5_frequencies} shows more in detail relative frequencies of most frequent $4$ and $5$-hypergraphlets, respectively. Finally, \Cref{fig:threads_math,fig:threads_datascience,fig:threads_arxiv} illustrate how \hypermotivo scales across different datasets under varying numbers of threads and values of $k$.}

\section{Proofs}\label{appendix:proofs}
\subsection*{Proof of \Cref{thm:KSH-Hard}}
This section of the appendix is dedicated to the proof of \Cref{thm:KSH-Hard}. In particular, we prove the following statement.

\begin{theorem}\label{thm:strong_reduction}
    There exists an FPT-reduction from \probKC\ to \probKSH\ that, for every instance $(G,k)$ of \probKC, yields an instance $(H,k')$ of \probKSH\ with the following properties:
    \begin{itemize}
        \item $k' = \binom{k}{2}(k+1) \le \frac{k^3}{2}$.
        \item $H$ is $(k^2,0)$-nice.
    \end{itemize}
    Moreover, the reduction runs in time $O\left(k^2 \cdot \left( |E(G)| + |V(G)| \right)\right)$.
\end{theorem}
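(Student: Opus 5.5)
We reduce from \probKC by encoding each vertex of $G$ as a block of $\binom{k}{2}$ vertices of $H$ and each edge of $G$ as one extra private vertex. The target size $k'$ is then exactly the size of the blocks and private vertices of a $k$-clique, and the weighting makes that size unattainable by any other connected edge set.

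\textbf{Construction.} Given $(G,k)$, let $m=\binom{k}{2}$ and build $H$ as follows.
\begin{itemize}
\item For every $v\in V(G)$, add $m$ \emph{copy vertices} $v_1,\dots,v_m$.
\item For every edge $uv\in E(G)$, add one \emph{private vertex} $z_{uv}$.
\item For every edge $uv\in E(G)$, add the hyperedge $e_{uv}=\{u_1,\dots,u_m,v_1,\dots,v_m,z_{uv}\}$.
\end{itemize}
Set $k'=m\cdot k+m=\binom{k}{2}(k+1)$. We check the stated properties.
\begin{itemize}
\item Since $k\ge 2$, we have $k' = \frac{k(k-1)(k+1)}{2}=\frac{k^3-k}{2}\le \frac{k^3}{2}$.
\item Every hyperedge has size $2m+1=k^2-k+1\le k^2$. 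Hence $E_{>k^2}=\emptyset$, so $\Delta(H_{>k^2})=0$ and $H$ is $(k^2,0)$-nice.
\item Writing out $H$ takes time $O(m(|V(G)|+|E(G)|))=O(k^2(|V(G)|+|E(G)|))$.
\end{itemize}

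\textbf{Structural lemma.} Suppose $U\subseteq V(H)$ has $|U|=k'\ge 2$ and $H\langle U\rangle$ is connected. Then every vertex of $U$ lies in some hyperedge contained in $U$, since otherwise it would be isolated. Let $F=\{uv : e_{uv}\subseteq U\}$. Then $U=\bigcup_{uv\in F}e_{uv}$. Connectivity of $H\langle U\rangle$ implies that the graph $(V(F),F)$ is connected, because two hyperedges $e_{uv},e_{xy}$ intersect only through shared copy vertices, that is, only when $\{u,v\}\cap\{x,y\}\neq\emptyset$. Since the private vertices are distinct and each belongs to a single hyperedge, we obtain the exact count
\[
|U| = m\,|V(F)| + |F| .
\]

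\textbf{Counting argument.} This is the step where the weighting $a=m$ per vertex and $b=1$ per edge matters, and it is the crux of the proof. Let $t=|V(F)|$; we show $m t+|F|=mk+m$ forces $t=k$ and $|F|=m$.
\begin{itemize}
\item If $t\le k-1$, then $mt+|F|\le m(k-1)+\binom{k-1}{2}<mk<k'$.
\item If $t\ge k+1$, then $mt\ge mk+m=k'$, and $|F|\ge t-1\ge 1$ by connectivity, so $mt+|F|>k'$.
\end{itemize}
Hence $t=k$ and $|F|=m=\binom{k}{2}$. So $F$ is a set of $\binom{k}{2}$ edges on $k$ vertices, i.e.\ a $k$-clique of $G$.

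\textbf{Converse and conclusion.} If $K$ is a $k$-clique of $G$, take $U=\bigcup_{uv\in E(K)}e_{uv}$. Then $|U|=mk+m=k'$, and $H\langle U\rangle$ is connected because $K$ is connected and hyperedges at a common vertex of $G$ share its copies. Thus $(G,k)\mapsto(H,k')$ is a correct reduction. Its parameter satisfies $k'=O(k^3)$ and it runs in the stated time, so it is an FPT-reduction.

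Combining with the ETH lower bound $n^{o(k)}$ for \probKC~\cite{Chenetal05} gives \Cref{thm:KSH-Hard}. Indeed, the instance size is polynomial in $n$ and $k=\Theta(\sqrt[3]{k'})$, so an $n^{o(\sqrt[3]{k'})}$ algorithm for \probKSH would yield an $n^{o(k)}$ algorithm for \probKC.
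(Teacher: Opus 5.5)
Your proof is correct and follows the paper's argument essentially step for step. You use the same construction (blocks of $\binom{k}{2}$ copies per vertex plus one private vertex per edge), the same identity $|U| = \binom{k}{2}\,t + s$, and the same case split on $t$ that forces $t=k$ and $s=\binom{k}{2}$. The differences are cosmetic: you obtain $U=\bigcup_{uv\in F}e_{uv}$ directly from the no-isolated-vertex argument rather than through the paper's all-or-nothing observation on the blocks $Q_v$. Your inequality in the case $t\le k-1$ is also slightly more direct, and it covers $k=2$, which the paper sets aside as trivial.
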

This clearly implies that \probKSH is $W[1]$-hard.
\Cref{thm:KSH-Hard} follows from \Cref{thm:strong_reduction} combined with the well-known lower bound $n^{\Omega(k)}$ of \probKC under ETH~\cite{Chen&2006}. 
Note that, by the second item, \probKSH\ remains $W[1]$-hard and is subject to a $n^{\Omega(\sqrt[3]{\kappa})}$ ETH-conditional lower bound even if one takes into account the $(\alpha,\beta)$-niceness of the hypergraph by adopting as a parameter $\kappa=k+\alpha+\beta$.

\ignore{
As a simple corollary, we show that the statement of \Cref{thm:strong_reduction} remains true even under the $(\alpha, \beta)$ parameterization introduced in \Cref{sec4:abnice}. Formally we prove that
\begin{corollary}\label{corollary:strong_reduction_alfabeta}
    The reduction of \Cref{thm:strong_reduction} produces an instance $(H,k')$ of \probKSH such that $H$ is $(O(k^2),0)$-nice.
\end{corollary}
Thus, \probKSH\ remains $W[1]$-hard and inherits the same $n^{o(k')}$ lower bound even under the $(\alpha,\beta)$-parameterization.
}

The rest of this section gives the proof of \Cref{thm:strong_reduction}.

\paragraph{The reduction.}
Let $(G,k)$ be an instance of \probKC.
We construct the corresponding instance $(H,k')$ of \probKSH as follows.
For each vertex $v \in V(G)$, create a set
$Q_v = \{a_v^{1}, \dots, a_v^{\binom{k}{2}}\}$,
and for each edge $e \in E(G)$, create a singleton
$R_e = \{c_e\}$.
Define the hypergraph $H$ by
\begin{equation}
    V(H) = \bigcup_{v \in V(G)} Q_v \;\cup\; \bigcup_{e \in E(G)} R_e
\end{equation}
and
\begin{equation}
    E(H) = \{\, E_{uv} : e = \{u,v\} \in E(G), \; E_{uv} = Q_u \cup Q_v \cup R_e \,\}.
\end{equation}
Finally, we set $k' := (k+1)\binom{k}{2}$, and we take $(H,k')$ as the output instance of \probKSH.

See \Cref{fig:kc-ksh-instance} for a visual representation of the reduction. Intuitively, one can also think of this construction as encoding a weighted graph $G’$ on the same vertex and edge set as $G$, where each vertex has weight $\binom{k}{2}$ and each edge weight $1$. 

We now prove that the construction above is indeed an \ccFPT\ reduction that satisfies the claim of \Cref{thm:strong_reduction}.

\newcommand{\fillQ}[1]{%
  \node[smallvertex] at ($(#1.center)+(-0.18,0.10)$) {};
  \node[smallvertex] at ($(#1.center)+(-0.08,0.04)$) {};
  \node[smallvertex] at ($(#1.center)+(0.10,0.06)$) {};
  \node[smallvertex] at ($(#1.center)+(0.18,-0.02)$) {};
  \node[smallvertex] at ($(#1.center)+(-0.14,-0.06)$) {};
  \node[smallvertex] at ($(#1.center)+(0.00,-0.12)$) {};
}

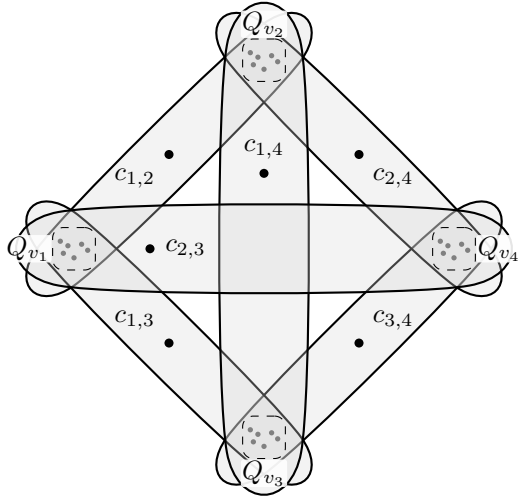
\begin{figure}[t]
  \centering

\ignore{
  \begin{subfigure}{0.45\textwidth}
    \centering
    \begin{tikzpicture}[
        scale=1.0,
        vertex/.style={circle,fill=black,inner sep=1.5pt}
      ]

      \node[vertex,label=above left:$v_1$] (v1) at (0,0) {};
      \node[vertex,label=above:$v_2$]      (v2) at (1.7,1.7) {};
      \node[vertex,label=below:$v_3$]      (v3) at (1.7,-1.7) {};
      \node[vertex,label=above right:$v_4$](v4) at (3.4,0) {};

      \draw (v1) -- (v2);
      \draw (v1) -- (v3);
      \draw (v1) -- (v4);
      \draw (v2) -- (v3);
      \draw (v2) -- (v4);
      \draw (v3) -- (v4);

    \end{tikzpicture}
    \caption{\Giacomo{descrizione o è sufficiente quella della figure?}}
    \label{fig:kc-instance}
  \end{subfigure}
  \hfill
}
    \centering
    \begin{tikzpicture}[
        vertex/.style={circle,fill=black,inner sep=1.2pt},
        smallvertex/.style={circle,fill=gray,inner sep=0.7pt},
        block/.style={draw,dashed,rounded corners,inner sep=8pt},
        cnode/.style={vertex},
        hyperedge/.style={draw,thick,fill=gray!30,fill opacity=0.3},
        every node/.style={font=\small},
        labelbox/.style={fill=white,fill opacity=0.9,text opacity=1,inner sep=0.5pt,rounded corners=1pt}
      ]

      \node[block,
            label={[labelbox]left:$Q_{v_1}$}] (Q1) at (0,0) {};
      \node[block,
            label={[labelbox]above:$Q_{v_2}$}] (Q2) at (2.5,2.5) {};
      \node[block,
            label={[labelbox]below:$Q_{v_3}$}] (Q3) at (2.5,-2.5) {};
      \node[block,
            label={[labelbox]right:$Q_{v_4}$}](Q4) at (5,0) {};

      \fillQ{Q1};
      \fillQ{Q2};
      \fillQ{Q3};
      \fillQ{Q4};

      \node[cnode,label=below left:$c_{1,2}$] (c12) at ($(Q1)!0.5!(Q2)$) {};
      \node[cnode,label=above left:$c_{1,3}$] (c13) at ($(Q1)!0.5!(Q3)$) {};
      \node[cnode,label=above:$c_{1,4}$] (c14) at (2.5, 1) {};
      \node[cnode,label=right:$c_{2,3}$] (c23) at (1, 0) {};
      \node[cnode,label=below right:$c_{2,4}$] (c24) at ($(Q2)!0.5!(Q4)$) {};
      \node[cnode,label=above right:$c_{3,4}$] (c34) at ($(Q3)!0.5!(Q4)$) {};

      \begin{scope}[on background layer]
        \draw[hyperedge]
          plot[smooth cycle, tension=0.5] coordinates {
            ($(Q1.south)+(0,-0.2)$)
            ($(Q1.west)+(-0.2,0)$)
            ($(Q2.north)+(0,0.2)$)
            ($(Q2.east)+(0.2,0)$)
          };
      \end{scope}
      \begin{scope}[on background layer]
        \draw[hyperedge]
          plot[smooth cycle, tension=0.5] coordinates {
            ($(Q3.south)+(0,-0.2)$)
            ($(Q3.west)+(-0.2,0)$)
            ($(Q4.north)+(0,0.2)$)
            ($(Q4.east)+(0.2,0)$)
          };
      \end{scope}

      \begin{scope}[on background layer]
        \draw[hyperedge]
          plot[smooth cycle, tension=0.5] coordinates {
            ($(Q1.north)+(0,0.2)$)
            ($(Q1.west)+(-0.2,0)$)
            ($(Q3.south)+(0,-0.2)$)
            ($(Q3.east)+(0.2,0)$)
          };
      \end{scope}
      \begin{scope}[on background layer]
        \draw[hyperedge]
          plot[smooth cycle, tension=0.5] coordinates {
            ($(Q2.north)+(0,0.2)$)
            ($(Q2.west)+(-0.2,0)$)
            ($(Q4.south)+(0,-0.2)$)
            ($(Q4.east)+(0.2,0)$)
          };
      \end{scope}
      \begin{scope}[on background layer]
        \draw[hyperedge]
          plot[smooth cycle, tension=0.5] coordinates {
            ($(Q2.east)+(0.2,0.2)$)
            ($(Q2.west)+(-0.2,0.2)$)
            ($(Q3.west)+(-0.2,-0.2)$)
            ($(Q3.east)+(0.2,-0.2)$)
          };
      \end{scope}
      \begin{scope}[on background layer]
        \draw[hyperedge]
          plot[smooth cycle, tension=0.5] coordinates {
            ($(Q1.north)+(-0.2,0.2)$)
            ($(Q1.south)+(-0.2,-0.2)$)
            ($(Q4.south)+(0.2,-0.2)$)
            ($(Q4.north)+(0.2,0.2)$)
          };
      \end{scope}

    \end{tikzpicture}

    \caption{The hypergraph $H$ obtained by the reduction of \Cref{thm:strong_reduction} on $G=K_4$.}
  \label{fig:kc-ksh-instance}
\end{figure}

\begin{claim}
The construction above runs in time $O\left(k^2 \cdot \left( |E(G)| + |V(G)| \right)\right)$.
\end{claim}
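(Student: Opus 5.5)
We count the work needed to write down the vertex set and the edge set of $H$ and check that it is $O(k^2(|E(G)|+|V(G)|))$. We assume $G$ is given by adjacency lists, so $V(G)$ and $E(G)$ can be enumerated in time $O(|V(G)|+|E(G)|)$.

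First, the vertices. For each $v \in V(G)$ we create the $\binom{k}{2}$ vertices $a_v^1,\dots,a_v^{\binom{k}{2}}$ of $Q_v$, stored as a list with a pointer from $v$. This costs $O(k^2)$ per vertex of $G$, hence $O(k^2|V(G)|)$ in total. For each $e \in E(G)$ we create the single vertex $c_e$, which costs $O(|E(G)|)$. Altogether $|V(H)| = \binom{k}{2}|V(G)| + |E(G)|$, and $V(H)$ is built in time $O(k^2|V(G)| + |E(G)|)$.

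Next, the hyperedges. For each $e=\{u,v\}\in E(G)$ we output $E_{uv}=Q_u\cup Q_v\cup R_e$ by copying the two stored lists and appending $c_e$. The sets $Q_u$, $Q_v$, $R_e$ are pairwise disjoint by construction, so no deduplication is needed. Thus each hyperedge has size exactly $2\binom{k}{2}+1 = O(k^2)$ and is written in time $O(k^2)$. Over all edges this is $O(k^2|E(G)|)$, which equals $\sum_{e\in E(H)}|e|$.

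Finally, computing $k'=(k+1)\binom{k}{2}$ takes $O(1)$ arithmetic operations in the RAM model. Summing the three contributions gives $O(k^2(|E(G)|+|V(G)|))$; this matches $\size{H}$ up to constants, so the bound is tight for an explicit output.

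The only point requiring care is the representation. Distinct edges of $G$ yield distinct hyperedges, since $c_e$ lies in $E_{uv}$ alone. Hence $E(H)$ can be emitted as a plain list with no duplicate checking, and no hashing of sets is needed. Apart from this, the argument is routine.
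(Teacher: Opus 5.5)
Your proof is correct and matches the paper's argument: $O(k^2)$ work per vertex of $G$ to create $Q_v$, and $O(k^2)$ work per edge of $G$ to create $c_e$ and the hyperedge $E_{uv}$ of size $2\binom{k}{2}+1$, for a total of $O(k^2(|V(G)|+|E(G)|))$. Your remarks on disjointness of $Q_u,Q_v,R_e$ and distinctness of the hyperedges are correct additions that the paper leaves implicit.
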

\begin{proof}
    Let $(G,k)$ be an instance of \probKC. For each vertex $v \in V(G)$ we create the set $Q_v$ of size $\binom{k}{2}$, which takes a total of $O(k^2 \cdot |V(G)|)$ time. Then, for each edge $e = \{u,v\} \in E(G)$ we create the vertex $c_{e}$ and the hyperedge $E_{uv} = Q_u \cup Q_v \cup \{c_{e}\}$, of size $2\binom{k}{2} + 1 = O(k^2)$, for a total time of $O(k^2 \cdot |E(G)|)$. Summing these two bounds gives the claimed running time $O(k^2 \cdot (|V(G)| + |E(G)|))$.
\end{proof}

The next observation is useful to prove the correctness of the reduction. From now on, $H[U]$ refers to the section hypergraph induced by $U \subseteq V(H)$, as formalized in \Cref{def:SectionHypergraph}.
\begin{oss}\label{oss:strong_reduction_obs}
    Let $(G,k)$ be an instance of \probKC and $(H,k')$ the instance of \probKSH obtained via the reduction of \Cref{thm:strong_reduction}. Let $U\subseteq V(H)$ be such that $H[U]$ is connected and $|U|\geq 2$. Then:
    \begin{itemize}
        \item For every $v \in V(G)$, either $Q_v \cap U =\emptyset$ or $Q_v \subseteq U$.
        \item For every $e=\{u,v\} \in E(G)$, if $c_e \in U$ then $Q_u \cup Q_v \subseteq U$.
    \end{itemize}
\end{oss}
\begin{proof}
For the first item, suppose for a contradiction that there exists $v \in V(G)$ such that $Q_v \cap U \neq \emptyset$ but $Q_v \nsubseteq U$. Take any $x \in Q_v \cap U$ and any $y \in Q_v \setminus U$. Every hyperedge of $H$ containing $x$ is of the form $E_{vw} = Q_v \cup Q_w \cup R_{e}$, where $e = \{v,w\}$ for some neighbor $w$ of $v$ in $G$. However, since $y \notin U$, no such hyperedge $E_{vw}$ is entirely contained in $U$, and therefore no hyperedge containing $x$ appears in $H[U]$. Hence $x$ is an isolated vertex in $H[U]$. As $|U| \ge 2$ and $H[U]$ is connected, this is impossible. Thus either $Q_v \subseteq U$ or $Q_v \cap U = \emptyset$.

For the second item, if $c_e \in U$ but some vertex of $Q_u$ or $Q_v$ is not in $U$, then the hyperedge $E_{u,v} = Q_u \cup Q_v \cup R_{e}$ is not entirely contained in $U$. Thus no hyperedge of $H[U]$ contains $c_e$, and $c_e$ would be isolated in $H[U]$. Again this contradicts the connectivity of $H[U]$ (for $|U| \ge 2$), so we must have $Q_u \cup Q_v \subseteq U$.
\end{proof}

Intuitively, the observation above states that $U$ cannot "partially pick" a $Q_v$: as soon as $U$ contains one vertex of $Q_v$, then it must contain the whole block. Similarly, $U$ cannot contain a vertex $c_e$ without also containing the whole blocks $Q_u$ and $Q_v$.

\begin{claim}
The construction above is an \ccFPT reduction from \probKC to \probKSH.
\end{claim}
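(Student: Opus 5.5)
To show that the construction is an \ccFPT reduction, I would check three things. First, it is computable in FPT time; this is already given by the previous claim. Second, the new parameter $k'=(k+1)\binom{k}{2}$ depends only on $k$. Third, $G$ has a $k$-clique if and only if $H$ has a $k'$-vertex set $U$ with $H\langle U\rangle$ connected. Only the third point needs an argument. Throughout I would use the section-hypergraph semantics and \Cref{oss:strong_reduction_obs}.

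\emph{Forward direction.} Let $K\subseteq V(G)$ be a $k$-clique. Set
\[
U=\bigcup_{v\in K}Q_v\cup\{c_e : e\in E(G[K])\}.
\]
Then $|U|=k\binom{k}{2}+\binom{k}{2}=k'$. For each clique edge $\{u,v\}$, the hyperedge $E_{uv}=Q_u\cup Q_v\cup\{c_{uv}\}$ lies inside $U$, so it belongs to $H\langle U\rangle$. These hyperedges cover $U$. Two hyperedges $E_{uv}$ and $E_{uw}$ share $Q_u$, and $G[K]$ is connected (for $k\ge 2$), so $H\langle U\rangle$ is connected. The degenerate cases $k\le 1$ can be handled trivially.

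\emph{Backward direction.} Let $U$ have size $k'$ with $H\langle U\rangle$ connected; note $k'\ge 2$. By \Cref{oss:strong_reduction_obs}, $U$ is a union of whole blocks $Q_v$ for $v\in A$, for some $A\subseteq V(G)$, together with a set $C$ of edge vertices $c_e$. Every $e\in C$ has both endpoints in $A$, so $C\subseteq E(G[A])$ and $|C|\le\binom{|A|}{2}$. This gives
\[
k'=|A|\binom{k}{2}+|C|.
\]

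The step I expect to be the main obstacle is showing that this equation forces $|A|=k$ and $G[A]=K_k$. I would split into cases on $a=|A|$.

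\emph{Case $a\ge k+1$.} Then $a\binom{k}{2}\ge k'$, so $|C|\le 0$. Thus $C=\emptyset$ and $a=k+1$. But every hyperedge contains some $c_e$, so with $C=\emptyset$ no hyperedge of $H$ lies inside $U$. Hence $H\langle U\rangle$ has no edges, and since $|U|\ge 2$ it is disconnected, a contradiction.

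\emph{Case $a\le k-1$.} Then we need $|C|=(k+1-a)\binom{k}{2}$, while $|C|\le\binom{a}{2}<\binom{k}{2}$. Since $k+1-a\ge 2$, the required $|C|$ is at least $2\binom{k}{2}$, which exceeds the bound, a contradiction.

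\emph{Case $a=k$.} Then $|C|=\binom{k}{2}$. Since $C\subseteq E(G[A])$ and $|E(G[A])|\le\binom{k}{2}$, we get $E(G[A])=C$ with all $\binom{k}{2}$ pairs present. So $A$ is a $k$-clique.

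Finally, the running time of the previous claim, together with the bound $k'\le k^3/2$, would complete the proof that the construction is an FPT reduction.
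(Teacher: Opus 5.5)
Your proof is correct and follows the paper's argument essentially step for step. It uses the same set $U$ in the forward direction, the same decomposition of $U$ into whole blocks $Q_v$ ($v\in A$) plus edge vertices via Observation~\ref{oss:strong_reduction_obs}, the same identity $k'=|A|\binom{k}{2}+|C|$, and the same three-way case split on $|A|$ versus $k$; the only differences are cosmetic (you skip the lower bound $|C|\ge |A|-1$ from connectivity of $G[A]$, which the paper derives but does not actually need, and your case $|A|\le k-1$ already works for $k\ge 2$).
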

\begin{proof}
We prove that $(G,k)$ is a YES-instance of \probKC if and only if $(H,k')$ is a YES-instance of \probKSH. Recall that a YES-instance of \probKSH means that there exists $U \subseteq V(H)$ such that $|U| = k'$ and the section hypergraph $H[U]$ is connected. 
    
    \paragraph*{($\Rightarrow$) From \probKC to \probKSH.} Assume that $G$ contains a $k$-clique $K \subseteq V(G)$; define $U$ as 
    \begin{equation}
        U = \bigcup_{e=\{u,v\} \in E(G[K])}E_{uv} = \bigcup_{e=\{u,v\} \in E(G[K])}\left(Q_u \cup Q_v \cup R_e \right),
    \end{equation}
    where $G[K]$ is the subgraph induced by $K$. 
    By construction, each vertex $v \in K$ contributes $\binom{k}{2}$ vertices through $Q_v$ and each edge $e=\{u,v\} \in E(G[K])$ contributes exactly one vertex through $R_{e}$. Clearly, since $K$ has $\binom{k}{2}$ edges, $|U| = k\binom{k}{2} + \binom{k}{2} = (k+1)\binom{k}{2} = k'$. 

    We now show that $H[U]$ is connected. For every edge $e=\{u,v\} \in E(G[K])$ we have $E_{uv} \subseteq U$, hence $E_{uv}$ is an edge of $H[U]$. In particular, for every pair of distinct vertices $u,v \in K$ there is a hyperedge $E_{uv}$ contained in $H[U]$ that includes all vertices of $Q_u$, all vertices of $Q_v$, and the vertex $c_e$ corresponding to $e=\{u,v\}$; thus, $H[U]$ is connected. 

    \paragraph*{($\Leftarrow$) From \probKSH to \probKC.} 
    Assume that $(H,k')$ is a YES-instance of \probKSH. Thus, there exists $U \subseteq V(H)$ such that $|U| = k'$ and $H[U]$ is connected. Intuitively, in what follows we decompose $U$ into whole vertex blocks $Q_v$ and edge vertices $c_e$, and then use counting and connectivity arguments to show that $U$ must correspond to exactly $k$ such blocks that are pairwise adjacent in $G$, that is, to a $k$-clique.
    
    First, define $T$ and $S$ as follows
    \begin{equation}
        T = \{v \in V(G) : Q_v\subseteq U\} \ \ \text{and} \ \ S =\{e \in E(G): c_e \in U\}.
    \end{equation}
    Each vertex $v \in T$ contributes all $\binom{k}{2}$ vertices of $Q_v$ to $U$, and each edge $e \in S$ contributes the vertex $c_e$.
    By construction of $H$ we have
    \begin{equation}
        V(H) = \bigcup_{v \in V(G)} Q_v \;\cup\; \bigcup_{e \in E(G)} R_e,
    \end{equation}
    so every vertex of $U$ lies either in some $Q_v$ or in some $R_e = \{c_e\}$.
    Moreover, by Observation~1, if $v \notin T$ then $Q_v \cap U = \emptyset$, and if $c_e \in U$ then $e \in S$.
    Hence
    \begin{equation}
        U = \bigcup_{v \in T} Q_v \;\cup\; \{c_e : e \in S\},
    \end{equation}
    and we can express the size of $U$ as
    \begin{equation}
        |U| = |T|\binom{k}{2} + |S|.
    \end{equation}
    Call $t = |T|$ and $s = |S|$. Using $|U|=k'=(k+1)\binom{k}{2}$, we obtain
    \begin{equation}\label{eq:s-k-t}
      (k+1)\binom{k}{2} = t\binom{k}{2} + s \qquad\Longrightarrow\qquad s = (k+1 - t)\binom{k}{2}.
    \end{equation}
    
    Next, note that Observation~\ref{oss:strong_reduction_obs} also implies that every edge $e = \{u,v\} \in S$ has both endpoints in $T$: indeed, since $c_e \in U$, the second item of Observation~\ref{oss:strong_reduction_obs} gives $Q_u \cup Q_v \subseteq U$, so $u,v \in T$ by definition of $T$. Hence $S \subseteq \binom{T}{2}$. Let $G[T]$ be the graph with vertex set $T$ and edge set $S$.
    By construction of $H$, the connected components of $H[U]$ correspond exactly to the connected components of $G[T]$: contracting each block $Q_v$ to a single
    vertex $v$ and deleting the vertices $\{c_e\}$ yields $G[T]$, and conversely every edge of $G[T]$ corresponds to a hyperedge $E_{uv}$ in $H[U]$. Since $H[U]$ is connected, the graph $G[T]$ is connected.
    Therefore,
    \begin{equation}\label{eq:s-lower}
      s \ge t-1,
    \end{equation}
    because any connected graph on $t$ vertices has at least $t-1$ edges. On the other hand, we clearly have
    \begin{equation}\label{eq:s-upper}
      s \le \binom{t}{2}.
    \end{equation}
    We now derive constraints on $t$ using \cref{eq:s-k-t,eq:s-lower,eq:s-upper}. In particular, we show that $t = k$, meaning that $s = \binom{k}{2}$ and thus $G$ contains a $k$ clique. We may assume $k \ge 3$, since for $k \le 2$ both \probKC and \probKSH are trivial.

    \emph{Case 1: $t \ge k+1$.}
    From \cref{eq:s-k-t} we obtain $s = (k+1-t)\binom{k}{2} \le 0$.
    However, $H[U]$ is connected and $|U| = k' > 1$, so $H[U]$ must contain at least one hyperedge, which implies $s \ge 1$. This is a contradiction. Thus $t \le k$.

    \emph{Case 2: $t \leq k-1$.} From \cref{eq:s-k-t} we have $k+1-t \ge 2$, and hence
    \begin{equation}
        s = (k+1-t)\binom{k}{2} \;\ge\; 2\binom{k}{2} \;=\; k(k-1).
    \end{equation}
    On the other hand, since $s \le \binom{t}{2}$ and $t \le k-1$, we have
    \begin{equation}
      k(k-1) \leq s  \le \binom{k-1}{2} \;=\; \frac{(k-1)(k-2)}{2}.
    \end{equation}
    But clearly, for $k \ge 3$ we obtain
    \begin{equation}
      k(k-1) > \frac{(k-1)(k-2)}{2},
    \end{equation}
    which leads to a contradiction. Therefore $t \not\le k-1$. The only remaining possibility is $t = k$. Finally, substituting into~\eqref{eq:s-k-t} we get
    \begin{equation}
        s = (k+1-k)\binom{k}{2} = \binom{k}{2}.
    \end{equation}
    Since $s = |S|$ and $S \subseteq \binom{T}{2}$ with $|T| = k$, this means that $S$ contains all possible $\binom{k}{2}$ edges on the vertex set $T$. Therefore $G[T]$ is a clique on $k$ vertices, and hence $G$ contains a $k$-clique. This is sufficient to prove the claim.
\end{proof}

\begin{claim}
The construction above ensures that $k' = (k+1)\binom{k}{2} \le \frac{k^3}{2}$ and that $H$ is $(k^2,0)$-nice.
\end{claim}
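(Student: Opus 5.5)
The plan is to prove the two properties separately, by direct computation from the construction. Both are elementary, so the main care is in the niceness part: getting the direction of the definition of $(\alpha,\beta)$-niceness right.

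\emph{The bound on $k'$.} We have
\[
k'=(k+1)\binom{k}{2}=\frac{(k+1)k(k-1)}{2}=\frac{k(k^2-1)}{2}=\frac{k^3-k}{2}\le\frac{k^3}{2},
\]
which holds for every $k\ge 0$.

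\emph{Niceness.} By \Cref{def:alfa_beta_nice}, $H$ is $(k^2,0)$-nice exactly when $\Delta(H_{>k^2})\le 0$, i.e.\ when $E_{>k^2}=\emptyset$. (We may regard $H_{>k^2}$ as having no incidences at all; then its degree is $0$ vacuously, even if $V(H)$ contained vertices lying in no hyperedge.) So it suffices to show that every hyperedge of $H$ has size at most $k^2$. Each hyperedge is $E_{uv}=Q_u\cup Q_v\cup R_e$ for some edge $e=\{u,v\}$ of $G$, with $u\ne v$. The three sets are pairwise disjoint, since the $Q$-blocks of distinct vertices and the singleton $R_e$ consist of distinct fresh vertices. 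Hence
\[
|E_{uv}|=2\binom{k}{2}+1=k(k-1)+1=k^2-k+1\le k^2
\]
for $k\ge 1$. Therefore $E_{\le k^2}=E(H)$ and $E_{>k^2}=\emptyset$, so the upper part has maximum degree $0$ and $H$ is $(k^2,0)$-nice.

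Combined with the two earlier claims (running time, and the correctness of the FPT reduction), this completes the proof of \Cref{thm:strong_reduction}.
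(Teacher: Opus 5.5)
Your proposal is correct and follows the paper's proof essentially step for step: the same bound $(k+1)\binom{k}{2} = \frac{k(k^2-1)}{2} \le \frac{k^3}{2}$, and the same computation $|E_{uv}| = 2\binom{k}{2}+1 = k^2-k+1 \le k^2$ showing that the upper part of the $k^2$-split has no hyperedges and hence degree $0$. Your explicit remark that $Q_u$, $Q_v$, $R_e$ are pairwise disjoint, which the size computation needs, is a point the paper leaves implicit.
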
    
\begin{proof}
First point follows simply by definition of the reduction; indeed we have $k' = (k+1)\binom{k}{2} \leq \frac{k^2 \cdot k}{2} = \frac{k^3}{2}$ for $k \ge 1$.

We now show that $H$ is $(k^2,0)$-nice. By construction, every hyperedge of $H$ is of the form
\begin{equation}
  E_{uv} = Q_u \cup Q_v \cup R_e,
\end{equation}
where $e = \{u,v\} \in E(G)$. Thus
\begin{equation}
  |E_{uv}| = |Q_u| + |Q_v| + |R_e|
           = \binom{k}{2} + \binom{k}{2} + 1
           = k(k-1) + 1,
\end{equation}
and hence
\begin{equation}
  k(k-1) + 1 = k^2 - k + 1 \le k^2
\end{equation}
for every $k \ge 1$. Therefore the rank of $H$ (the maximum size of a hyperedge) is at most $k^2$.

Consider the $\alpha$-split of $H$ with $\alpha := k^2$. Since no hyperedge has size greater than $\alpha$, the upper part $H_{>\alpha}$ contains no hyperedges, and in particular every vertex has degree $0$ in $H_{>\alpha}$. By the definition of $(\alpha,\beta)$-niceness, this means that $H$ is $(k^2,0)$-nice.
\end{proof}

\clearpage

\section{Additional Experimental Results}\label{appendix:experiments}
This section presents additional experimental results that complement the evaluation in Section 7
We extend the analysis along four main directions.

First, we provide further measurements of the build-up phase, including results for smaller hypergraphlet sizes ($k \in \{3,4\}$), absolute running times, and a detailed breakdown of the build-up cost across datasets, which complement the speedup analysis reported in the main paper.

Second, we report a more detailed characterization of hypergraphlet frequency distributions, including the most frequent $4$- and $5$-hypergraphlets across all datasets.

Third, we include an extended scalability study that evaluates the parallel performance of \hypermotivo{} across multiple datasets and values of $k$, covering a wider range of thread configurations and hypergraphlet sizes than those shown in the main paper.

Finally, we provide additional accuracy results, such as per-dataset error histograms and metrics for different sampling budgets and values of $k$, which refine and corroborate the empirical findings reported in Section 7.

Overall, these experiments offer a more complete picture of the performance, (parallel) scalability, and accuracy of hypergraphlet counting with \hypermotivo.

\clearpage

\subsection{Speedup and absolute build-up timings}

\Cref{fig:mem_time_and_abs_timings} shows, on the left, the speedup computed across all datasets. In addition to the case $k=5$, which is already reported in the paper (see Section 7.2.2), we also include results for $k \in \{3, 4\}$. As observed in the paper, the speedup is consistent across different values of k. Higher values of  k are computationally too demanding for the largest graphs. The right side of \Cref{fig:mem_time_and_abs_timings} instead reports the absolute running times. \Cref{fig:time_breakdown} reports absolute timing breakdown for each dataset and $k\in\{3,4,5\}$. For every dataset, the left group of bars reports the preprocessing and build-up times of \motivo, whereas the right group reports the preprocessing, build-up, and inclusion--exclusion times of \hypermotivo. For \motivo, preprocessing includes projecting the input hypergraph $H$ to its Gaifman graph $\Gaif(H)$. For \hypermotivo, preprocessing includes selecting $\alpha$, splitting $H$ into $H_{\le \alpha}$ and $H_{>\alpha}$, and projecting $\Gaif(H_{\le \alpha})$.

\vspace{-0.5em}
\begin{figure*}[!t]
  \centering
  \setlength{\abovecaptionskip}{4pt}
  \setlength{\belowcaptionskip}{0pt}

  \newcommand{\panelh}{0.17\textheight}
  \newcommand{\leftpanel}[1]{\includegraphics[height=\panelh,keepaspectratio]{#1}}
  \newcommand{\rightpanel}[1]{\includegraphics[height=\panelh,keepaspectratio]{#1}}

  \begin{tabular}{@{}c@{\hspace{0.03\textwidth}}c@{}}

    \leftpanel{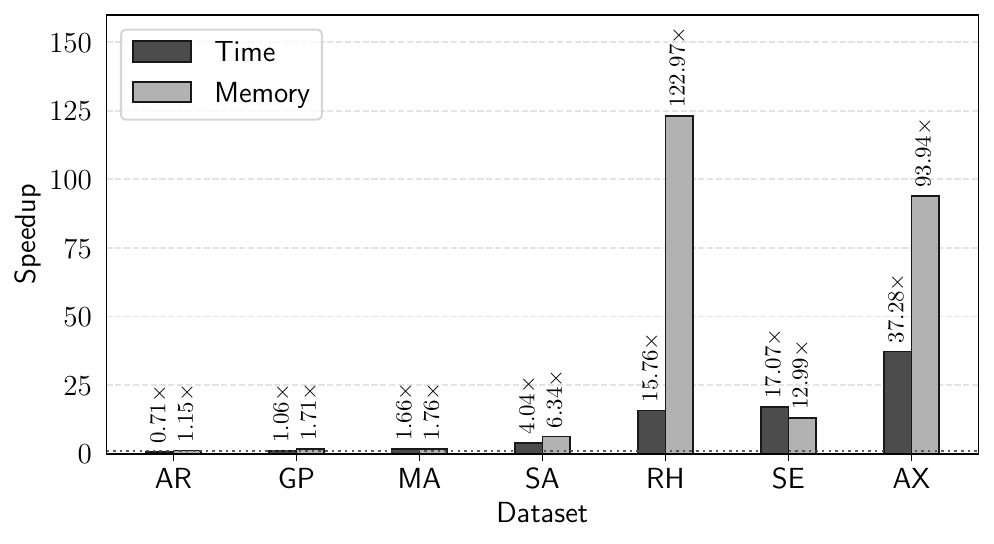} &
    \rightpanel{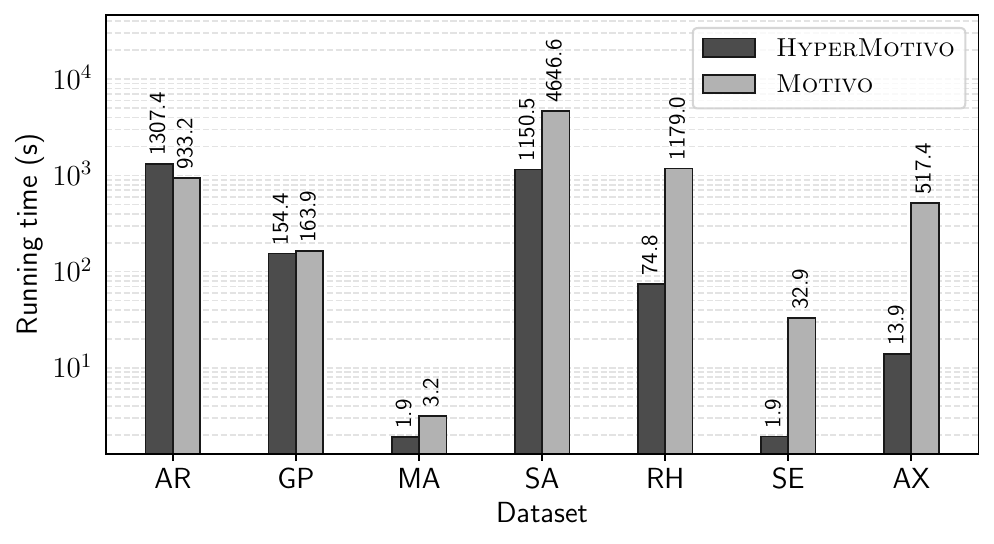} \\
    {\small Time vs.\ memory ($k=3$)} & {\small Absolute timings ($k=3$)} \\[0.55em]

    \leftpanel{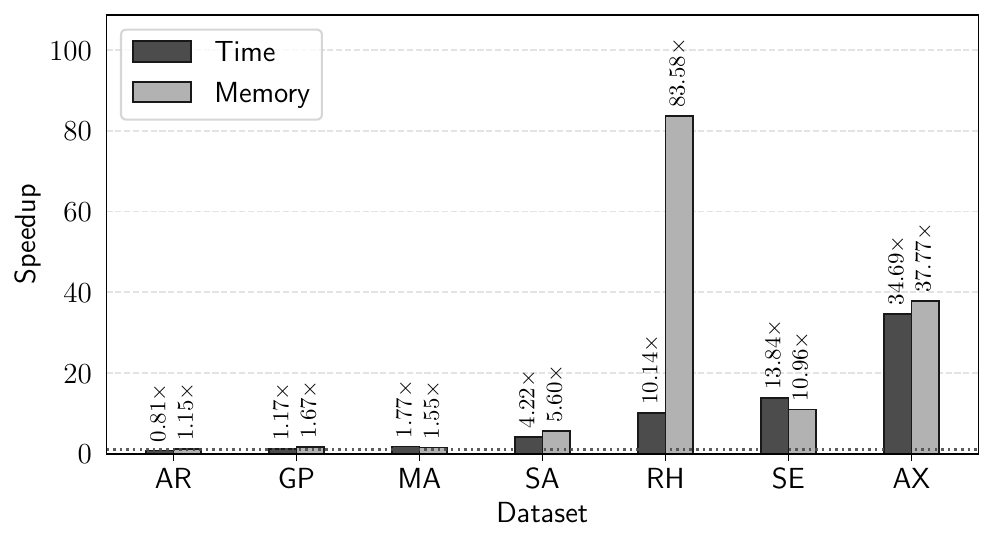} &
    \rightpanel{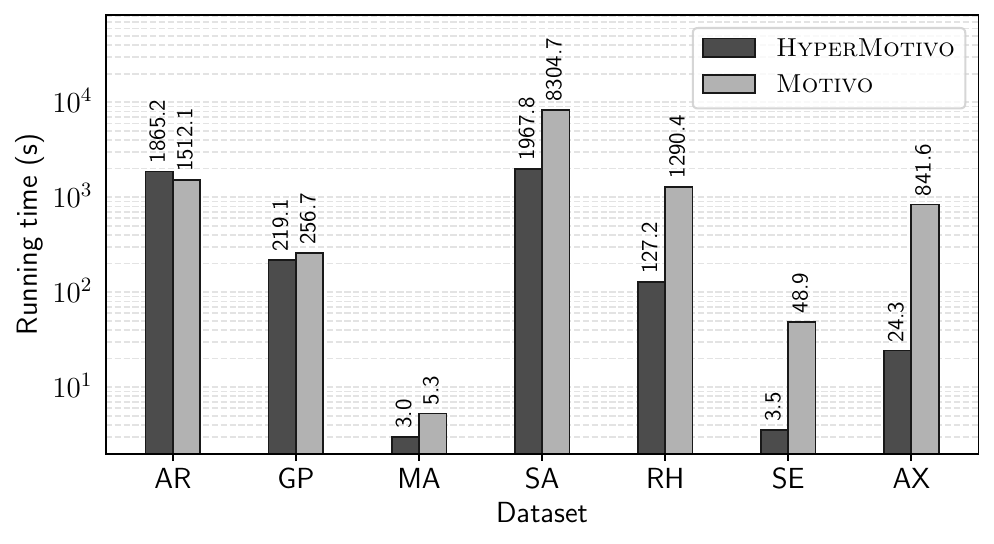} \\
    {\small Time vs.\ memory ($k=4$)} & {\small Absolute timings ($k=4$)} \\[0.55em]

    \leftpanel{figures/timings/K5_timings_mem.pdf} &
    \rightpanel{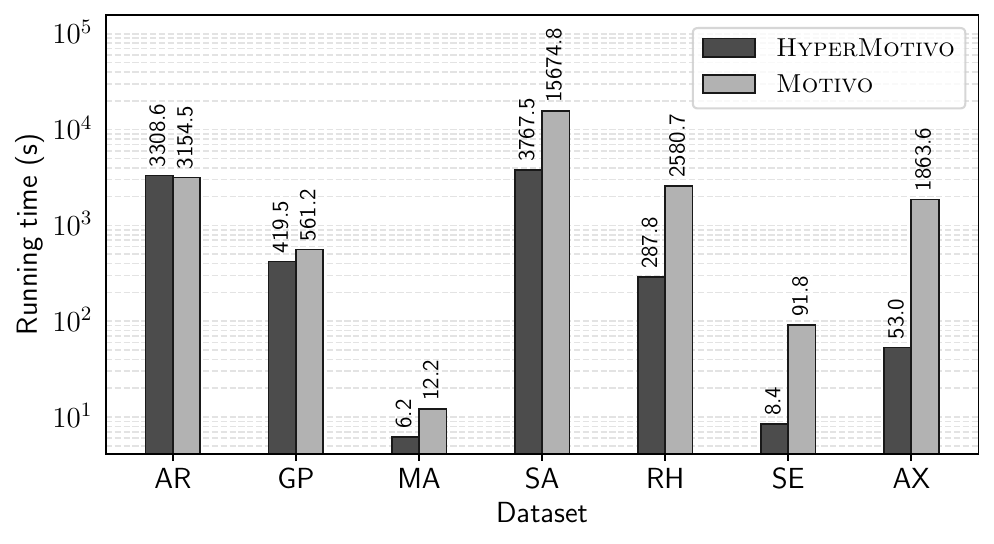} \\
    {\small Time vs.\ memory ($k=5$)} & {\small Absolute timings ($k=5$)}
  \end{tabular}

  \caption{Memory--time tradeoff and absolute build-up timings for $k \in \{3,4,5\}$}
  \label{fig:mem_time_and_abs_timings}
\end{figure*}

\begin{figure}
    \centering
     \begin{subfigure}[t]{\textwidth}
        \centering
        \includegraphics[width=0.7\linewidth]{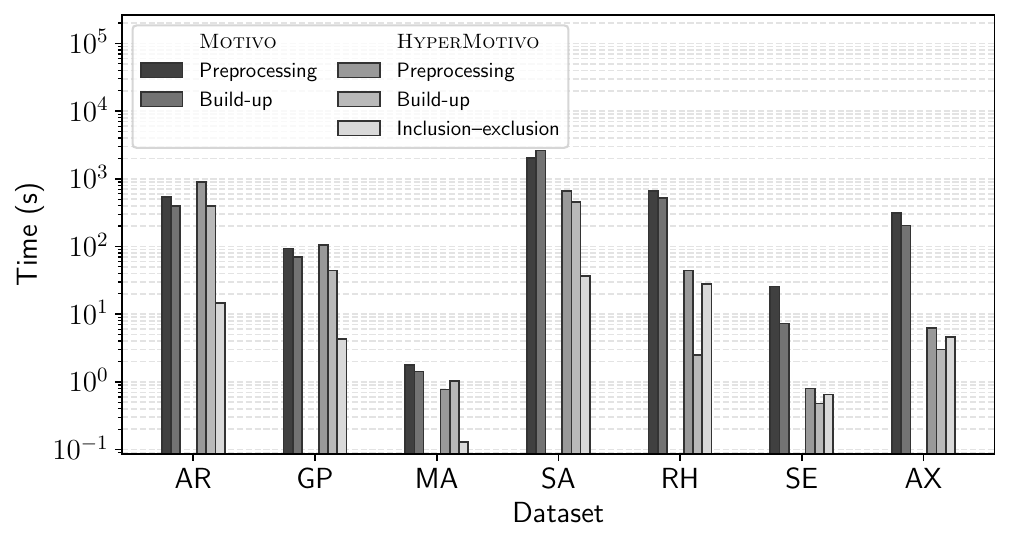}
        \label{fig:time_breakdown_k3}
      \end{subfigure}
      \begin{subfigure}[t]{\textwidth}
        \centering
        \includegraphics[width=0.7\linewidth]{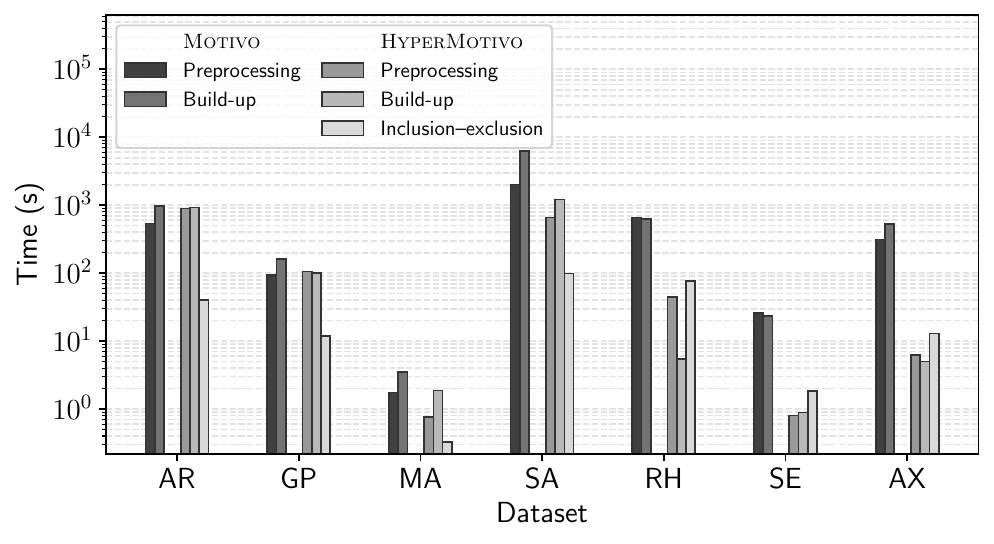}
        \label{fig:time_breakdown_k4}
      \end{subfigure}
      \begin{subfigure}[t]{\textwidth}
        \centering
        \includegraphics[width=0.7\linewidth]{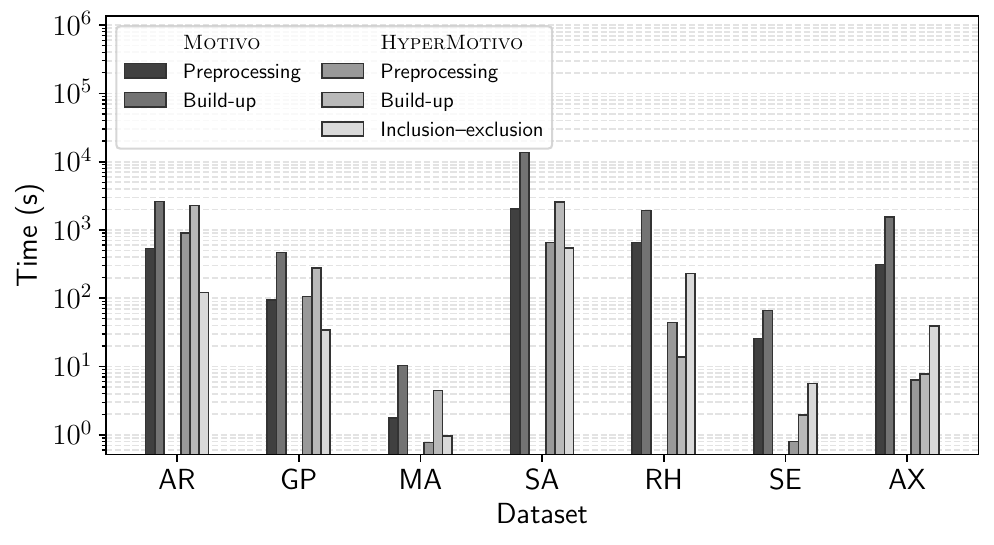}
        \label{fig:time_breakdown_k5}
      \end{subfigure}
    \caption{Timings breakdown for $k \in \{3,4,5\}.$}
    \label{fig:time_breakdown}
\end{figure}

\clearpage

\subsection{Hypergraphlet Frequencies}
\Cref{fig:hypergraphlet_frequencies} shows frequency distributions of $3,4$ and $5$-hypergraphlets with highest aggregate frequency across all hypergraphs. \Cref{fig:K4K5_frequencies} reports the frequencies of the 20 most frequent $k$-hypergraphlets for $k \in \{4,5\}$ across all datasets.

\begin{figure*}[t]
    \centering
     \begin{subfigure}[t]{0.49\textwidth}
        \centering
        \includegraphics[width=\linewidth]{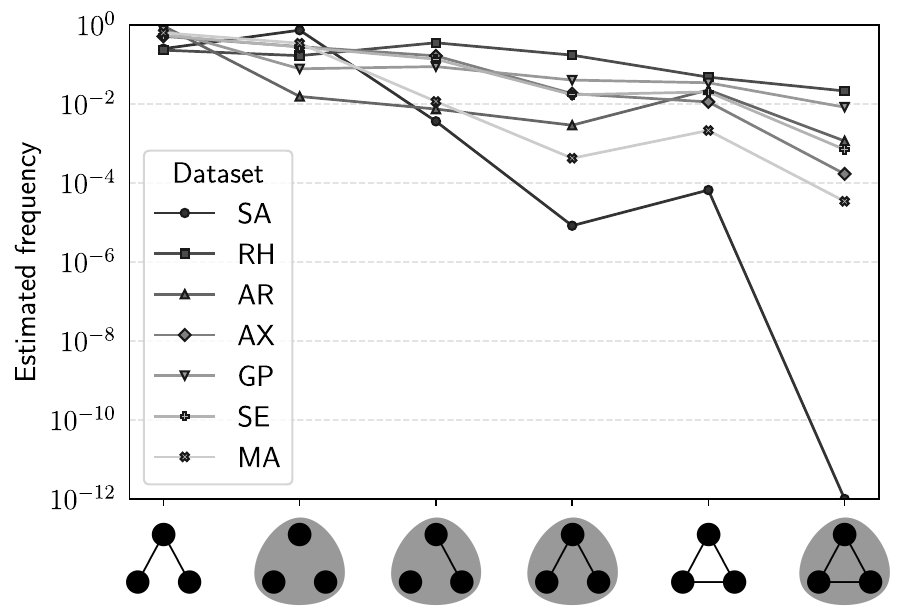}
        \caption{k = 3}
        \label{fig:hypergraphlet_frequencies_k3}
      \end{subfigure}
      \begin{subfigure}[t]{0.49\textwidth}
        \centering
        \includegraphics[width=\linewidth]{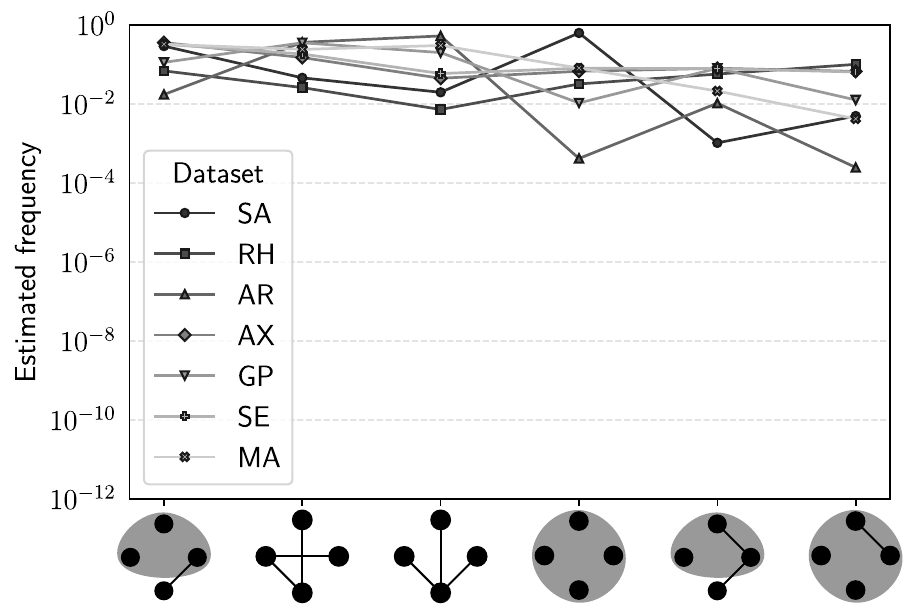}
        \caption{k = 4}
        \label{fig:hypergraphlet_frequencies_k4}
      \end{subfigure}
      \begin{subfigure}[t]{0.49\textwidth}
        \centering
        \includegraphics[width=\linewidth]{figures/frequencies/K5_hypergraphlet_frequencies_all_log.pdf}
        \caption{k = 5}
        \label{fig:hypergraphlet_frequencies_k5}
      \end{subfigure}
    \caption{Frequency distributions of $k$-hypergraphlets for $k \in \{3,4,5\}$.}
    \label{fig:hypergraphlet_frequencies}
\end{figure*}

\begin{figure*}[t]
  \centering

  \begin{minipage}[t]{0.49\textwidth}
    \centering
    \vspace{0pt}

    \includegraphics[width=\linewidth]{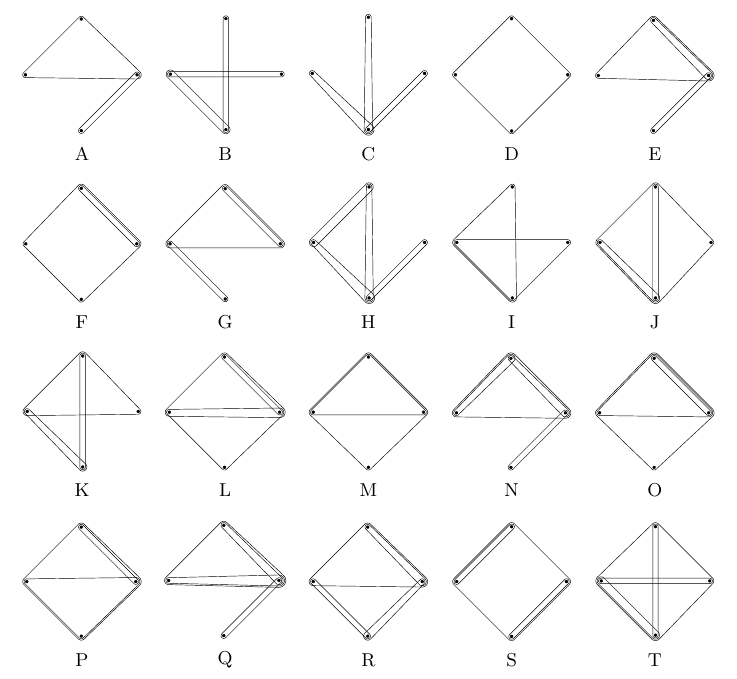}

    \vspace{0.6em}

    \resizebox{\linewidth}{!}{%
      \input{figures/frequencies/k4_table}
    }
  \end{minipage}\hfill
  \begin{minipage}[t]{0.49\textwidth}
    \centering
    \vspace{0pt}

    \includegraphics[width=\linewidth]{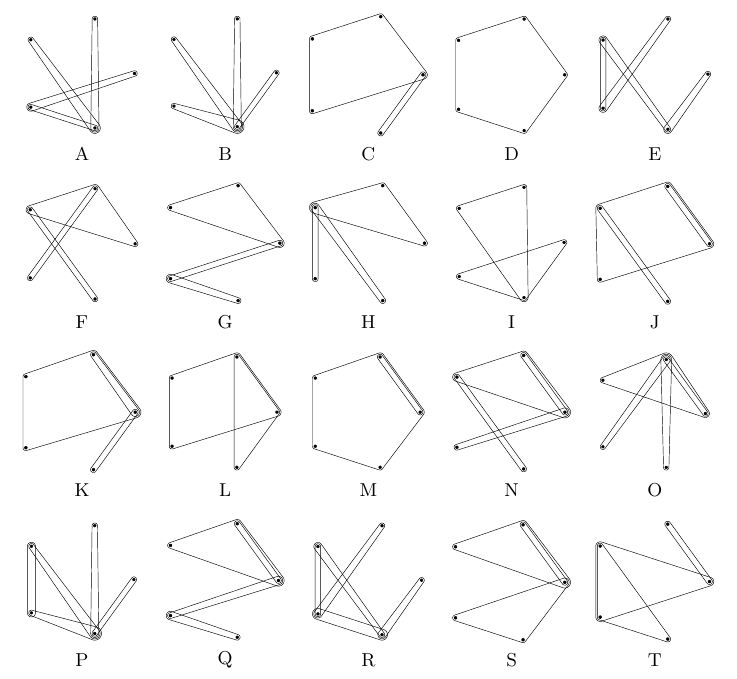}

    \vspace{0.6em}

    \resizebox{\linewidth}{!}{%
      \input{figures/frequencies/k5_table}
    }
  \end{minipage}

  \caption{First 20 $4$- and $5$-hypergraphlets with highest aggregate frequency across all datasets. Left: $K_4$. Right: $K_5$.}
  \label{fig:K4K5_frequencies}
\end{figure*}
\clearpage

\subsection{Parallel scalability}

Here we present parallel scalability experiments for the datasets
MA (\Cref{fig:threads_MA}), SE (\Cref{fig:threads_SE}), and AX (\Cref{fig:threads_AX}),
for $k \in \{3,\dots,8\}$.
We observe that scalability is consistent across all values of $k$ and across all datasets.
In some cases, when using 24 threads, a slight performance degradation is observed compared
to 16 threads, likely due to communication overhead.

\vspace{-0.5em}
\begin{figure*}[!t]
  \centering

  \setlength{\abovecaptionskip}{4pt}
  \setlength{\belowcaptionskip}{0pt}

  \newcommand{\scalplot}[1]{\includegraphics[height=0.175\textheight,keepaspectratio]{#1}}

  \begin{tabular}{@{}c@{\hspace{0.03\textwidth}}c@{}}

    \scalplot{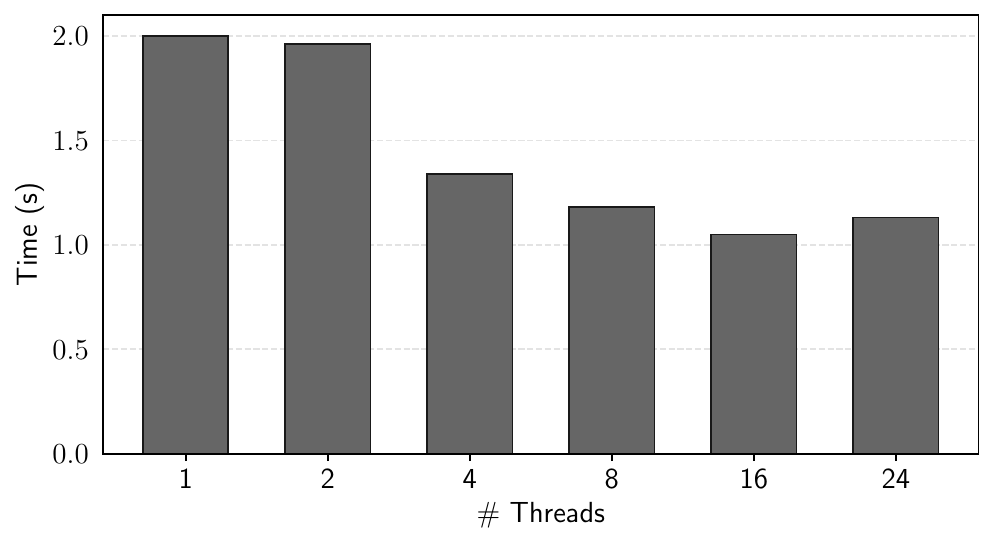} &
    \scalplot{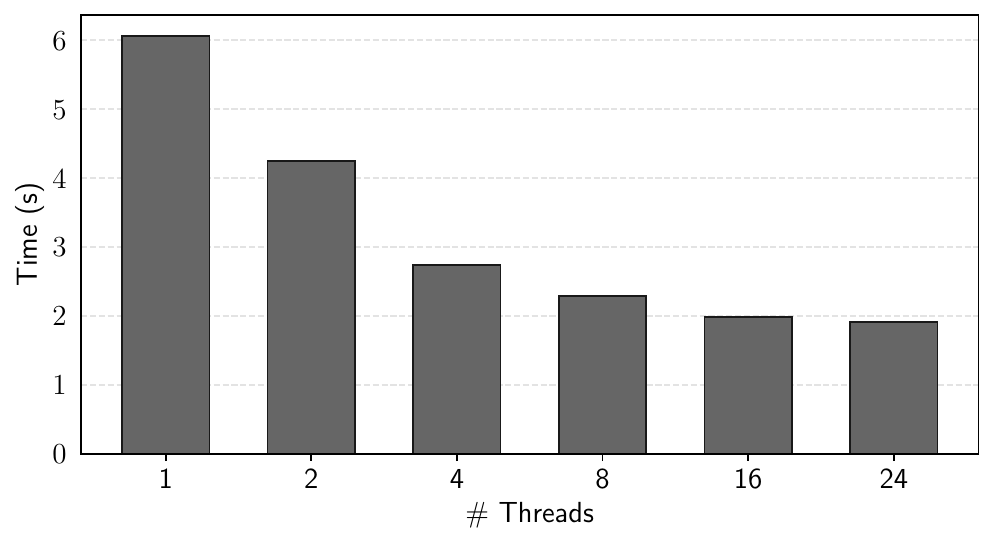} \\
    {\small $k=3$} & {\small $k=4$} \\[0.55em]

    \scalplot{figures/threads/math_algo_time_vs_threads_k5.pdf} &
    \scalplot{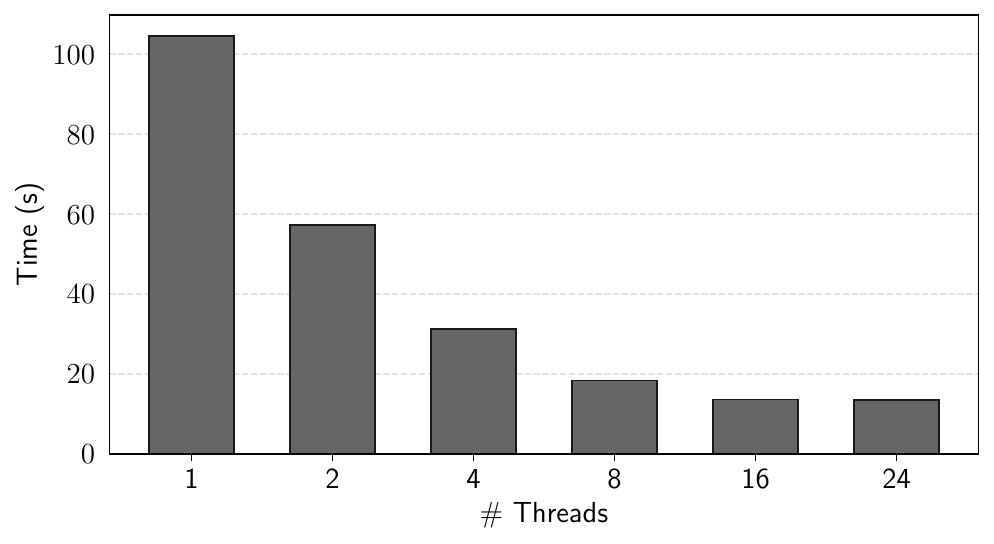} \\
    {\small $k=5$} & {\small $k=6$} \\[0.55em]

    \scalplot{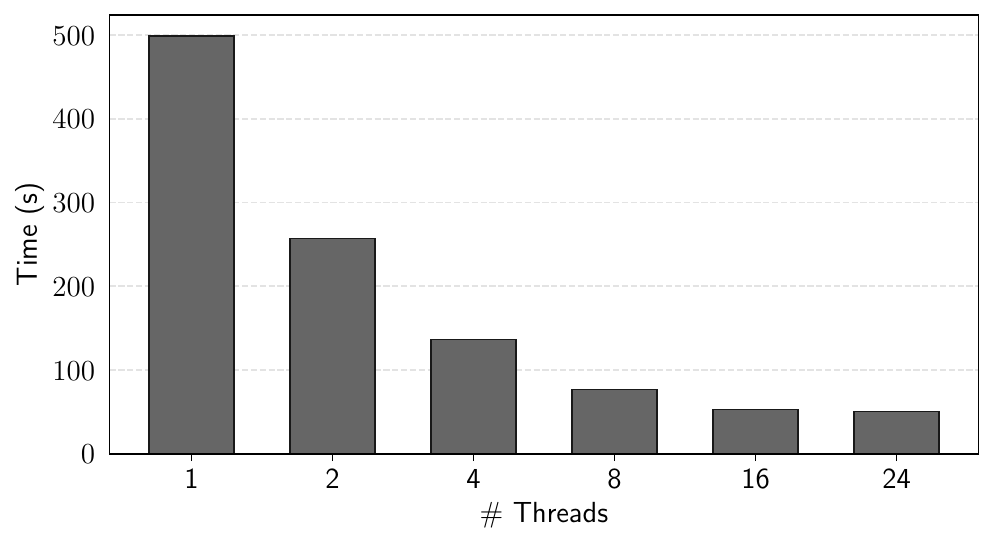} &
    \scalplot{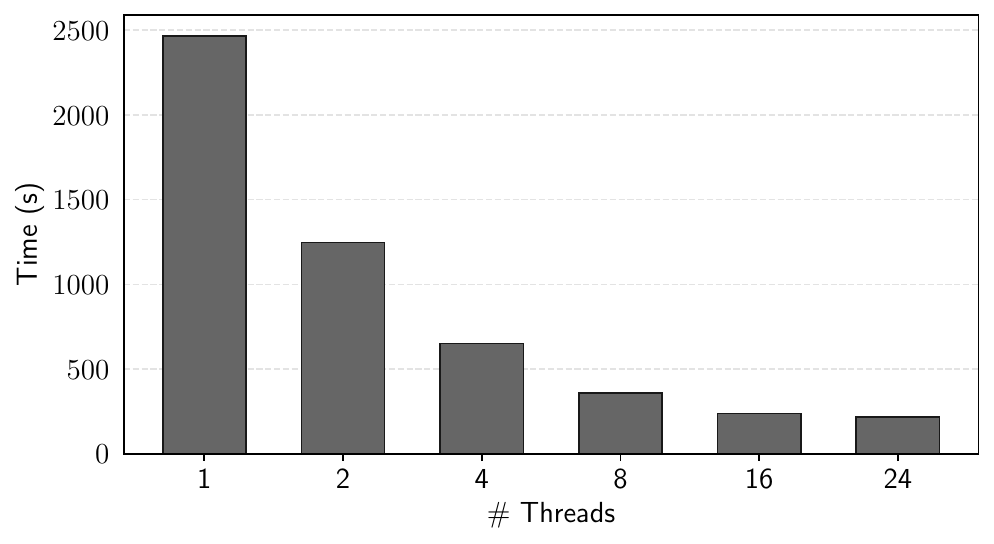} \\
    {\small $k=7$} & {\small $k=8$}

  \end{tabular}
  \caption{Running time of \hypermotivo's build-up phase on the MA dataset for $k \in \{3,\dots,8\}$.}
  \label{fig:threads_MA}
\end{figure*}
\vspace{-0.8em}

\begin{figure*}[p]
  \centering
  \includegraphics[width=0.48\textwidth]{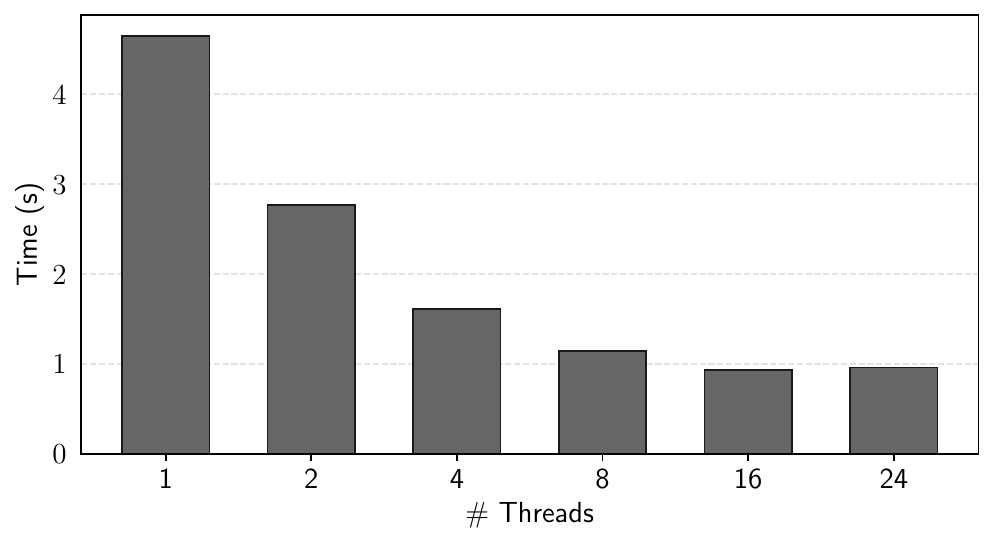}\hfill
  \includegraphics[width=0.48\textwidth]{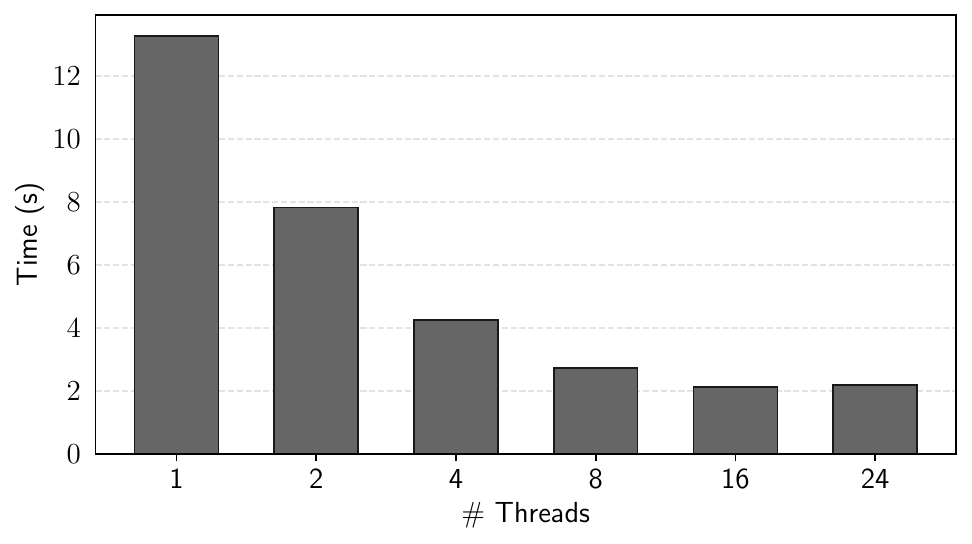}

  \par\vspace{-0.3em}
  \makebox[0.48\textwidth][c]{\small $k=3$}\hfill
  \makebox[0.48\textwidth][c]{\small $k=4$}

  \vspace{1.0em}

  \includegraphics[width=0.48\textwidth]{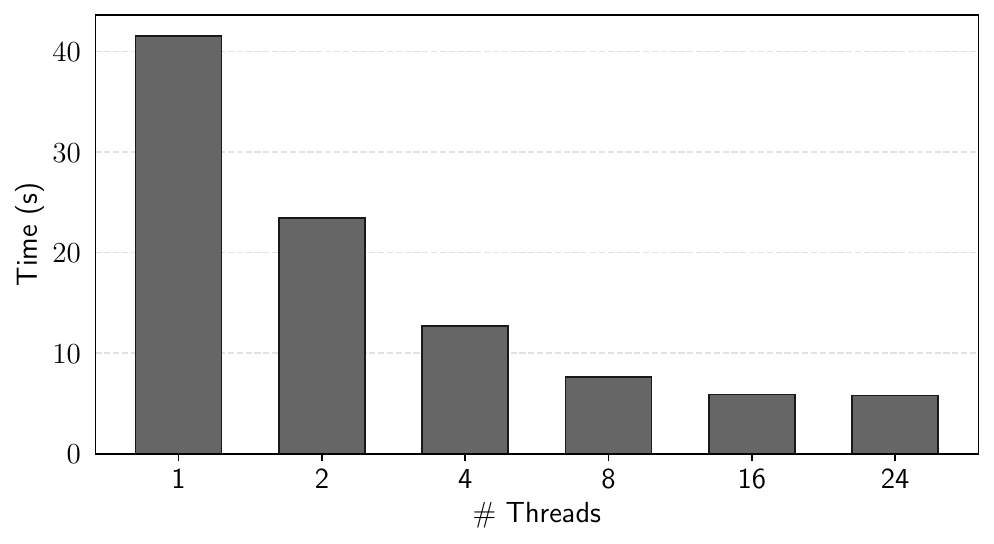}\hfill
  \includegraphics[width=0.48\textwidth]{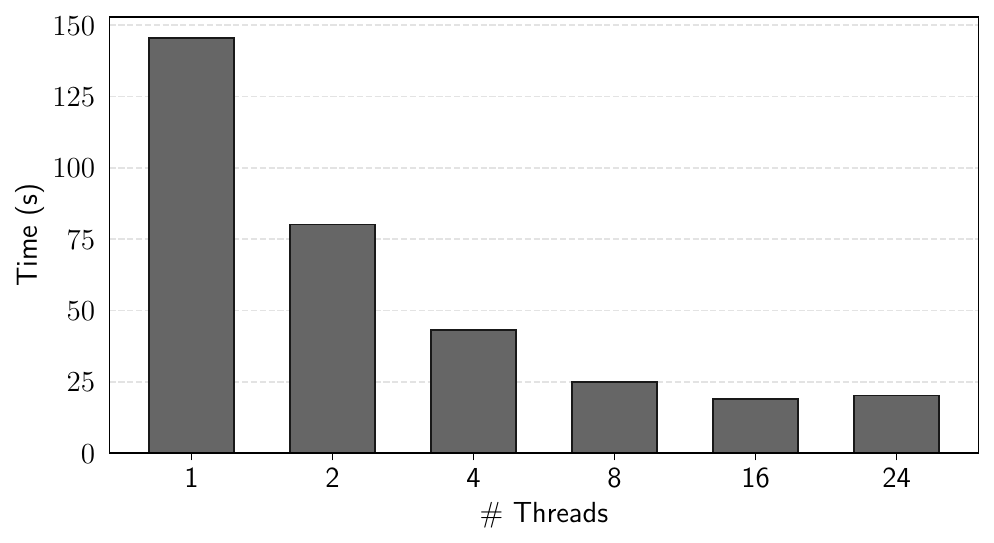}

  \par\vspace{-0.3em}
  \makebox[0.48\textwidth][c]{\small $k=5$}\hfill
  \makebox[0.48\textwidth][c]{\small $k=6$}

  \vspace{1.0em}

  \includegraphics[width=0.48\textwidth]{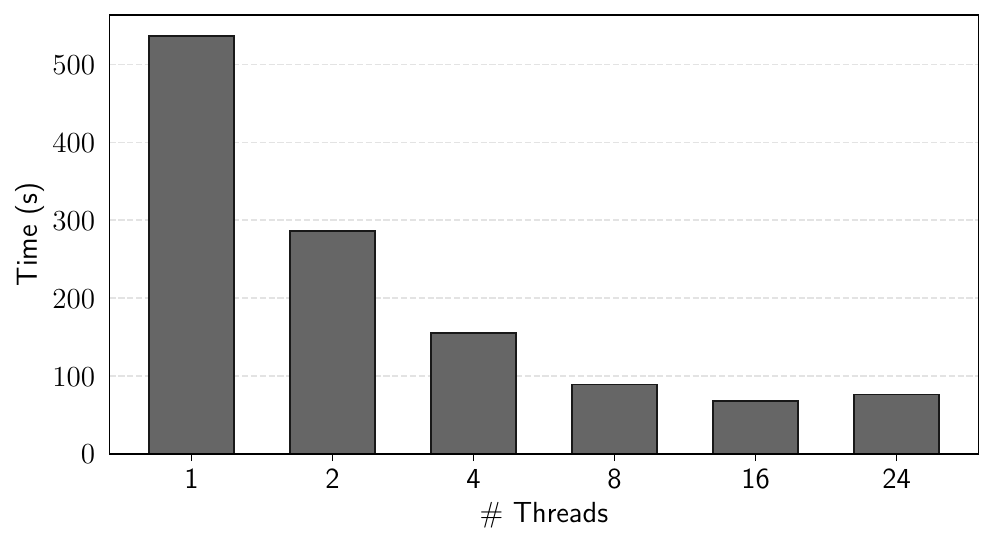}\hfill
  \includegraphics[width=0.48\textwidth]{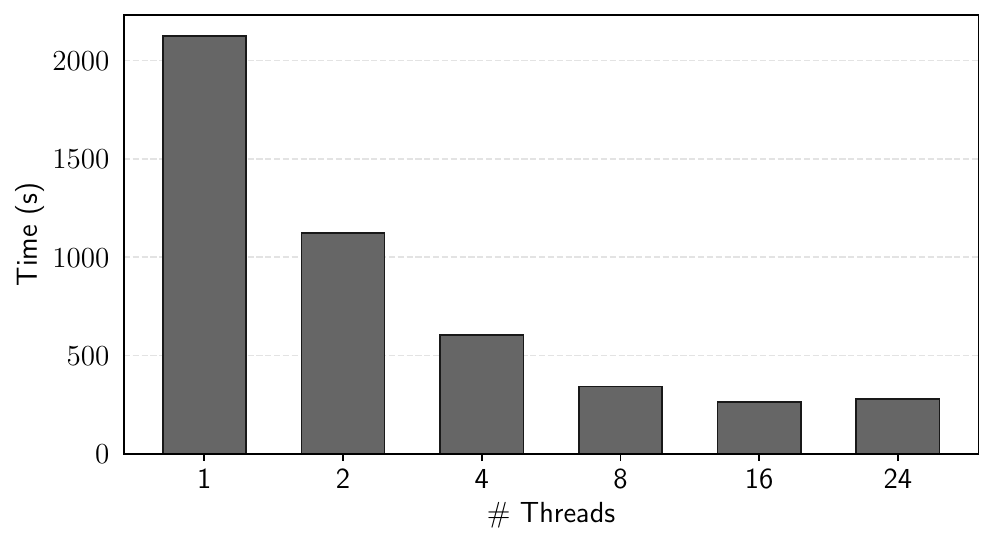}

  \par\vspace{-0.3em}
  \makebox[0.48\textwidth][c]{\small $k=7$}\hfill
  \makebox[0.48\textwidth][c]{\small $k=8$}

  \caption{Running time of \hypermotivo's build-up phase on the SE dataset for $k \in \{3,\dots,8\}$.}
  \label{fig:threads_SE}
\end{figure*}

\clearpage
\begin{figure*}[p]
  \centering

  \includegraphics[width=0.48\textwidth]{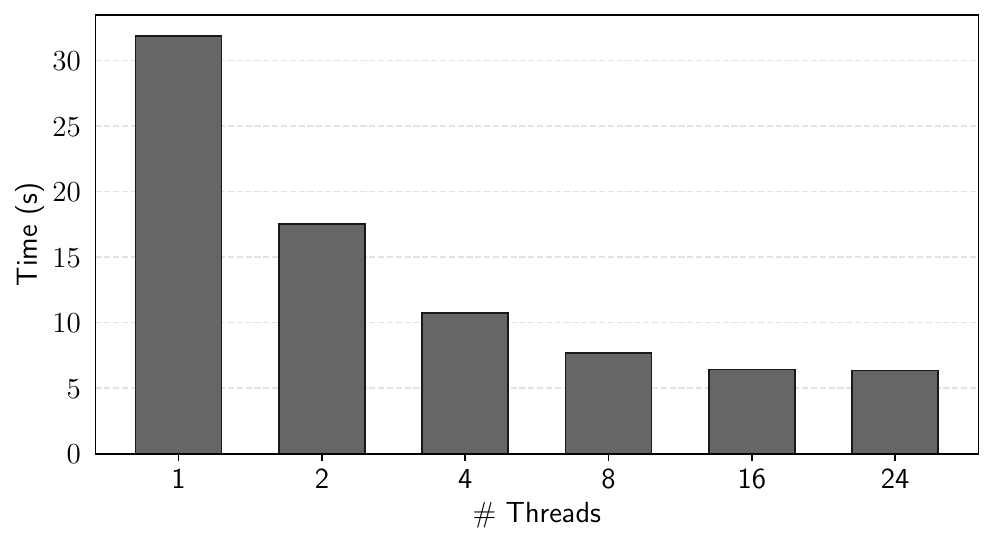}\hfill
  \includegraphics[width=0.48\textwidth]{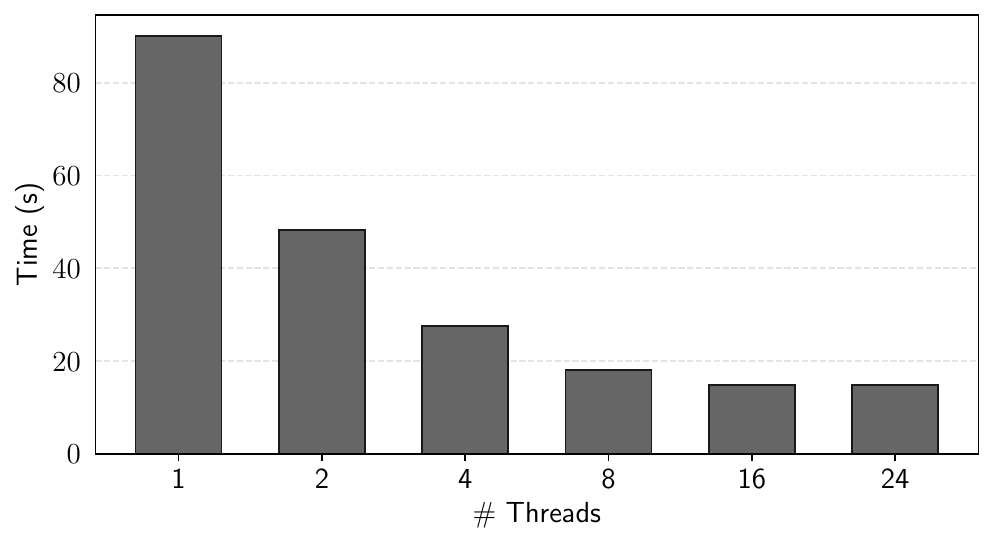}

  \par\vspace{-0.3em}
  \makebox[0.48\textwidth][c]{\small $k=3$}\hfill
  \makebox[0.48\textwidth][c]{\small $k=4$}

  \vspace{1.0em}

  \includegraphics[width=0.48\textwidth]{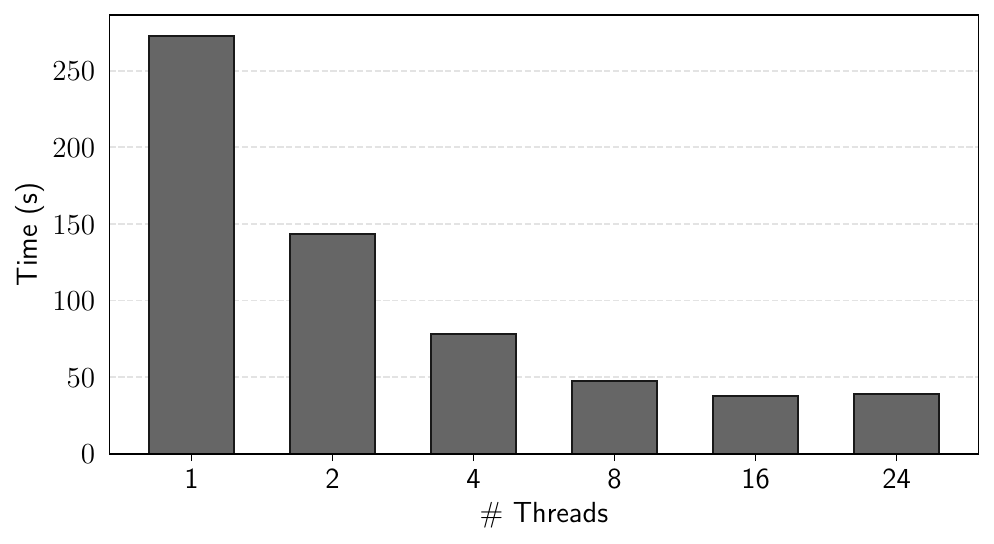}\hfill
  \includegraphics[width=0.48\textwidth]{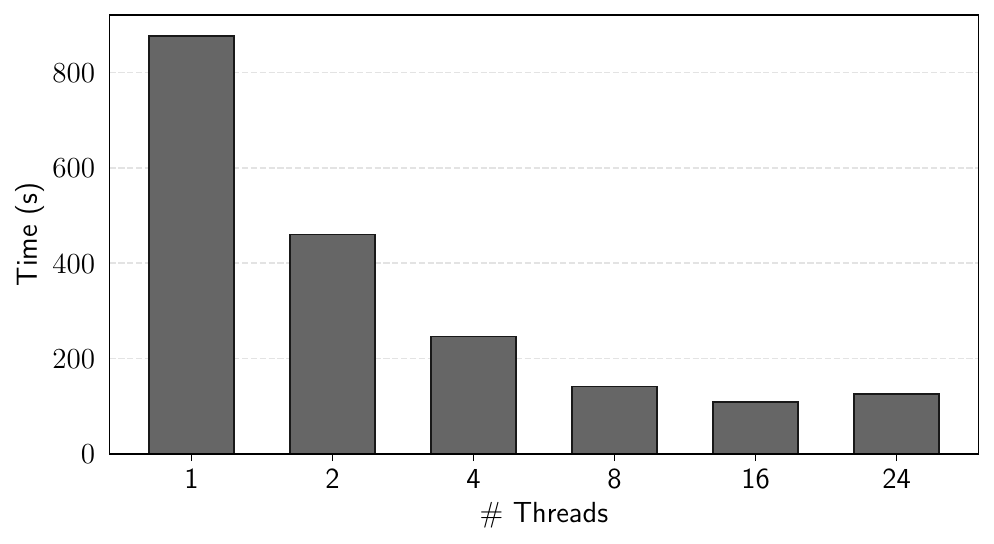}

  \par\vspace{-0.3em}
  \makebox[0.48\textwidth][c]{\small $k=5$}\hfill
  \makebox[0.48\textwidth][c]{\small $k=6$}

  \vspace{1.0em}

  \includegraphics[width=0.48\textwidth]{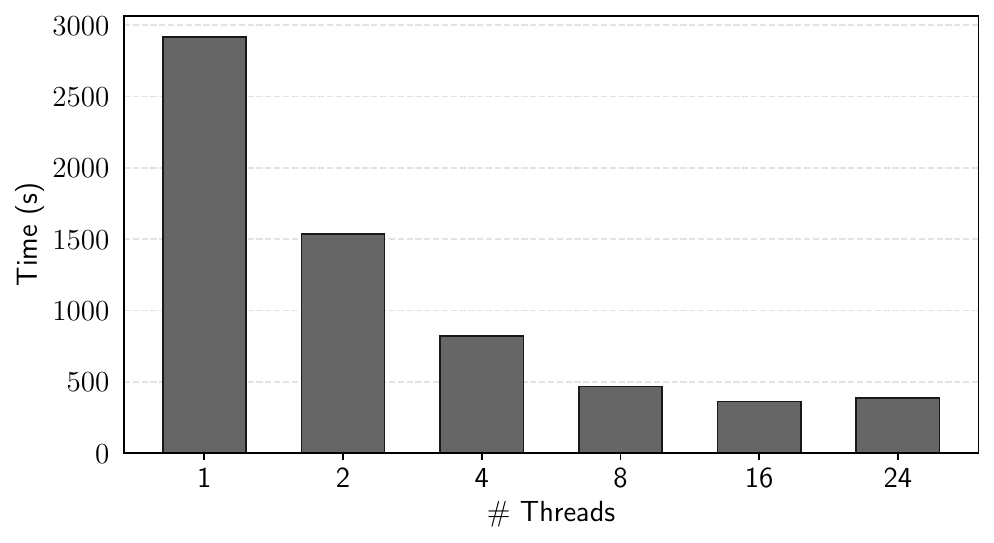}\hfill
  \includegraphics[width=0.48\textwidth]{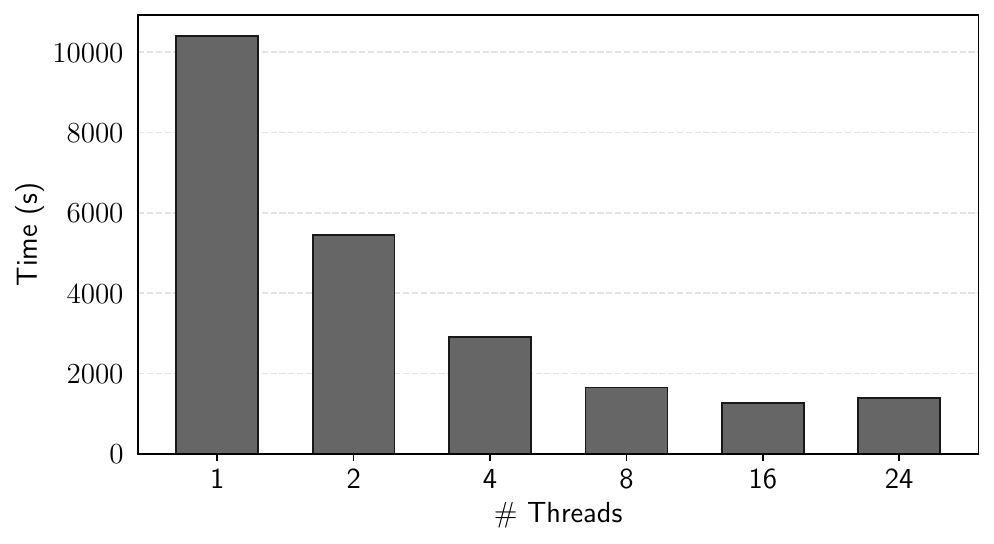}

  \par\vspace{-0.3em}
  \makebox[0.48\textwidth][c]{\small $k=7$}\hfill
  \makebox[0.48\textwidth][c]{\small $k=8$}

  \caption{Running time of \hypermotivo's build-up phase on the AX dataset for $k \in \{3,\dots,8\}$.}
  \label{fig:threads_AX}
\end{figure*}
\clearpage

\subsection{Accuracy}

We empirically assess the accuracy of \hypermotivo{} on small synthetic hypergraphs, following the same protocol as in the main paper. For each hypergraphlet isomorphism type $H$, we report the relative count error $\err_H = (\hat c_H - c_H)/c_H$, where $c_H$ is the exact induced count of $H$ in the input hypergraph and $\hat c_H$ is the estimate produced by \hypermotivo{} using $10^5$ samples. \Cref{fig:error_count} summarizes the resulting error distributions for $k\in\{3,4,5\}$ via histograms with bins of the form $[\err-0.25,\err+0.25)$.

For $k=5$, the plot is restricted to the 50 most frequent hypergraphlet types by exact frequency, to avoid that extremely rare types, whose estimates are inherently high-variance under sampling, dominate the tails and obscure the typical behavior. For $k=3$ and $k=4$, instead, we include all observed types, at most 6 and 20, respectively. Despite the smaller number of types, their frequency distribution is still highly skewed: a few motifs account for the vast majority of occurrences, while the remaining ones are very rare. Consequently, even for $k=3$, the overall error distribution is affected by these low-frequency types, which are more likely to be missed or to exhibit larger relative errors.

\begin{figure}
    \centering
     \begin{subfigure}[t]{0.49\textwidth}
        \centering
        \includegraphics[width=\linewidth]{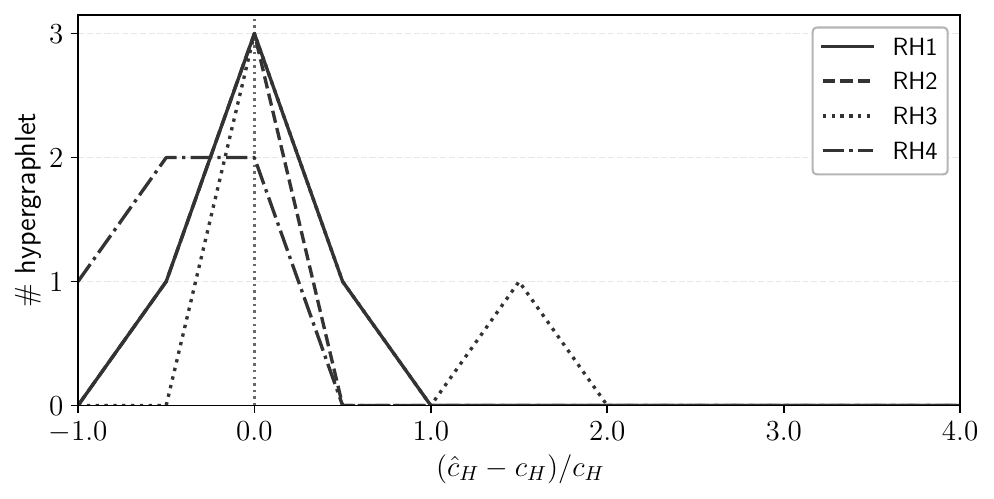}
        \caption{k=3}
        \label{fig:error_count_k3}
      \end{subfigure}
      \begin{subfigure}[t]{0.49\textwidth}
        \centering
        \includegraphics[width=\linewidth]{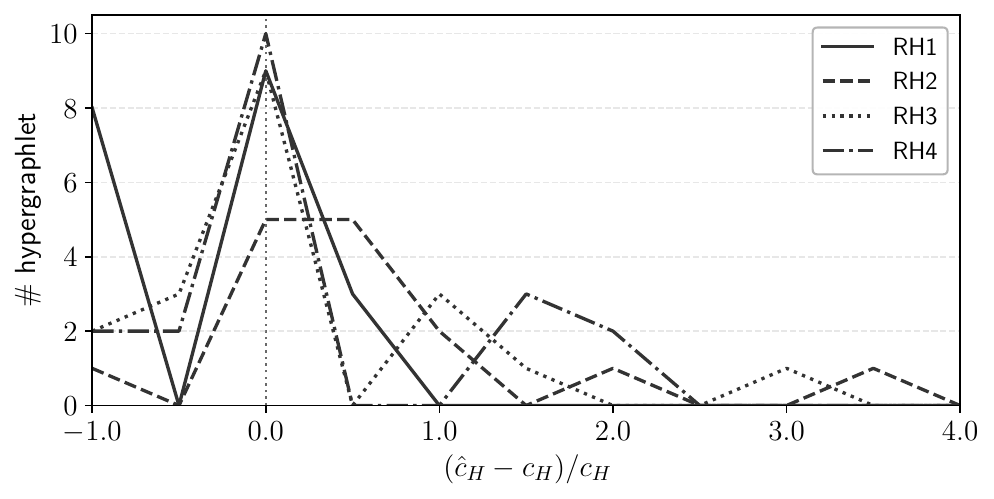}
        \caption{k=4}
        \label{fig:error_count_k3}
      \end{subfigure}
      \begin{subfigure}[t]{0.49\textwidth}
        \centering
        \includegraphics[width=\linewidth]{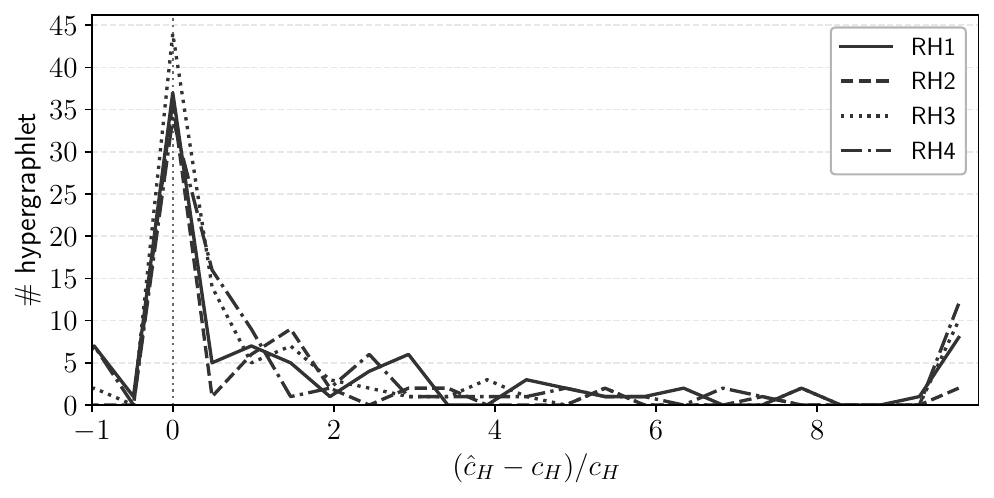}
        \caption{k=5}
        \label{fig:error_count_k3}
      \end{subfigure}
    \caption{$err_H$ distributions for $k \in \{3,4,5\}$ and $10^5$ samples on four synthetic datasets.}
    \label{fig:error_count}
\end{figure}

%% file: figures/frequencies/k4_table.tex
\begin{tabular}{lccccccc}
\hline
& AR & AX & GP & SE & MA & RH & SA \\
\hline
A & 0.0175 & 0.3605 & 0.1136 & 0.3281 & 0.3258 & 0.0692 & 0.2936 \\
B & 0.3649 & 0.1506 & 0.3564 & 0.1851 & 0.2424 & 0.0261 & 0.0461 \\
C & 0.5316 & 0.0450 & 0.2009 & 0.0592 & 0.3043 & 0.0073 & 0.0199 \\
D & 0.0004 & 0.0678 & 0.0107 & 0.0808 & 0.0814 & 0.0323 & 0.6324 \\
E & 0.0105 & 0.0806 & 0.0804 & 0.0790 & 0.0214 & 0.0575 & 0.0010 \\
F & 0.0002 & 0.0672 & 0.0127 & 0.0649 & 0.0042 & 0.1013 & 0.0050 \\
G & 0.0025 & 0.0515 & 0.0360 & 0.0461 & 0.0037 & 0.0287 & 0.0008 \\
H & 0.0543 & 0.0054 & 0.0487 & 0.0148 & 0.0082 & 0.0100 & 0.0001 \\
I & 0.0001 & 0.0515 & 0.0036 & 0.0280 & 0.0023 & 0.0300 & 0.0009 \\
J & 0.0001 & 0.0127 & 0.0060 & 0.0131 & 0.0002 & 0.0817 & 0.0000 \\
K & 0.0008 & 0.0230 & 0.0094 & 0.0288 & 0.0023 & 0.0298 & 0.0002 \\
L & 0.0001 & 0.0223 & 0.0052 & 0.0131 & 0.0002 & 0.0417 & 0.0000 \\
M & 0.0000 & 0.0173 & 0.0011 & 0.0061 & 0.0000 & 0.0442 & 0.0000 \\
N & 0.0029 & 0.0053 & 0.0297 & 0.0086 & 0.0006 & 0.0192 & 0.0000 \\
O & 0.0000 & 0.0049 & 0.0018 & 0.0029 & 0.0000 & 0.0426 & 0.0000 \\
P & 0.0000 & 0.0045 & 0.0010 & 0.0019 & 0.0000 & 0.0428 & 0.0000 \\
Q & 0.0029 & 0.0027 & 0.0204 & 0.0031 & 0.0010 & 0.0087 & 0.0000 \\
R & 0.0004 & 0.0025 & 0.0061 & 0.0064 & 0.0002 & 0.0202 & 0.0000 \\
S & 0.0000 & 0.0051 & 0.0015 & 0.0041 & 0.0000 & 0.0211 & 0.0000 \\
T & 0.0000 & 0.0008 & 0.0016 & 0.0012 & 0.0000 & 0.0266 & 0.0000 \\
\hline
\end{tabular}

%% file: figures/frequencies/k5_table.tex
\begin{tabular}{lccccccc}
\hline
& AR & AX & GP & SE & MA & RH & SA \\
\hline
A & 0.3549 & 0.0261 & 0.2148 & 0.0455 & 0.2222 & 0.0015 & 0.0109 \\
B & 0.3941 & 0.0030 & 0.0615 & 0.0040 & 0.2310 & 0.0000 & 0.0015 \\
C & 0.0004 & 0.1070 & 0.0166 & 0.1118 & 0.0667 & 0.0139 & 0.2747 \\
D & 0.0000 & 0.0144 & 0.0013 & 0.0197 & 0.0134 & 0.0057 & 0.5330 \\
E & 0.0908 & 0.0259 & 0.1360 & 0.0533 & 0.0584 & 0.0023 & 0.0095 \\
F & 0.0042 & 0.1055 & 0.0430 & 0.1071 & 0.0715 & 0.0045 & 0.0389 \\
G & 0.0052 & 0.0884 & 0.0425 & 0.0888 & 0.0872 & 0.0048 & 0.0430 \\
H & 0.0080 & 0.0447 & 0.0337 & 0.0465 & 0.1218 & 0.0025 & 0.0206 \\
I & 0.0001 & 0.0640 & 0.0029 & 0.0366 & 0.0303 & 0.0028 & 0.0541 \\
J & 0.0001 & 0.0418 & 0.0083 & 0.0402 & 0.0016 & 0.0159 & 0.0010 \\
K & 0.0003 & 0.0315 & 0.0093 & 0.0328 & 0.0043 & 0.0158 & 0.0016 \\
L & 0.0000 & 0.0419 & 0.0010 & 0.0210 & 0.0013 & 0.0131 & 0.0021 \\
M & 0.0000 & 0.0207 & 0.0019 & 0.0241 & 0.0010 & 0.0244 & 0.0069 \\
N & 0.0022 & 0.0185 & 0.0274 & 0.0205 & 0.0042 & 0.0033 & 0.0000 \\
O & 0.0074 & 0.0060 & 0.0276 & 0.0078 & 0.0239 & 0.0013 & 0.0002 \\
P & 0.0452 & 0.0004 & 0.0170 & 0.0012 & 0.0097 & 0.0000 & 0.0000 \\
Q & 0.0026 & 0.0110 & 0.0274 & 0.0198 & 0.0055 & 0.0045 & 0.0002 \\
R & 0.0305 & 0.0004 & 0.0181 & 0.0028 & 0.0062 & 0.0000 & 0.0000 \\
S & 0.0001 & 0.0266 & 0.0041 & 0.0172 & 0.0028 & 0.0032 & 0.0004 \\
T & 0.0000 & 0.0294 & 0.0026 & 0.0145 & 0.0011 & 0.0041 & 0.0004 \\
\hline
\end{tabular}